\documentclass[aps,prx,reprint]{revtex4-2}

\usepackage{graphicx}
\usepackage{amsfonts,amssymb,amsmath}
\usepackage{amsthm}
\usepackage{hyperref}
\hypersetup{
  pdftitle={Protecting Astronomical Interferometry through Quantum-Memory Scrambling},
  pdfauthor={Jianqi Sheng},
  bookmarksnumbered=true,
  bookmarksopen=true,
  bookmarksdepth=3
}

\theoremstyle{plain}
\newtheorem{theorem}{Theorem}
\newtheorem{lemma}{Lemma}
\newtheorem{corollary}{Corollary}
\newtheorem{remark}{Remark}

\begin{document}

\title{Protecting Astronomical Interferometry through Quantum-Memory Scrambling}

\author{Jianqi Sheng}
\affiliation{Department of Physics, City University of Hong Kong, Hong Kong}

\begin{abstract}
Preserving the complex visibility of coherently captured starlight is essential
for quantum-assisted long-baseline interferometry, because this nonlocal
coherence carries the spatial information needed to form astronomical images.
Yet finite memory lifetime and imperfect retrieval inevitably produce storage
failures; when a failed cell is heralded, its erased subsystem can leak
which-node information to the environment and dephase the stored coherence.
Strict finite-dimensional (\(U(1)\)) covariance further forbids uniform exact
correction of all single-memory erasures. We address this combined physical
and symmetry-imposed limit using local number-conserving quantum scrambling,
parameter-independent pattern-conditioned recovery, and a fixed
Gottesman--Jennewein--Croke receiver. We prove a
channel-to-Fisher-information stability theorem that converts approximate
logical recovery into an operational guarantee on receiver-accessible
information over compact visibility regions. All-pattern finite-size
simulations show that scrambling redistributes erasure risk and suppresses
high-leakage events, while a separate equal-budget comparison identifies
a shallow design that outperforms the tested deeper and charge-sector Haar-random benchmarks
at the prespecified operating point. A separate compilation resolves every
nontrivial recovery branch into abstract nearest-neighbor number-conserving
gates. Although the present low-rate design does not yet improve
fixed-total-memory throughput, it establishes a blueprint for converting spare
memory capacity into protection of astronomical coherence, opening a path
toward higher-rate, erasure-resilient quantum telescope architectures.
\end{abstract}

\maketitle

\section{Introduction}

Long-baseline interferometry converts the separation between optical
receivers into angular resolution.  For a spatially incoherent source, the
van Cittert--Zernike theorem identifies the complex mutual coherence measured
on a baseline with a normalized Fourier component of the source intensity.
Extending the baseline therefore accesses finer spatial structure without
enlarging any individual aperture.  At optical frequencies, however, the
information-bearing field is typically weak, with far below one photon per
spatiotemporal mode.  Direct beam combination preserves the corresponding
single-photon coherence, but requires the collected fields to be transported
through a low-loss, phase-stable optical channel.  Independent local
measurements avoid that transport but, in the weak-source regime, cannot in
general attain the leading-order information available to a nonlocal
measurement~\cite{Tsang2011,Gottesman2012}.  The same property that makes a
long baseline valuable thus makes an optical array difficult to scale.

Quantum networks offer a different way to realize the required nonlocal
measurement.  In the Gottesman--Jennewein--Croke (GJC) protocol, a
pre-distributed single-photon-entangled state acts as a nonlocal phase
reference, allowing the astronomical modes to be interfered only with local
ancillas rather than transported across the baseline~\cite{Gottesman2012}.
Subsequent proposals reduced the entanglement cost by coherently recording and
compressing photon-arrival information in distributed memories~\cite{Khabiboulline2019PRL,Khabiboulline2019PRA}, protected captured
starlight with conventional quantum error-correcting codes~\cite{Huang2022},
or avoided long-lived signal memories through larger linear-optical resource
states~\cite{Marchese2023}.  Continuous-variable teleportation provides a
complementary route beyond the vacuum--single-photon approximation~\cite{Wang2025CV}, while spatial-mode sorting and entanglement-assisted mode
mixing extend the objective from phase scanning to quantum-limited
multimode imaging~\cite{Padilla2026PRA,Padilla2026PRL}.  These architectures
also make clear that performance must be assessed under locality,
the photon-number superselection rule (SSR), which prevents local access
to coherence between different photon-number sectors unless a phase reference
is supplied, and an explicit measurement model; unconstrained quantum Fisher
information (QFI) alone is not an operational
comparison~\cite{Zhang2025}.

The underlying ingredients have now entered experiments.  A path-entangled
single photon has been used as a shared reference to reconstruct a
pseudothermal source on a laboratory baseline~\cite{Brown2023}.  Networked
silicon-vacancy (SiV) centers have subsequently combined event-ready remote
entanglement, signal-photon storage, mode erasure, and nonlocal heralding over
a $1.55\,\mathrm{km}$ fiber separation~\cite{Stas2026}.  In a distinct
implementation, cold-atom memories stored the ancillary Fock-state
entanglement that was later retrieved for GJC interference with a thermal
field, with entanglement distributed over a $20\,\mathrm{km}$ fiber link and
memory-assisted compensation of an equivalent $1.5\,\mathrm{km}$ geometric
delay~\cite{Wang2026}.  These advances expose two different roles of a quantum
memory: it may bank a replaceable ancillary reference, or it may carry the
unknown astronomical coherence after the signal has interacted with the
network.  Only in the latter case do subsequent memory errors act directly on
the parameter-bearing state.  For this signal-storage role, designed quantum
error correction (QEC) has already shown that coherently captured starlight can
be protected against specified storage and processing noise, including
known-location erasures, and can exhibit large-code threshold behavior~\cite{Huang2022}.  This established signal-QEC result motivates a
complementary finite-block question: what protection can local scrambling
provide when the encoder and recovery are finite, local,
\(U(1)\)-covariant, and number conserving, and when performance is judged
only after an explicit GJC receiver?  Here covariance means commutation with
local photon-number phase rotations.  A flagged erasure is a known-location
loss event: the classical flag identifies the erased memory cells, while the
lost subsystem remains inaccessible.  Recovery may depend on the heralded
erasure pattern but not on the unknown visibility.  We therefore neither treat
flagged erasure as a new noise model nor seek to reproduce the large-code
threshold analysis of Ref.~\cite{Huang2022}.  Instead, we isolate the effect
of local scrambling under locality, photon-number symmetry, and fixed readout,
while distinguishing conditional retention of one logical mode from a
conservative memory-only \(1/n\) throughput benchmark.

Motivated by scrambling-based metrological protection~\cite{Wysocki2026}, we
ask whether local number-conserving scrambling---the unitary spreading of
logical occupation across memory cells---can reduce the vulnerability of
stored complex visibility to location-dependent loss, and whether a
parameter-independent decoder can return the surviving information to a fixed
GJC receiver.  Scrambling diagnostics or entanglement growth do not answer
this question by themselves.  Nor does a large post-erasure QFI guarantee
either a parameter-independent decoder or information accessible through
local optical measurements.  The required chain has three logically distinct
links: retention of the two-parameter state family, physical recovery of that
family, and extraction of its information by the prescribed measurement.
Establishing all three is especially nontrivial when every flagged erasure
pattern must be included in the performance average.

The covariance constraint is not a blank-slate coding problem.  Exact
finite-dimensional quantum error correction with a nontrivial continuous
transversal symmetry is obstructed, and approximate covariant correction
obeys quantitative accuracy--size tradeoffs~\cite{Eastin2009,Faist2020}.  Moreover, random $U(1)$-covariant codes
generated by charge-conserving unitaries are already known: in the asymptotic
regimes analyzed in Ref.~\cite{Kong2022}, their average- and worst-case
purified-distance errors against erasure scale as $O(n^{-1})$.  That result
sets an asymptotic benchmark but does not supply the local finite-depth
encoder, explicit decoder, and fixed-receiver assessment required here.
Our contribution is the finite-size operational chain from a local
number-conserving encoder, through a parameter-independent recovery, to the
classical Fisher information (CFI) of a fixed GJC receiver, together with an all-pattern calculation and
explicit gate-set compilation.  We do not claim the introduction of random
covariant coding, an exact covariant erasure code, or an asymptotic capacity
theorem.

We establish the channel-to-CFI link analytically and evaluate the complete
operational chain directly for finite instances of the architecture in
Fig.~\ref{fig:architecture}.  First, we derive
the stored two-parameter signal family and the CFI of its fixed GJC readout.
We then place the exact-QEC benchmark beside its relevant covariance
obstruction: exact product recovery restores the complete state family and
all subsequent GJC probabilities, but a finite-dimensional strictly
$U(1)$-covariant encoder of the logical occupancy cannot correct every
single-memory erasure exactly.  The main analytical result therefore concerns
the approximate-recovery regime.  We prove that the half-diamond distance between the effective logical
channel and the identity---a worst-case channel-error measure that includes an
arbitrary purifying reference---uniformly controls the operator-norm deviation
of the fixed-receiver CFI on
$\lvert g\rvert\leq g_{\max}<1$, including general bias and
quadrature-mixing errors that are not described by one visibility-retention
factor.  To delimit that statement, we also derive an unrestricted
fixed-pattern Haar decoupling estimate, show that one ideal shared reference
photon retains exactly one half of the single-photon QFI after local
twirling, and prove a causal-depth necessary condition for uniform
local-erasure protection.  These results do not prove finite-depth covariant
achievability; the pattern-averaged and high-probability worst-pattern
decoupling targets are stated separately as open questions.

The finite-size calculations then determine what local scrambling, explicit
recovery, and a fixed receiver contribute separately.  Number-conserving
scrambling suppresses the high-leakage tail over erasure locations and
improves lower-tail recoverability, although mean retained information is not
monotone in depth.  Task-weighted decoders and fixed phase designs are tested
on held-out instances, while the final all-pattern calculation retains every
flagged event.  An equal-statistical-budget screen identifies the
depth-\(2\) nearest-neighbor brickwork encoder, built from alternating
layers of disjoint two-memory gates, as the low-cost candidate.  A frozen
follow-up then places the encoders in one abstract nearest-neighbor \(U(1)\)
gate alphabet with a depth-\(7\) cap.  It compiles all 60 frozen
single-node targets drawn from the invariant Haar distribution within
each fixed-charge sector (hereafter charge-Haar) and one symmetric
fixed-excitation (Dicke) isometry to
the stated numerical tolerances, then recomputes the follow-up control
recoveries and flagged events from the actual circuit codewords, and at \(p=0.30\) supports the
depth-\(2\) candidate against the sampled depth-\(7\) brickwork and
charge-Haar controls and the fixed Dicke comparator.  The reported
simultaneous 95\% intervals are conditional on that fixed comparator.  This
is an encoder-only comparison of evaluated circuits; depth \(7\) is a
feasible synthesis upper bound, and neither
recovery nor calibrated-platform resources are equalized.  For a separate
frozen batch of thirty separately generated pairs of \(n=5\), depth-\(2\)
encoders, all nontrivial local recovery branches are compiled into
nearest-neighbor \(U(1)\)-conserving circuits and verified.  The frozen simulator rerun is a deterministic computational reproduction on those same 30 complete
encoder-pair/compilation realizations, followed by a data-informed, post hoc
reanalysis of the inherited synthetic output-coherence attenuation and
partial-timing model.  A two-sided realization-cluster bootstrap uses studentized statistics
(normalized by cluster standard errors) and the maximum absolute statistic
over all 360 cells (the max-\(t\) correction) to identify
191\ cells with positive simultaneous 95\% lower
endpoints.  These nominal finite-sample approximate intervals apply only to
the evaluated discrete model grid.  In a separate 360-cell family, the
memory-only \(1/n\) throughput calculation has
0\ cells with positive simultaneous lower endpoints
and does not outperform the tested parallel bare-memory benchmark.  We make no asymptotic threshold,
nonzero-rate coding, calibrated-platform, or fixed-total-resource advantage
claim.

\paragraph*{Notation and schematic conventions.}
The node index is \(j\in\{A,B\}\), and \(b\) denotes the telescope baseline.
The collected astronomical modes are \(s_j\), with pre-storage state
\(\rho_s(g)\), where \(g=g_R+i g_I\) is the scalar complex visibility and
\(\boldsymbol g=(g_R,g_I)^{\mathsf T}\) is its real two-parameter
representation.  At node \(j\), \(M_j\) is the interface memory carrying the
logical occupancy qubit \(L_j\), \(P_j\) is the \(n_j\)-cell physical memory
block, and \(V_j:\mathcal H_{L_j}\to\mathcal H_{P_j}\) is the local encoding
isometry.  The two-node stored state is \(\rho_M(g)\).  A flagged erasure set
\(E_j\subseteq P_j\) identifies the lost cells and
\(E=(E_A,E_B)\) denotes the complete pattern.  The encoded-erasure channel is
\(\mathcal N_{E_A,E_B}\), the pattern-conditioned local recovery is
\(\mathcal R_j^{E_j}\), and \(M'_j\) is its logical output.  For fixed \(E\),
the recovered two-node state is
\(\widetilde\rho_{M,E}(g)=(\mathcal R_A^{E_A}\otimes
\mathcal R_B^{E_B})\mathcal N_{E_A,E_B}[\rho_M(g)]\); the schematic suppresses
the subscript \(E\) in \(\widetilde\rho_M(g)\).  The corresponding local
logical channel is \(\Lambda_{j,E_j}\); when the node label is suppressed,
\(\mathcal R_E\) and \(\Lambda_E\) denote the recovery and logical channel for
one fixed branch.  We use \(H\) for quantum Fisher information and \(F\) for
the classical Fisher information of a specified receiver.  The programmed
receiver phase is \(\delta\), whereas \(\delta'\) denotes the effective phase
after a known pattern-dependent correction.

In Fig.~\ref{fig:architecture}, teal cylinders denote interface memories,
teal circles denote individual cells of \(P_j\), and a red flag marks a cell
in \(E_j\).  Orange arrows denote signal capture, blue arrows denote local
encoding or recovery, black lines denote receiver optical paths, and dashed
inter-node lines denote distributed joint states rather than communication
channels.  SPS and BS denote the single-photon source and balanced beam
splitters, respectively, and \(D_{j1},D_{j2}\) are the two detectors at node
\(j\).

Figure~\ref{fig:architecture} fixes the physical and logical order used
throughout the paper: the astronomical coherence is stored before encoding,
whereas the ancillary single-photon reference is introduced only after
recovery for the GJC readout.  The protected object is therefore the
parameter-bearing signal memory, not the replaceable ancillary resource.

\begin{figure}[t]
\centering
\includegraphics[width=\columnwidth]{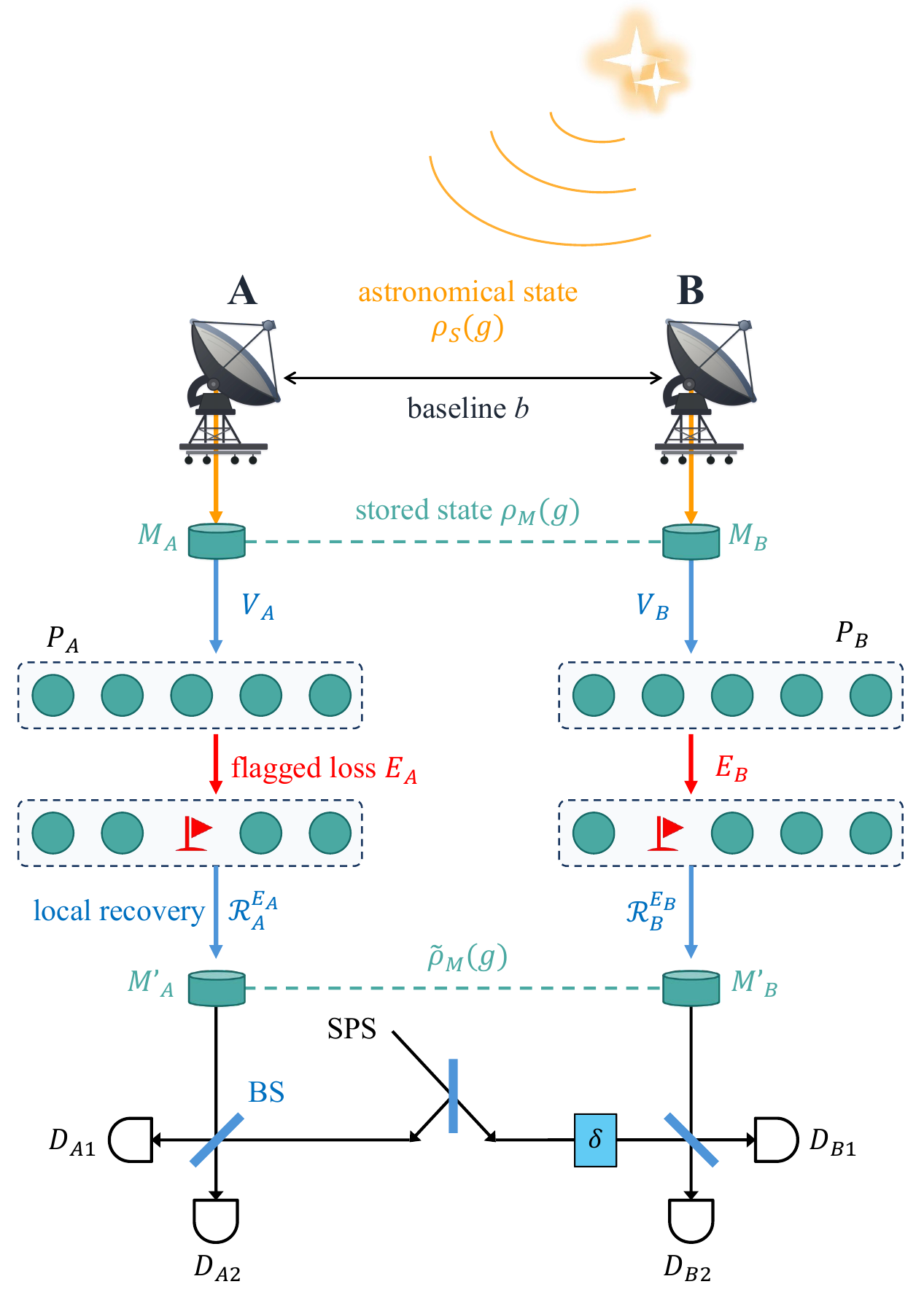}
\caption{Protocol architecture for finite-size flagged-erasure protection.
Signal capture, local encoding, flagged loss, pattern-conditioned recovery,
and fixed GJC readout appear in their logical order.  The shared
single-photon reference is introduced only after recovery.  Symbols and
graphical conventions are defined in the notation paragraph; reference
generation and distribution, flag acquisition, classical control,
feed-forward, reset, and detector occupancy are not shown.}
\label{fig:architecture}
\end{figure}

\paragraph*{Resource boundary and denominators.}
We distinguish three quantities throughout: information per incident
temporal mode, a dimensionless retention conditioned on successful capture
and reference availability, and information per wall-clock time under a
fixed total resource vector \(\mathcal C\).  Only the last can support a
platform-level system-advantage claim.  For a strategy \(\Pi\), the relevant
platform endpoint is
\begin{equation}
\mathcal J_{\Pi}(g;\mathcal C)
=
\lim_{T\rightarrow\infty}
\frac{F_g[Y_{\Pi}(0,T)]}{T},
\label{eq:platform-information-rate}
\end{equation}
where \(Y_{\Pi}(0,T)\) is the complete classical record, including
capture and reference failures, erasure and flag outcomes, rejected events,
no-clicks, dark counts, and timeouts.  The comparator is the best allowed
bare strategy under the same \(\mathcal C\), not a single favored bare
implementation.  The reported \(R_C\) and \(G\) below are conditional
diagnostics and do not evaluate Eq.~\eqref{eq:platform-information-rate}.
The platform-total-resource requirements and the present scope of any
hardware comparison are given in Appendix~\ref{sec:platform-contract}.

\section{Signal model and operational readout}
\label{sec:source}

\subsection{Two-aperture source model}

Let $s_A$ and $s_B$ denote the astronomical signal modes collected in one spatiotemporal mode at telescopes $A$ and $B$, respectively.  We reserve $a_A$ and $a_B$ for the two local modes of the ancillary single-photon reference used in the GJC readout.  The numerical study uses a balanced known-parameter model convention with equal mean signal intensities and mean total photon number per temporal mode $\epsilon\ll1$; the unequal-intensity and nuisance-flux extension is given below.  This is not a calibration to a selected platform.  The corresponding weak-thermal-light expansion is standard in quantum optical interferometry~\cite{Tsang2011,Gottesman2012}.  To first order in $\epsilon$, the balanced source state is
\begin{equation}
\rho_s(g)
=
(1-\epsilon)\left\lvert 00\right\rangle{}\!\left\langle 00\right\rvert{}
+\epsilon\rho_s^{(1)}(g)
+O(\epsilon^2),
\label{eq:source-state}
\end{equation}
where
\begin{align}
\rho_s^{(1)}(g)
&=
\frac{1}{2}
\left(
\left\lvert 10\right\rangle{}\!\left\langle 10\right\rvert{}
+\left\lvert 01\right\rangle{}\!\left\langle 01\right\rvert{}
\right.
\nonumber\\[-0.5ex]
&\qquad\left.
+g\left\lvert 10\right\rangle{}\!\left\langle 01\right\rvert{}
+g^*\left\lvert 01\right\rangle{}\!\left\langle 10\right\rvert{}
\right)
\nonumber\\
&=
\frac12
\begin{pmatrix}
1 & g\\
g^* & 1
\end{pmatrix}_{\{\left\lvert 10\right\rangle{},\left\lvert 01\right\rangle{}\}}.
\label{eq:single-photon-source}
\end{align}
The normalized complex visibility is
\begin{equation}
g=g_R+i g_I=\left\lvert g\right\rvert{}e^{i\phi},
\qquad \left\lvert g\right\rvert{}\leq1.
\label{eq:g-definition}
\end{equation}
In the far-field paraxial regime, $g$ is a normalized Fourier component of the source intensity distribution through the van Cittert--Zernike theorem, which provides the imaging interpretation of the measured visibility~\cite{Gottesman2012,Khabiboulline2019PRA}.  The reported visibility-estimation problem is therefore two-parameter:
\begin{equation}
\boldsymbol{\theta}=(g_R,g_I)^{\mathsf T}.
\end{equation}

\subsection{QFI matrix of the astronomical state}

Within the single-photon sector,
\begin{equation}
\rho_s^{(1)}(g)
=\frac12\left(\mathbb{I}+g_R\sigma_x-g_I\sigma_y\right),
\label{eq:bloch-state}
\end{equation}
with Bloch vector
\begin{equation}
\boldsymbol{r}=(g_R,-g_I,0)^{\mathsf T}.
\end{equation}
For a qubit state $\rho=(\mathbb{I}+\boldsymbol{r}\cdot\boldsymbol{\sigma})/2$, the symmetric-logarithmic-derivative (SLD) QFI matrix is~\cite{Braunstein1994}
\begin{equation}
[H(\boldsymbol{\theta})]_{\mu\nu}
=
\partial_\mu\boldsymbol{r}\cdot\partial_\nu\boldsymbol{r}
+
\frac{(\boldsymbol{r}\cdot\partial_\mu\boldsymbol{r})
(\boldsymbol{r}\cdot\partial_\nu\boldsymbol{r})}{1-\left\lvert \boldsymbol{r}\right\rvert{}^2},
\label{eq:qubit-qfi-general}
\end{equation}
for $\left\lvert \boldsymbol{r}\right\rvert{}<1$.  Direct substitution gives
\begin{equation}
\begin{aligned}
H^{(1)}(g)
&=
\frac{1}{1-\left\lvert g\right\rvert{}^2}
\\[-0.5ex]
&\quad\times
\begin{pmatrix}
1-g_I^2 & g_Rg_I\\
g_Rg_I & 1-g_R^2
\end{pmatrix}.
\end{aligned}
\label{eq:qfi-single}
\end{equation}
Because the vacuum and single-photon weights in Eq.~\eqref{eq:source-state} are independent of $g$, orthogonality of the photon-number sectors gives
\begin{equation}
H_s(g)=\epsilon H^{(1)}(g)+O(\epsilon^2).
\label{eq:qfi-weak-source}
\end{equation}

\paragraph*{Intensity imbalance and nuisance flux.}
Let \(\kappa\in(0,1)\) be the fraction of the one-photon intensity arriving
at node \(A\).  The normalized one-photon state becomes
\begin{equation}
\rho_{s,\kappa}^{(1)}(g)
=
\begin{pmatrix}
\kappa & \sqrt{\kappa(1-\kappa)}\,g\\
\sqrt{\kappa(1-\kappa)}\,g^* & 1-\kappa
\end{pmatrix},
\label{eq:unbalanced-source}
\end{equation}
and Eq.~\eqref{eq:single-photon-source} is the case \(\kappa=1/2\).
Within the vacuum--one-photon model, take
\(\boldsymbol{\vartheta}=(\epsilon,\kappa,g_R,g_I)\).  Direct use of
Eq.~\eqref{eq:qubit-qfi-general} gives
\begin{align}
H_{\epsilon\epsilon}
&=\frac{1}{\epsilon(1-\epsilon)},
&
H_{\epsilon\alpha}&=0,
\label{eq:flux-qfi-block}\\
H_{\kappa\kappa}^{(1)}
&=\frac{1}{\kappa(1-\kappa)},
&
H_{\kappa g_\mu}^{(1)}
&=0\quad(\mu=R,I),
\nonumber\\
H_{gg}^{(1)}
&=4\kappa(1-\kappa)H^{(1)}(g),
\label{eq:unbalanced-qfi-block}
\end{align}
where \(\alpha\in\{\kappa,g_R,g_I\}\), and the
\((\kappa,g_R,g_I)\) block of the full weak-source QFI carries the common
factor \(\epsilon\).  Consequently, at this order the QFI Schur complement
for \(g\), after treating both flux and intensity imbalance as nuisance
parameters, is
\(\epsilon\,4\kappa(1-\kappa)H^{(1)}(g)\).  The balanced numerical results
below set \(\kappa=1/2\); they do not claim robustness to an uncalibrated
imbalance in the fixed receiver.

The boundary $\left\lvert g\right\rvert{}=1$ is understood by a limiting procedure because the rank of the single-photon block changes there.  A detailed derivation is given in Appendix~\ref{app:qfi}.

\subsection{Coherent signal storage}
\label{sec:storage}

Let $M_A$ and $M_B$ be interface memories whose logical basis $\{\left\lvert 0\right\rangle{},\left\lvert 1\right\rangle{}\}$ records the absence or presence of one stored optical excitation.  Coherent storage of the astronomical excitation before nonlocal readout is the signal-memory route considered in quantum-network interferometry~\cite{Khabiboulline2019PRL,Khabiboulline2019PRA}.  An ideal local storage isometry satisfies
\begin{equation}
\begin{aligned}
\left\lvert 0\right\rangle{}_{s_j}\left\lvert 0\right\rangle{}_{M_j}
&\mapsto \left\lvert 0\right\rangle{}_{s_j}\left\lvert 0\right\rangle{}_{M_j},
\\
\left\lvert 1\right\rangle{}_{s_j}\left\lvert 0\right\rangle{}_{M_j}
&\mapsto \left\lvert 0\right\rangle{}_{s_j}\left\lvert 1\right\rangle{}_{M_j}.
\end{aligned}
\end{equation}
After tracing out the reset optical modes, the stored state has the same one-excitation coherence as Eq.~\eqref{eq:single-photon-source}.

To include imperfect but herald-independent capture, let $\eta_c$ be the symmetric capture efficiency and let $\lambda_c e^{i\varphi_c}$ be the coherence-transfer amplitude.  A convenient phenomenological model is
\begin{align}
\rho_M(g)
&=
(1-\epsilon\eta_c)\left\lvert 00\right\rangle{}\!\left\langle 00\right\rvert{}
+\epsilon\eta_c\rho_M^{(1)}(g_c),
\label{eq:stored-state}\\
g_c&=\lambda_c e^{i\varphi_c}g,
\qquad 0\leq\lambda_c\leq1.
\label{eq:storage-coherence}
\end{align}
The known phase $\varphi_c$ may be calibrated, whereas an unknown fluctuating
phase reduces $\lambda_c$ after averaging.  For an ideal coherent storage
map, $\lambda_c=1$.  In real coordinates,
\begin{equation}
\boldsymbol{g}_c=J_c\boldsymbol{g},
\qquad
J_c=\lambda_c
\begin{pmatrix}
\cos\varphi_c&-\sin\varphi_c\\
\sin\varphi_c&\cos\varphi_c
\end{pmatrix}.
\label{eq:storage-jacobian}
\end{equation}
The stored QFI for the original parameters \((g_R,g_I)\) is therefore
\begin{equation}
H_M^{(g)}(g)
=
\epsilon\eta_c J_c^{\mathsf T}
H^{(1)}(g_c)J_c.
\label{eq:stored-qfi-general}
\end{equation}
For unequal signal intensities, the right-hand side acquires the additional
factor \(4\kappa(1-\kappa)\).  In the ideal known-transfer convention,
\(J_c=\mathbb I\), and
\begin{equation}
H_M(g)=\epsilon\eta_c H^{(1)}(g).
\label{eq:stored-qfi}
\end{equation}

Loss before successful storage only reduces $\eta_c$ and is not correctable by any subsequent encoding.  The proposed protection concerns noise occurring after Eq.~\eqref{eq:stored-state} has been established.

\subsection{GJC readout and its operational Fisher information}
\label{sec:gjc}

\subsubsection{General single-photon ancillary resource}

Following the entanglement-assisted GJC construction~\cite{Gottesman2012,Zhang2025}, the ancillary state delivered to the two nodes is modeled as
\begin{equation}
\rho_a
=
(1-\eta_a)\left\lvert 00\right\rangle{}\!\left\langle 00\right\rvert{}
+\eta_a\rho_a^{(1)},
\label{eq:ancilla-total}
\end{equation}
where \(\eta_a\) is a model-level reference-availability factor that may lump
source success, distribution, and retrieval loss, and
\begin{equation}
\rho_a^{(1)}
=
\begin{pmatrix}
q_A & c\\
c^* & q_B
\end{pmatrix}_{\{\left\lvert 10\right\rangle{}_a,\left\lvert 01\right\rangle{}_a\}},
\qquad q_A+q_B=1.
\label{eq:ancilla-one}
\end{equation}
Define
\begin{equation}
\mu=2c=\nu_a e^{i\delta},
\qquad 0\leq\nu_a\leq2\sqrt{q_Aq_B}\leq1.
\label{eq:mu-resource}
\end{equation}
The ideal balanced resource has $q_A=q_B=1/2$ and $\nu_a=1$.  The source realization sketched in Fig.~\ref{fig:architecture} injects one photon from the single-photon source (SPS) into a central balanced beam splitter and applies a controllable phase shift $\delta$ to one output arm.  Absorbing the fixed beam-splitter phase into $\delta$, the resulting reference is
\begin{equation}
\left\lvert\psi_\delta\right\rangle{}_a
=
\frac{
\left\lvert1\right\rangle{}_{a_A}\left\lvert0\right\rangle{}_{a_B}
+e^{i\delta}
\left\lvert0\right\rangle{}_{a_A}\left\lvert1\right\rangle{}_{a_B}
}{\sqrt2},
\label{eq:ideal-reference-state}
\end{equation}
so that $\rho_a^{(1)}=\left\lvert\psi_\delta\right\rangle{}\!\left\langle\psi_\delta\right\rvert{}$ in the ideal limit.  Operationally, \(\delta\) is a known programmed relative optical phase that could be set by a stabilized path-length delay or an electro-optic phase shifter; no platform-specific precision, bandwidth, or switching latency is assumed.  The calculations below retain the general mixed resource in Eq.~\eqref{eq:ancilla-one}; SPS inefficiency, distribution loss, and imperfect reference coherence are represented phenomenologically by $\eta_a$, $q_A$, $q_B$, and $\nu_a$.  These variables do not account for failed preparation attempts, source throughput, synchronization, phase-lock bandwidth, reference memory occupancy, or a calibrated detector response; those are separate entries in Appendix~\ref{sec:platform-contract}.

\subsubsection{Local interference measurement}

The receiver is described by a positive-operator-valued measure (POVM).  At node $j\in\{A,B\}$, the recovered memory output $M'_j$ is retrieved into
a signal slot and interferes with the local ancillary mode $a_j$ on a
balanced beam splitter.  In Fig.~\ref{fig:architecture}, clicks at
$D_{A1}$ and $D_{A2}$ are assigned $r=+1$ and $r=-1$, while clicks at
$D_{B1}$ and $D_{B2}$ are assigned $s=+1$ and $s=-1$, respectively.
Writing the retrieved logical excitation as \(M\), an output sign
$x\in\{\pm1\}$ within either local one-photon sector corresponds to
\begin{equation}
\left\lvert x\right\rangle{}_j
=
\frac{\left\lvert 1_M0_a\right\rangle{}_j+x\left\lvert 0_M1_a\right\rangle{}_j}{\sqrt2}.
\label{eq:local-click-state}
\end{equation}
For the signal-memory architecture we henceforth use the ideal known-transfer
storage convention $g_c=g$ while retaining the capture probability
$\eta_c$.  Thus $g$ denotes the logical visibility after capture, and all
per-input GJC probabilities carry the modeled common factor
$\epsilon\eta_c$.  This convention is not normalized by platform throughput.
Projecting the retrieved stored block
$\rho_M^{(1)}(g)\otimes\rho_a^{(1)}$ onto
$\left\lvert r\right\rangle{}_A\left\lvert s\right\rangle{}_B$ gives
\begin{equation}
p_{rs}(g)
=
\frac{\epsilon\eta_c\eta_a}{8}
\left[1+rs\operatorname{Re}(g\mu^*)\right]
+O(\epsilon^2).
\label{eq:gjc-probability}
\end{equation}
The derivation is given explicitly in Appendix~\ref{app:gjc}.  Summing over all four outcomes,
\begin{equation}
p_{\mathrm{succ}}=\frac{\epsilon\eta_c\eta_a}{2}+O(\epsilon^2),
\label{eq:gjc-success}
\end{equation}
which exhibits the factor-$1/2$ success probability of the linear-optical GJC measurement~\cite{Gottesman2012,Zhang2025}.

Let $z=rs$ be the outcome parity and define
\begin{equation}
\boldsymbol{g}=\begin{pmatrix}g_R\\g_I\end{pmatrix},
\qquad
\boldsymbol{u}_\delta=\begin{pmatrix}\cos\delta\\\sin\delta\end{pmatrix}.
\end{equation}
Conditioned on success,
\begin{equation}
P(z\mid\mathrm{succ},g)
=
\frac12\left[1+z\nu_a\boldsymbol{u}_\delta^{\mathsf T}\boldsymbol{g}\right].
\label{eq:parity-probability}
\end{equation}
The CFI per input temporal mode is therefore
\begin{equation}
F_\delta^{\mathrm{GJC}}(g)
=
\frac{\epsilon\eta_c\eta_a}{2}
\frac{
\nu_a^2\boldsymbol{u}_\delta\boldsymbol{u}_\delta^{\mathsf T}
}{
1-\nu_a^2(\boldsymbol{u}_\delta^{\mathsf T}\boldsymbol{g})^2
}
+O(\epsilon^2).
\label{eq:gjc-fisher}
\end{equation}
Thus $\delta=0$ and $\delta=\pi/2$ probe the real and imaginary quadratures, respectively.  Equation~\eqref{eq:gjc-fisher}, rather than the unconstrained QFI alone, is the operational benchmark for the proposed architecture.

\paragraph*{Unequal intensities, storage transfer, and classical nuisance
parameters.}
For the source in Eq.~\eqref{eq:unbalanced-source}, define
\begin{equation}
A_\kappa=\kappa q_B+(1-\kappa)q_A,
\qquad
B_{\kappa,\delta}
=
\sqrt{\kappa(1-\kappa)}\,\nu_a
\boldsymbol{u}_\delta^{\mathsf T}\boldsymbol{g}_c.
\label{eq:unbalanced-gjc-ab}
\end{equation}
The four successful detector probabilities and their sum are
\begin{align}
p_{rs}
&=
\frac{\epsilon\eta_c\eta_a}{4}
\left(A_\kappa+rsB_{\kappa,\delta}\right)
+O(\epsilon^2),
\label{eq:unbalanced-gjc-probability}\\
p_{\rm succ}
&=
\epsilon\eta_c\eta_a A_\kappa+O(\epsilon^2).
\label{eq:unbalanced-gjc-success}
\end{align}
At fixed known \(\kappa\), the CFI for the original visibility
\(\boldsymbol g\) is
\begin{equation}
F_{\delta,\kappa}^{\mathrm{GJC},(g)}
=
\epsilon\eta_c\eta_a
J_c^{\mathsf T}
\frac{\kappa(1-\kappa)\nu_a^2 A_\kappa}
{A_\kappa^2-B_{\kappa,\delta}^2}
\boldsymbol{u}_\delta\boldsymbol{u}_\delta^{\mathsf T}
J_c.
\label{eq:unbalanced-gjc-fisher}
\end{equation}
Equations~\eqref{eq:gjc-probability}--\eqref{eq:gjc-fisher} follow for
\(\kappa=q_A=q_B=1/2\) and \(J_c=\mathbb I\).
If \(\boldsymbol{\xi}=(\epsilon,\kappa)\) is estimated jointly rather than
calibrated, the relevant classical information is the Schur complement
\begin{equation}
F_{g\mid\boldsymbol{\xi}}
=
F_{gg}
-F_{g\boldsymbol{\xi}}
F_{\boldsymbol{\xi}\boldsymbol{\xi}}^{+}
F_{\boldsymbol{\xi}g},
\label{eq:gjc-nuisance-schur}
\end{equation}
formed from the complete outcome distribution, including failure, where
\({}^{+}\) denotes the Moore--Penrose inverse.  The total flux is orthogonal
to \(g\) at this order \((F_{g\epsilon}=0)\), whereas an unknown imbalance
can correlate with \(g\) away from the balanced point.  For the balanced
ancillary resource at \(\kappa=1/2\), both \(A_\kappa\) and
\(\sqrt{\kappa(1-\kappa)}\) have zero first derivative with respect to
\(\kappa\), so \(F_{g\kappa}=0\) locally.  This local orthogonality, rather
than an assumption that nuisance parameters never matter, is the convention
used in the reported balanced simulations.

\subsection{Why scrambling an ancillary memory is not metrological locking}

In the experimental memory-assisted GJC architecture reported by
Wang \textit{et al.}~\cite{Wang2026}, the pre-interference state is
\begin{equation}
\rho_{\mathrm{pre}}(g)=\rho_s(g)\otimes\rho_{M,a},
\label{eq:ancillary-memory-product}
\end{equation}
where $\rho_{M,a}$ is a memory state used to generate $\rho_a$ upon readout.  A $g$-independent scrambling unitary $U_M$ acting only on the ancillary memories gives
\begin{equation}
\rho_{\mathrm{pre}}'(g)
=
\rho_s(g)\otimes U_M\rho_{M,a}U_M^\dagger.
\end{equation}
Tensoring with a parameter-independent state and applying a parameter-independent unitary do not change the QFI with respect to $g$:
\begin{equation}
H_g[\rho_{\mathrm{pre}}'(g)]=H_g[\rho_s(g)].
\end{equation}
Scrambling may protect the resource coherence $\nu_a$ or efficiency $\eta_a$, thereby improving Eq.~\eqref{eq:gjc-fisher}, but the astronomical parameter is not stored in that memory.  This route is correctly described as random erasure coding of the ancillary entangled resource, not as QFI locking of $g$.

\section{Recovery benchmarks, operational stability, and the covariance obstruction}
\label{sec:encoding}

\subsection{Local encoding isometries}

At each node $j\in\{A,B\}$, a logical memory qubit $L_j$ is encoded into $n_j$ physical memory qubits:
\begin{equation}
\begin{aligned}
V_j:\mathcal{H}_{L_j}
&\longrightarrow \mathcal{H}_{P_j}
=(\mathbb C^2)^{\otimes n_j},
\\
V_j\left\lvert i\right\rangle{}&=\left\lvert i_L\right\rangle{}_j,
\qquad i=0,1.
\end{aligned}
\label{eq:local-isometry}
\end{equation}
The global encoder is deliberately restricted to
\begin{equation}
V=V_A\otimes V_B,
\label{eq:local-product-encoder}
\end{equation}
so that no nonlocal gate is required during encoding.  A global Haar unitary across geographically separated nodes would largely defeat the architectural motivation of GJC\@.

For erasure sets $E_A\subseteq P_A$ and $E_B\subseteq P_B$, define
\begin{equation}
\mathcal{N}_{E_A,E_B}(\rho)
=
\operatorname{Tr}_{E_AE_B}\left[V\rho V^\dagger\right].
\label{eq:erasure-channel}
\end{equation}
The erasure locations are assumed to be flagged.  This is stronger than unheralded amplitude damping and is revisited in Appendix~\ref{sec:limitations}.

\subsection{Exact local erasure-correction condition}

\paragraph*{Definition (Local erasure correction).}
The local code $V_j$ exactly corrects erasure of $E_j$ if there exists a parameter-independent channel $\mathcal{R}_j^{E_j}$ such that
\begin{equation}
\mathcal{R}_j^{E_j}\circ\operatorname{Tr}_{E_j}
\left[V_j X V_j^\dagger\right]=X
\label{eq:exact-local-recovery}
\end{equation}
for every operator $X$ on $\mathcal{H}_{L_j}$.

For erasure errors, the Knill--Laflamme condition~\cite{Knill2000} is equivalent to
\begin{equation}
\operatorname{Tr}_{\bar E_j}\left[\left\lvert i_L\right\rangle{}\!\left\langle k_L\right\rvert{}\right]
=\delta_{ik}\sigma_{E_j},
\qquad i,k\in\{0,1\},
\label{eq:erasure-KL}
\end{equation}
for a state $\sigma_{E_j}$ independent of the logical value.  Equation~\eqref{eq:erasure-KL} states that the erased subsystem contains neither population nor coherence information about the logical input.

\subsection{Covariance obstruction to uniform exact correction}
\label{sec:covariant-no-go}

Let the physical number generator at node $j$ be additive,
\begin{equation}
\hat N_j=\sum_{\ell=1}^{n_j}\hat n_{j,\ell},
\end{equation}
and let $\hat n_{L_j}=\left\lvert1\right\rangle{}\!\left\langle1\right\rvert{}$
be the nontrivial logical occupancy generator.  Strict covariance, including a
fixed background charge $Q_{0,j}$, means
\begin{equation}
e^{-i\alpha\hat N_j}V_j
=
e^{-i\alpha Q_{0,j}}V_j e^{-i\alpha\hat n_{L_j}}
\quad\text{for every }\alpha.
\label{eq:strict-covariance-main}
\end{equation}
For the neighboring-charge construction used below, the two codewords have
physical charges $Q_{0,j}$ and $Q_{0,j}+1$.

\begin{lemma}[Strict covariance excludes uniform exact single-memory correction]
\label{lem:covariant-no-go}
For a finite-dimensional encoder satisfying
Eq.~\eqref{eq:strict-covariance-main}, erasure of every individual physical
memory cannot be exactly correctable when $\hat n_{L_j}$ is nontrivial.
\end{lemma}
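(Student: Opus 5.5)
We argue by contradiction, following the Eastin--Knill mechanism~\cite{Eastin2009} specialized to an additive $U(1)$ generator. Suppose that for every single cell $\ell\in P_j$ the erasure $E_j=\{\ell\}$ is exactly correctable. By the erasure form of the Knill--Laflamme condition, Eq.~\eqref{eq:erasure-KL}, this is equivalent to the existence, for each $\ell$, of a state $\sigma_\ell$ with
\begin{equation}
\operatorname{Tr}_{\bar\ell}\bigl[\lvert i_L\rangle\!\langle k_L\rvert\bigr]=\delta_{ik}\sigma_\ell .
\end{equation}
Equivalently, $\langle i_L\rvert O_\ell\lvert k_L\rangle=\delta_{ik}\operatorname{Tr}(\sigma_\ell O_\ell)$ for every operator $O_\ell$ supported on cell $\ell$ alone. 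We apply this with $O_\ell=\hat n_{j,\ell}$.

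The key step combines this with strict covariance. Differentiating Eq.~\eqref{eq:strict-covariance-main} at $\alpha=0$ gives the intertwining relation
\begin{equation}
\hat N_jV_j=V_j\bigl(Q_{0,j}\mathbb I+\hat n_{L_j}\bigr).
\end{equation}
Sandwiching this between logical codewords yields
\begin{equation}
\langle i_L\rvert\hat N_j\lvert k_L\rangle=\delta_{ik}(Q_{0,j}+i).
\end{equation}
On the other hand, additivity $\hat N_j=\sum_\ell\hat n_{j,\ell}$ together with the single-cell Knill--Laflamme condition gives
\begin{equation}
\langle i_L\rvert\hat N_j\lvert k_L\rangle=\delta_{ik}\sum_\ell\operatorname{Tr}(\sigma_\ell\hat n_{j,\ell}).
\end{equation}
The right-hand side is independent of the logical value $i$. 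Comparing the cases $i=0$ and $i=1$ then gives $Q_{0,j}=Q_{0,j}+1$, which is a contradiction.

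More generally, if $\hat n_{L_j}$ is any nontrivial logical generator (not proportional to the identity), the same comparison forces $\langle 0\rvert\hat n_{L_j}\lvert0\rangle=\langle1\rvert\hat n_{L_j}\lvert1\rangle$. Applying the argument in the eigenbasis of $\hat n_{L_j}$, or in an arbitrary logical basis so as to cover off-diagonal elements as well, forces $\hat n_{L_j}\propto\mathbb I$. Hence no finite-dimensional strictly covariant encoder with nontrivial $\hat n_{L_j}$ can exactly correct every single-cell erasure.

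The main point requiring care is the use of finite dimensionality. Strict covariance with a bounded $\hat N_j$ is what lets us differentiate Eq.~\eqref{eq:strict-covariance-main} and identify the generators, and the additive structure is what allows each term $\hat n_{j,\ell}$ to be controlled by a single-cell erasure. We must also check that the Knill--Laflamme condition is applied to the one-cell erasure $E_j=\{\ell\}$, so that $\hat n_{j,\ell}$ is supported exactly on the erased subsystem; the condition then constrains its codeword matrix elements directly. With these checks in place, no further estimates are needed. The quantitative, approximate versions of this argument are the tradeoffs of Ref.~\cite{Faist2020}, which we do not need here.
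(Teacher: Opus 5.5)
Your proof is correct and follows essentially the same route as the paper: the single-cell Knill--Laflamme condition makes each $V_j^\dagger\hat n_{j,\ell}V_j$, and hence $V_j^\dagger\hat N_jV_j$, a scalar on the code, while the differentiated covariance relation gives $V_j^\dagger\hat N_jV_j=Q_{0,j}\mathbb I_{L_j}+\hat n_{L_j}$, which is not a scalar. Writing this in matrix elements via $\sigma_\ell$, and extending it to a general non-scalar logical generator, is a component-wise restatement of the paper's operator identity.
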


\begin{proof}
Suppose, to the contrary, that erasure of every physical memory $\ell$ is
exactly correctable.  The Knill--Laflamme condition for erasure of memory
$\ell$ then implies
\begin{equation}
V_j^\dagger O_{j,\ell}V_j
=c_{j,\ell}(O_{j,\ell})\mathbb I_{L_j}
\end{equation}
for every operator $O_{j,\ell}$ supported on that memory.  Taking
$O_{j,\ell}=\hat n_{j,\ell}$ and summing over $\ell$ gives
\begin{equation}
V_j^\dagger\hat N_jV_j=c_j\mathbb I_{L_j}
\end{equation}
for a scalar $c_j$.  On the other hand, differentiating
Eq.~\eqref{eq:strict-covariance-main} at $\alpha=0$ yields
\begin{equation}
\hat N_jV_j
=V_j\left(Q_{0,j}\mathbb I_{L_j}+\hat n_{L_j}\right),
\end{equation}
and hence
\begin{equation}
V_j^\dagger\hat N_jV_j
=Q_{0,j}\mathbb I_{L_j}+\hat n_{L_j},
\end{equation}
which is not a scalar.  This is a contradiction.
\end{proof}

Lemma~\ref{lem:covariant-no-go} is the single-erasure form of the standard
continuous-symmetry obstruction to exact finite-dimensional covariant quantum
error correction~\cite{Eastin2009,Faist2020}.  It does not exclude
approximate covariant correction, exact correction of a restricted set of
locations, a charge-neutral logical subsystem, or a reference-assisted
implementation whose asymmetry resource is included explicitly.  It does
exclude interpreting the following exact-QEC statements as an achievable
uniform exact code for the strictly covariant logical-occupancy architecture.

\subsection{Exact-recovery benchmark for the complex visibility}
\label{sec:exact-preservation}

\begin{lemma}[Tensor-product exact-recovery benchmark]
\label{lem:exact-state}
Suppose that $V_A$ and $V_B$ exactly correct erasures $E_A$ and $E_B$, respectively.  Then, for every two-node logical operator $X_{AB}$,
\begin{equation}
(\mathcal{R}_A^{E_A}\otimes\mathcal{R}_B^{E_B})
\circ\mathcal{N}_{E_A,E_B}(X_{AB})
=X_{AB}.
\label{eq:two-node-recovery}
\end{equation}
In particular, the complete stored astronomical family is restored:
\begin{equation}
(\mathcal{R}_A^{E_A}\otimes\mathcal{R}_B^{E_B})
\circ\mathcal{N}_{E_A,E_B}[\rho_M(g)]
=\rho_M(g)
\label{eq:state-family-restored}
\end{equation}
for every physically allowed $g$ within the vacuum--single-photon storage
model.
\end{lemma}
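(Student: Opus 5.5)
The plan is to reduce the two-node statement to the single-node definition, Eq.~\eqref{eq:exact-local-recovery}, using linearity and the tensor-product structure of every map involved. Since $V=V_A\otimes V_B$ by Eq.~\eqref{eq:local-product-encoder} and $E_A\subseteq P_A$, $E_B\subseteq P_B$, the partial trace factorizes as $\operatorname{Tr}_{E_AE_B}=\operatorname{Tr}_{E_A}\otimes\operatorname{Tr}_{E_B}$. The erasure channel of Eq.~\eqref{eq:erasure-channel} is therefore the product map
\begin{equation}
\mathcal N_{E_A,E_B}=\mathcal N^{A}_{E_A}\otimes\mathcal N^{B}_{E_B},
\qquad
\mathcal N^{j}_{E_j}(X)=\operatorname{Tr}_{E_j}\bigl[V_jXV_j^\dagger\bigr].
\end{equation}
Composing with $\mathcal R_A^{E_A}\otimes\mathcal R_B^{E_B}$ gives $(\mathcal R_A^{E_A}\circ\mathcal N^{A}_{E_A})\otimes(\mathcal R_B^{E_B}\circ\mathcal N^{B}_{E_B})$, because composition of tensor products of linear maps is the tensor product of the compositions.

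Next I expand an arbitrary two-node logical operator in a product basis, $X_{AB}=\sum_{k}x_k\,X_A^{(k)}\otimes X_B^{(k)}$. For instance, I can take matrix units $\lvert i\rangle\!\langle i'\rvert\otimes\lvert m\rangle\!\langle m'\rvert$, which is possible because $\mathcal H_{L_A}\otimes\mathcal H_{L_B}$ is finite dimensional. On each product term the composite map acts as $(\mathcal R_A^{E_A}\circ\mathcal N^{A}_{E_A})(X_A^{(k)})\otimes(\mathcal R_B^{E_B}\circ\mathcal N^{B}_{E_B})(X_B^{(k)})$. By hypothesis, Eq.~\eqref{eq:exact-local-recovery} holds for \emph{every} operator on $\mathcal H_{L_j}$, not only for states, so each factor returns $X_A^{(k)}$ and $X_B^{(k)}$ respectively. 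Linearity then yields Eq.~\eqref{eq:two-node-recovery}. The reason the definition must hold for all operators is that $X_{AB}$ carries the nonlocal coherence $g\lvert10\rangle\!\langle01\rvert$, whose local factors $\lvert1\rangle\!\langle0\rvert$ and $\lvert0\rangle\!\langle1\rvert$ are off-diagonal rather than states.

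Eq.~\eqref{eq:state-family-restored} is then the special case $X_{AB}=\rho_M(g)$ from Eq.~\eqref{eq:stored-state}, which is an operator on the two logical qubits for every admissible $g$. The recovery is parameter independent because $\mathcal R_j^{E_j}$ was fixed by the definition without reference to $g$.

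The only point requiring care, and the step I expect to be the main obstacle, is justifying that the local identity extends to the tensor product. This needs the local maps to be linear on the full operator space. I must also make sure that applying $\mathcal R_A^{E_A}$ on $A$ does not act on $B$'s surviving cells, which is guaranteed because the erasure sets and recoveries are local. No complete-positivity argument is needed for this identity, only linearity. Complete positivity of the local recoveries simply ensures the product map is a valid channel.
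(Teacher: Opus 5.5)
Your proposal is correct and follows the paper's proof: expand $X_{AB}$ in product matrix units, use the product structure of $V=V_A\otimes V_B$, the erasure, and the recovery to apply Eq.~\eqref{eq:exact-local-recovery} factorwise, extend by linearity, and then specialize to $X_{AB}=\rho_M(g)$. Your explicit factorization $\operatorname{Tr}_{E_AE_B}=\operatorname{Tr}_{E_A}\otimes\operatorname{Tr}_{E_B}$ and your remark that only linearity is needed spell out what the paper states more briefly.
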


\begin{proof}
Expand an arbitrary operator as
\begin{equation}
X_{AB}=\sum_{i,k=0}^1\sum_{j,l=0}^1
x_{ij,kl}\left\lvert i\right\rangle{}\!\left\langle k\right\rvert{}_A\otimes\left\lvert j\right\rangle{}\!\left\langle l\right\rvert{}_B.
\end{equation}
The product encoder and product erasure channel act independently on each tensor factor.  Applying Eq.~\eqref{eq:exact-local-recovery} to each basis operator gives
\begin{align}
&(\mathcal{R}_A^{E_A}\otimes\mathcal{R}_B^{E_B})
\circ\mathcal{N}_{E_A,E_B}
\left(\left\lvert i\right\rangle{}\!\left\langle k\right\rvert{}_A\otimes\left\lvert j\right\rangle{}\!\left\langle l\right\rvert{}_B\right)
\nonumber\\
&\hspace{4cm}=
\left\lvert i\right\rangle{}\!\left\langle k\right\rvert{}_A\otimes\left\lvert j\right\rangle{}\!\left\langle l\right\rvert{}_B.
\end{align}
Linearity proves Eq.~\eqref{eq:two-node-recovery}.  Taking $X_{AB}=\rho_M(g)$ proves Eq.~\eqref{eq:state-family-restored}.
\end{proof}

\begin{corollary}[Recovery of the cross-aperture coherence]
Under the hypotheses of Lemma~\ref{lem:exact-state},
\begin{equation}
g\left\lvert 1_L0_L\right\rangle{}\!\left\langle 0_L1_L\right\rvert{}
\longmapsto
g\left\lvert 10\right\rangle{}\!\left\langle 01\right\rvert{}
\label{eq:coherence-restored}
\end{equation}
after erasure and recovery.  Both $g_R$ and $g_I$ are restored; the result is not restricted to a one-parameter phase family.
\end{corollary}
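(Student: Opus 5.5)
The corollary should follow from Lemma~\ref{lem:exact-state} by applying it to one specific basis operator. I will take
\[
X_{AB}=\left\lvert 1\right\rangle{}\!\left\langle 0\right\rvert{}_A\otimes\left\lvert 0\right\rangle{}\!\left\langle 1\right\rvert{}_B ,
\]
the logical off-diagonal operator that carries the cross-aperture coherence in $\rho_M^{(1)}(g)$. Its encoded form is
\[
V X_{AB} V^\dagger=\left\lvert 1_L0_L\right\rangle{}\!\left\langle 0_L1_L\right\rvert{}.
\]
Equation~\eqref{eq:two-node-recovery} then gives
\[
(\mathcal R_A^{E_A}\otimes\mathcal R_B^{E_B})\circ\mathcal N_{E_A,E_B}(X_{AB})=X_{AB}.
\]
Here $\mathcal N_{E_A,E_B}$ is understood, as in Eq.~\eqref{eq:erasure-channel}, as encoding followed by partial trace. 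Multiplying by the scalar $g$ and using linearity of the composite map gives Eq.~\eqref{eq:coherence-restored}. The same argument applied to the Hermitian-conjugate element, with coefficient $g^*$, restores the other off-diagonal entry. No positivity or trace-preservation subtleties arise, because Eq.~\eqref{eq:exact-local-recovery} holds for every operator $X$, not only for states.

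To justify the claim that both $g_R$ and $g_I$ are restored, I will note that the recovery channels are parameter independent. The complex amplitude $g$ is therefore transmitted as an exact scalar coefficient, with no modulus rescaling and no phase rotation. The real map $\boldsymbol g\mapsto\boldsymbol g$ then has identity Jacobian, in contrast with a generic $J_c$-type distortion as in Eq.~\eqref{eq:storage-jacobian}. Equivalently, Eq.~\eqref{eq:state-family-restored} shows that the whole two-parameter family is returned. By data processing in both directions, the QFI matrix after recovery equals $H_M(g)$ in Eq.~\eqref{eq:stored-qfi}, a full-rank $2\times2$ matrix for $\lvert g\rvert<1$. It is not a rank-one matrix concentrated on a single phase direction. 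Likewise every GJC probability in Eq.~\eqref{eq:gjc-probability}, for any $\delta$, is unchanged, so choosing $\delta=0$ and $\delta=\pi/2$ accesses both quadratures.

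The only mildly delicate point is bookkeeping rather than mathematics. The statement uses $\lvert 1_L0_L\rangle\langle 0_L1_L\rvert$ to denote the encoded operator, so I must make sure the map is applied to the logical input and not to an already-encoded physical operator. Once that convention is fixed, the corollary is an immediate specialization of the lemma.
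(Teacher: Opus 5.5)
Your proposal is correct and is essentially the paper's argument: specialize Eq.~\eqref{eq:two-node-recovery} of Lemma~\ref{lem:exact-state} to the matrix unit \(\lvert1\rangle\!\langle0\rvert_A\otimes\lvert0\rangle\!\langle1\rvert_B\), which is the factorization the paper records in Appendix~\ref{app:operator-proof}, and use linearity in the scalar \(g\). Your added QFI-rank and GJC-quadrature remarks agree with the paper's subsequent Lemma~\ref{lem:qfi-preservation} and Corollary~\ref{cor:cfi} but are not needed, since exact preservation of the complex coefficient \(g\) already restores both \(g_R\) and \(g_I\).
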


\subsection{QFI benchmark under exact recovery}
\label{sec:qfi-preservation}

\begin{lemma}[QFI matrix under an exactly reversible channel]
\label{lem:qfi-preservation}
Let $\rho(\boldsymbol{\theta})$ be any differentiable state family, and let $\mathcal{N}$ and $\mathcal{R}$ be parameter-independent quantum channels satisfying
\begin{equation}
\mathcal{R}\circ\mathcal{N}[\rho(\boldsymbol{\theta})]=\rho(\boldsymbol{\theta})
\label{eq:sufficiency-state-family}
\end{equation}
for every $\boldsymbol{\theta}$ in an open parameter region.  Then the SLD QFI matrices obey
\begin{equation}
H[\mathcal{N}(\rho(\boldsymbol{\theta}))]=H[\rho(\boldsymbol{\theta})].
\label{eq:qfi-equality}
\end{equation}
\end{lemma}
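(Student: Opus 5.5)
The plan is a two-sided inequality built from the monotonicity (data-processing inequality) of the SLD QFI matrix under parameter-independent channels. That monotonicity is the standard statement that for any channel $\mathcal{C}$ independent of $\boldsymbol{\theta}$,
\begin{equation}
H[\mathcal{C}(\rho(\boldsymbol{\theta}))]\preceq H[\rho(\boldsymbol{\theta})]
\end{equation}
in the positive-semidefinite order. I would invoke it as a known result, for instance via Petz's characterization of the SLD metric as the minimal monotone metric. A self-contained alternative is to show that, for every real vector $\boldsymbol{w}$, the quadratic form $\boldsymbol{w}^{\mathsf T}H\boldsymbol{w}$ equals the scalar QFI of the one-parameter family obtained along the direction $\boldsymbol{w}$, and then apply the scalar monotonicity theorem.

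Apply monotonicity first with $\mathcal{C}=\mathcal{N}$. This gives $H[\mathcal{N}(\rho(\boldsymbol{\theta}))]\preceq H[\rho(\boldsymbol{\theta})]$.

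Next apply it with $\mathcal{C}=\mathcal{R}$ to the family $\sigma(\boldsymbol{\theta})=\mathcal{N}[\rho(\boldsymbol{\theta})]$. This family is differentiable because $\mathcal{N}$ is linear and parameter independent, so $\partial_\mu\sigma=\mathcal{N}(\partial_\mu\rho)$. The result is $H[\mathcal{R}\circ\mathcal{N}(\rho(\boldsymbol{\theta}))]\preceq H[\mathcal{N}(\rho(\boldsymbol{\theta}))]$.

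By Eq.~\eqref{eq:sufficiency-state-family}, the left-hand side of this second inequality is $H[\rho(\boldsymbol{\theta})]$. Here the openness of the parameter region matters: the identity $\mathcal{R}\circ\mathcal{N}[\rho(\boldsymbol{\theta})]=\rho(\boldsymbol{\theta})$ holds on a neighborhood, so it can be differentiated, and the derivatives $\partial_\mu\rho$ are also reproduced exactly. Hence the QFI, which depends only on $\rho$ and $\partial_\mu\rho$ at the point, is the same on both sides. Sandwiching the two inequalities gives Eq.~\eqref{eq:qfi-equality}, since a matrix $X$ with $X\preceq Y$ and $Y\preceq X$ satisfies $X=Y$.

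The main obstacle is rank and support subtleties. The SLD QFI can be discontinuous where the rank of the state changes, and monotonicity is cleanest when the SLD is defined on the support. I would handle this with the convention the paper already uses: define the SLD $L_\mu$ as the solution of $\partial_\mu\rho=\frac12\{\rho,L_\mu\}$ restricted to the support of $\rho$, with the QFI given by $\operatorname{Tr}(\rho L_\mu L_\nu)$ symmetrized. Monotonicity holds in this generality because the QFI is the maximum of the CFI over POVMs (Braunstein--Caves), together with the fact that any POVM after $\mathcal{C}$ pulls back, through the adjoint $\mathcal{C}^\dagger$, to a POVM before $\mathcal{C}$. The CFI-matrix form of the supremum needs care in the multiparameter case, so I would use the quadratic-form reduction to scalar QFI described above rather than a matrix-valued supremum.

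Points such as the boundary $\lvert g\rvert=1$, where rank changes, can be treated by the same limiting convention stated after Eq.~\eqref{eq:unbalanced-qfi-block}. Alternatively, equality at every interior point of the open region suffices for the statement.
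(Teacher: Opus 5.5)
Your proposal is correct and follows the paper's proof. Both reduce the matrix claim to scalar QFI along lines $\boldsymbol{\theta}_0+t\boldsymbol{v}$, apply monotonicity under parameter-independent channels, and sandwich $H[\rho]\succeq H[\mathcal{N}(\rho)]\succeq H[\mathcal{R}\circ\mathcal{N}(\rho)]=H[\rho]$. Your remark that the open region lets you differentiate the recovery identity, so that $\partial_\mu\rho$ is also reproduced, spells out a step the paper uses without comment.
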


\begin{proof}
For any real vector $\boldsymbol{v}$, $\boldsymbol{v}^{\mathsf T}H\boldsymbol{v}$ is the scalar SLD QFI along the one-parameter curve $\boldsymbol{\theta}(t)=\boldsymbol{\theta}_0+t\boldsymbol{v}$.  Scalar QFI is monotone under parameter-independent channels~\cite{Braunstein1994}.  Hence
\begin{equation}
H[\rho]\succeq H[\mathcal{N}(\rho)]
\succeq H[\mathcal{R}\circ\mathcal{N}(\rho)]=H[\rho],
\end{equation}
where $\succeq$ denotes positive-semidefinite matrix order.  Both inequalities are therefore equalities.
\end{proof}

Combining Lemmas~\ref{lem:exact-state} and~\ref{lem:qfi-preservation} yields
\begin{equation}
H\!\left[\mathcal{N}_{E_A,E_B}(\rho_M(g))\right]
=H[\rho_M(g)]
=\epsilon\eta_c H^{(1)}(g)
\label{eq:stored-qfi-preserved}
\end{equation}
for every exactly correctable erasure pattern and ideal coherent storage.

\begin{remark}
Equation~\eqref{eq:stored-qfi-preserved} is stronger than a statement that one scalar QFI happens to be numerically unchanged at one value of $g$.  Exact recoverability restores the complete two-parameter state family and therefore every statistical decision problem defined on that family.
\end{remark}

\subsection{GJC-CFI benchmark under exact recovery}
\label{sec:cfi-preservation}

\begin{corollary}[Operational GJC statistics under exact recovery]
\label{cor:cfi}
Under the hypotheses of Lemma~\ref{lem:exact-state}, perform the same GJC POVM after recovery as would have been performed on the unencoded state.  Then every outcome probability is unchanged and
\begin{equation}
F_{\delta,E_A,E_B}^{\mathrm{GJC}}(g)
=F_\delta^{\mathrm{GJC}}(g),
\label{eq:cfi-exact}
\end{equation}
where the right-hand side is given by Eq.~\eqref{eq:gjc-fisher} with only the
modeled factors appearing there.
\end{corollary}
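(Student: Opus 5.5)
The plan is to reduce the statement to Lemma~\ref{lem:exact-state} and then use the fact that classical Fisher information depends only on the outcome distribution.

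\textbf{Step 1: the recovered state.} Lemma~\ref{lem:exact-state} gives Eq.~\eqref{eq:state-family-restored}. For the fixed pattern \(E=(E_A,E_B)\), the recovered two-node state therefore satisfies
\begin{equation}
\widetilde\rho_{M,E}(g)=\rho_M(g)
\end{equation}
as operators on \(M'_A\otimes M'_B\cong M_A\otimes M_B\), for every allowed \(g\). We interpret ``the same GJC POVM'' as follows. The recovered logical outputs \(M'_j\) are retrieved into the signal slots by the same retrieval map used for the unencoded memories. They are then tensored with the same parameter-independent ancilla state \(\rho_a\) of Eqs.~\eqref{eq:ancilla-total}--\eqref{eq:ancilla-one}, with the same programmed phase \(\delta\). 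Finally they are projected with the same effects, namely the products of the local click states in Eq.~\eqref{eq:local-click-state} together with the failure outcomes.

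\textbf{Step 2: the outcome probabilities.} Let \(\Pi_y\) denote the effect for outcome \(y\), including the failure and no-click outcomes. The post-recovery probability is \(\operatorname{Tr}[(\widetilde\rho_{M,E}(g)\otimes\rho_a)\Pi_y]\). By Step~1 this equals \(\operatorname{Tr}[(\rho_M(g)\otimes\rho_a)\Pi_y]\). In particular, the four successful probabilities reproduce Eq.~\eqref{eq:gjc-probability}, and the complementary failure probability is also unchanged. The two distributions are therefore identical as functions of \(\boldsymbol g\) on the whole open region \(\lvert g\rvert<1\). Their partial derivatives in \(g_R\) and \(g_I\) then coincide as well.

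\textbf{Step 3: the Fisher information.} The CFI matrix is
\begin{equation}
F_{\mu\nu}=\sum_y \frac{\partial_\mu p_y\,\partial_\nu p_y}{p_y},
\end{equation}
a functional only of the distribution and its derivatives. Equality of the distributions therefore gives Eq.~\eqref{eq:cfi-exact}. Since the distribution is literally the unencoded one, the right-hand side is the expression obtained in Section~\ref{sec:gjc}, i.e., Eq.~\eqref{eq:gjc-fisher}. It carries exactly the factors \(\epsilon\eta_c\eta_a\) and \(\nu_a\), and the same \(O(\epsilon^2)\) remainder, because no additional loss factor has entered.

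\textbf{Main obstacle.} The only real subtlety is bookkeeping: one must make sure nothing outside the logical state differs between the two scenarios. If the recovery branch had a pattern-dependent success probability, or left residual correlations with the discarded cells, the outcome record would change. Exact recovery as defined in Eq.~\eqref{eq:exact-local-recovery} is trace-preserving and deterministic, so there is no postselection, and any heralded pattern information is independent of \(g\). I would state explicitly that conditioning on the known pattern \(E\) contributes a \(g\)-independent factor that does not enter the per-pattern CFI. Optionally, I would cross-check the result against Lemma~\ref{lem:qfi-preservation}: the classical monotonicity argument gives the same sandwich inequality for the CFI of any fixed measurement.
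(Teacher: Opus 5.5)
Your proposal is correct and follows the paper's own argument. Lemma~\ref{lem:exact-state} makes the recovered state equal to $\rho_M(g)$ for every $g$, so the Born probabilities of the fixed GJC POVM, and hence their derivatives and the CFI matrices, coincide; your additional bookkeeping (trace preservation, $g$-independent flags, inclusion of failure outcomes) only spells out what the paper's three-line proof leaves implicit, and the optional sandwich cross-check adds nothing here.
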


\begin{proof}
By Lemma~\ref{lem:exact-state}, the recovered density operator equals the unencoded logical density operator for every $g$.  The Born probabilities of any fixed POVM, including the GJC POVM, are therefore identical functions of $g$.  Equality of the probability functions and their derivatives implies equality of their CFI matrices.
\end{proof}

These exact-recovery statements are standard consequences of the defining
identity in Eq.~\eqref{eq:exact-local-recovery}; they are included to specify
the benchmark used later and are not claimed as new coding theorems.  A
decoder followed by the original GJC measurement is an explicit measurement
strategy attaining the unencoded GJC statistics on every exactly correctable
pattern.

\subsection{Operational stability under approximate recovery}
\label{sec:main-cfi-stability}

Exact correction is unavailable uniformly for the covariant logical
occupancy, but a channel-level approximation still gives a quantitative
operational guarantee.  The following result separates this general
conversion from the architecture-specific problem of constructing and
certifying a finite-depth recovery channel.

\begin{theorem}[Fisher-information stability under a fixed receiver]
\label{thm:fi-continuity}
Let $\rho(\boldsymbol{\theta})$ be a continuously differentiable
(\(C^1\)) state family on a compact parameter set contained in an open
differentiability domain, let $\Lambda$ and $\Gamma$ be
parameter-independent completely positive trace-preserving (CPTP) channels, and let
$\{M_y\}_{y=1}^{K}$ be a fixed finite POVM\@.  Define
\begin{align}
p_y(\boldsymbol{\theta})
&=
\operatorname{Tr}\!\left[M_y\Lambda(\rho(\boldsymbol{\theta}))\right],
\\
q_y(\boldsymbol{\theta})
&=
\operatorname{Tr}\!\left[M_y\Gamma(\rho(\boldsymbol{\theta}))\right],
\label{eq:fi-output-probabilities}\\
\varepsilon_\diamond
&=
\frac12\left\lVert\Lambda-\Gamma\right\rVert{}_\diamond,
\\
S_\rho^2
&=
\sum_\mu
\sup_{\boldsymbol{\theta}}
\left\lVert\partial_\mu\rho(\boldsymbol{\theta})\right\rVert{}_1^2.
\label{eq:fi-channel-error}
\end{align}
Let $F_\Lambda$ and $F_\Gamma$ denote the classical Fisher matrices of
$\{p_y\}$ and $\{q_y\}$, respectively, with the pointwise convention
\begin{equation}
F[p]
=
\sum_{y:p_y>0}
\frac{
\boldsymbol{\nabla}p_y
\boldsymbol{\nabla}p_y^{\mathsf T}
}{p_y}.
\label{eq:pointwise-cfi-convention}
\end{equation}
At an interior point where $p_y=0$, nonnegativity and differentiability imply
$\boldsymbol{\nabla}p_y=0$, and that outcome contributes zero in
Eq.~\eqref{eq:pointwise-cfi-convention}; no universal continuous extension
through a probability zero is assumed.
Suppose the ideal probabilities obey
$q_y(\boldsymbol{\theta})\geq q_0>0$ uniformly, and suppose
$\lVert H[\rho(\boldsymbol{\theta})]\rVert_{\mathrm{op}}\leq H_{\max}$.
Then
\begin{equation}
\left\lVert F_\Lambda-F_\Gamma\right\rVert{}_{\mathrm{op}}
\leq
C_{\mathrm{FI}}\varepsilon_\diamond,
\label{eq:fi-general-bound}
\end{equation}
where
\begin{equation}
C_{\mathrm{FI}}
=
\max\!\left\{
K S_\rho^2\left(\frac{2}{q_0}+\frac{1}{q_0^2}\right),
\frac{2H_{\max}}{q_0}
\right\}.
\label{eq:fi-general-constant}
\end{equation}
\end{theorem}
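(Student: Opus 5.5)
The plan is to compare the two Fisher matrices outcome by outcome along an arbitrary unit direction $\boldsymbol v$. Since $F_\Lambda-F_\Gamma$ is real symmetric, $\lVert F_\Lambda-F_\Gamma\rVert_{\mathrm{op}}=\sup_{\lVert\boldsymbol v\rVert=1}\lvert\boldsymbol v^{\mathsf T}(F_\Lambda-F_\Gamma)\boldsymbol v\rvert$. Write $a_y=\boldsymbol v\cdot\boldsymbol\nabla p_y$ and $b_y=\boldsymbol v\cdot\boldsymbol\nabla q_y$. The quadratic form is then $\sum_y a_y^2/p_y-\sum_y b_y^2/q_y$. I will split into two regimes according to the size of $\varepsilon_\diamond$ relative to $q_0$. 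The two entries of the maximum in $C_{\mathrm{FI}}$ correspond to these two regimes.

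\emph{Large-error regime, $\varepsilon_\diamond\geq q_0/2$.} Here I use no perturbation theory. Both CFI matrices are classical Fisher matrices of a fixed POVM applied to $\Lambda(\rho)$ or $\Gamma(\rho)$. Hence, by the Braunstein--Caves bound and the monotonicity already used in Lemma~\ref{lem:qfi-preservation}, we have $0\preceq F_\Lambda,F_\Gamma\preceq H[\rho(\boldsymbol\theta)]$. This holds with the pointwise convention (\ref{eq:pointwise-cfi-convention}), since dropping zero-probability outcomes only removes nonnegative terms. The difference of two matrices in $[0,H]$ has operator norm at most $H_{\max}$. Because $\varepsilon_\diamond/q_0\geq1/2$, this gives $H_{\max}\leq(2H_{\max}/q_0)\varepsilon_\diamond$.

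\emph{Small-error regime, $\varepsilon_\diamond<q_0/2$.} I will first establish three elementary estimates. (i) Since $0\preceq M_y\preceq\mathbb I$ and $(\Lambda-\Gamma)(\rho)$ is traceless Hermitian, we get $\lvert p_y-q_y\rvert\leq\frac12\lVert(\Lambda-\Gamma)\rho\rVert_1\leq\varepsilon_\diamond$. Consequently $p_y\geq q_0/2>0$ everywhere, so the zero-probability convention never activates. (ii) Parameter independence of the channels lets differentiation pass through them. So $a_y-b_y=\operatorname{Tr}[M_y(\Lambda-\Gamma)(\partial_{\boldsymbol v}\rho)]$ with $\partial_{\boldsymbol v}\rho=\sum_\mu v_\mu\partial_\mu\rho$ traceless Hermitian, giving $\lvert a_y-b_y\rvert\leq\varepsilon_\diamond\lVert\partial_{\boldsymbol v}\rho\rVert_1\leq\varepsilon_\diamond S_\rho$ by the triangle inequality and Cauchy--Schwarz over $\mu$. (iii) Similarly, $\lvert a_y\rvert,\lvert b_y\rvert\leq\lVert\partial_{\boldsymbol v}\rho\rVert_1\leq S_\rho$, because channels are trace-norm contractive. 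For each $y$ I then use the decomposition
\begin{equation*}
\frac{a_y^2}{p_y}-\frac{b_y^2}{q_y}
=\frac{(a_y-b_y)(a_y+b_y)}{p_y}+\frac{b_y^2(q_y-p_y)}{p_yq_y}.
\end{equation*}
The first term is at most $2S_\rho^2\varepsilon_\diamond/p_y$, and the second at most $S_\rho^2\varepsilon_\diamond/(p_yq_y)$. Summing the $K$ outcomes gives a bound of the form $KS_\rho^2\varepsilon_\diamond(c_1/q_0+c_2/q_0^2)$.

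The main obstacle is the constant bookkeeping. The crude bound $p_y\geq q_0/2$ inflates the constants to $4/q_0+2/q_0^2$, whereas the statement has $2/q_0+1/q_0^2$. To recover the stated constant, I would first try to sharpen (ii) and (iii) using the factor $\tfrac12$ that comes from tracelessness. Specifically, $\lvert\operatorname{Tr}[M_yX]\rvert\leq\frac12\lVert X\rVert_1$ for traceless Hermitian $X$ and $0\preceq M_y\preceq\mathbb I$. This halves $\lvert a_y\rvert,\lvert b_y\rvert$ and $\lvert a_y-b_y\rvert$, and the halving compensates the factor lost in $1/p_y\leq2/q_0$. If the constant still does not close exactly, the fallback is to state the bound with the slightly larger explicit constant; the structure of the argument is unchanged. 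Taking the maximum of the two regime bounds then yields (\ref{eq:fi-general-bound}) uniformly over the compact set.
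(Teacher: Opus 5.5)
Your proposal is the paper's proof, with a directional quadratic form in place of the paper's rank-one inequality $\lVert\boldsymbol a\boldsymbol a^{\mathsf T}-\boldsymbol b\boldsymbol b^{\mathsf T}\rVert_{\mathrm{op}}\le(\lVert\boldsymbol a\rVert_2+\lVert\boldsymbol b\rVert_2)\lVert\boldsymbol a-\boldsymbol b\rVert_2$; otherwise it uses the same split at $\varepsilon_\diamond=q_0/2$, the same traceless-effect bound giving $\lvert p_y-q_y\rvert\le\varepsilon_\diamond$, $\lvert a_y-b_y\rvert\le\varepsilon_\diamond S_\rho$ and $p_y\ge q_0/2$, and the same large-error sandwich $0\preceq F_\Lambda,F_\Gamma\preceq H[\rho]$. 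One bookkeeping correction: your estimate (ii) already uses the factor $\tfrac12$ and cannot be halved again, since it is attained by $\Lambda=\mathrm{id}$, $\Gamma(X)=\sigma_xX\sigma_x$, $\rho(\theta)=(\mathbb I+\theta\sigma_z)/2$, $M=\lvert0\rangle\langle0\rvert$; halving only (iii) to $\lvert a_y\rvert,\lvert b_y\rvert\le S_\rho/2$ and using $p_yq_y\ge q_0^2/2$ gives $S_\rho^2\varepsilon_\diamond\bigl(2/q_0+1/(2q_0^2)\bigr)$ per outcome, so the stated constant closes and no fallback is needed.
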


The proof, including the probability, tangent, and large-error estimates, is
given in Appendix~\ref{sec:approximate}.  The mathematical statement does not
require phase covariance.  Covariance remains a physical requirement for the
SSR-consistent implementation and for the scalar visibility-transfer
interpretation below.

\begin{corollary}[Uniform GJC-CFI stability]
\label{cor:gjc-fi-continuity}
Apply Theorem~\ref{thm:fi-continuity} to the normalized single-photon family
in Eq.~\eqref{eq:bloch-state} and the two-outcome GJC parity POVM in
Eq.~\eqref{eq:parity-probability}.  Assume that recovery and acceptance induce
a parameter-independent CPTP map on the normalized accepted single-photon
subensemble and that the acceptance probability is independent of $g$.
For $E=(E_A,E_B)$, write
\begin{equation}
\Lambda_{E_A,E_B}
=
\Lambda_{A,E_A}\otimes\Lambda_{B,E_B},
\label{eq:gjc-joint-logical-channel}
\end{equation}
and let
\begin{equation}
\varepsilon_{E_A,E_B}
=
\frac12
\left\lVert
\Lambda_{E_A,E_B}
-
\operatorname{id}_{L_AL_B}
\right\rVert{}_\diamond
\label{eq:gjc-joint-channel-error}
\end{equation}
be the error of the joint recovered logical channel relative to the identity
channel on the normalized two-logical-mode space, after any known
pattern-dependent phase correction.  Uniformly on
$\lvert g\rvert\leq g_{\max}<1$,
\begin{equation}
\left\lVert
F_{E,\mathrm{cond}}^{\mathrm{GJC}}
-
F_{0,\mathrm{cond}}^{\mathrm{GJC}}
\right\rVert{}_{\mathrm{op}}
\leq
C_F(g_{\max})\varepsilon_{E_A,E_B},
\label{eq:gjc-uniform-bound}
\end{equation}
with the explicit, generally nonoptimal constant
\begin{equation}
C_F(g_{\max})
=
\frac{16}{1-g_{\max}}
+
\frac{16}{(1-g_{\max})^2}.
\label{eq:gjc-uniform-constant}
\end{equation}
If the accepted GJC success weight is the same
$g$-independent quantity
\begin{equation}
s_0=\frac{\epsilon\eta_c\eta_a}{2}
\label{eq:gjc-success-weight}
\end{equation}
for the recovered and ideal experiments, then, to the retained order of the
weak-source model,
\begin{equation}
\left\lVert
F_E^{\mathrm{GJC}}-F_0^{\mathrm{GJC}}
\right\rVert{}_{\mathrm{op}}
\leq
s_0 C_F(g_{\max})\varepsilon_{E_A,E_B}+O(\epsilon^2).
\label{eq:gjc-mode-bound}
\end{equation}
\end{corollary}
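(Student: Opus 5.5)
This corollary follows by instantiating Theorem~\ref{thm:fi-continuity} with $\Gamma=\operatorname{id}_{L_AL_B}$ and $\Lambda=\Lambda_{E_A,E_B}$. The state family is the normalized single-photon family $\rho^{(1)}(g)=\tfrac12(\mathbb I+g_R\sigma_x-g_I\sigma_y)$, embedded in the two-logical-mode space spanned by $\lvert10\rangle,\lvert01\rangle$. The POVM is the two-outcome GJC parity POVM, so $K=2$. The conditional Fisher matrices $F^{\mathrm{GJC}}_{E,\mathrm{cond}}$ and $F^{\mathrm{GJC}}_{0,\mathrm{cond}}$ are those of the parity distribution conditioned on acceptance and success. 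The stated assumptions make this legitimate. The recovered accepted subensemble is obtained from the normalized single-photon state by a fixed CPTP map, and the receiver composed with it is still a fixed POVM on the recovered logical space. One point must be checked: the parity POVM, after conditioning on local one-photon click sectors and fixing the ancilla state $\rho_a^{(1)}$, defines a valid two-element POVM $\{M_+,M_-\}$ on the logical two-mode space. It is obtained by taking the partial trace against the ancilla and renormalizing by the $g$-independent success factor. This POVM reproduces Eq.~\eqref{eq:parity-probability}, and its elements remain positive with $M_++M_-\le\mathbb I$. I would absorb any leftover weight as a third failure outcome, or note that only the two parity outcomes enter after conditioning.

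Next I compute the three constants on the compact region $\lvert g\rvert\le g_{\max}$, which is contained in the open disk where the family is $C^1$. \textbf{Derivative constant.} Since $\partial_{g_R}\rho=\sigma_x/2$ and $\partial_{g_I}\rho=-\sigma_y/2$, each has trace norm $1$, so $S_\rho^2=2$. \textbf{QFI bound.} From Eq.~\eqref{eq:qfi-single}, the eigenvalues of $H^{(1)}$ are $1$ and $1/(1-\lvert g\rvert^2)$. Hence $H_{\max}=1/(1-g_{\max}^2)\le 1/(1-g_{\max})$. \textbf{Probability floor.} The ideal probabilities are $q_\pm=\tfrac12(1\pm\nu_a\boldsymbol u_\delta^{\mathsf T}\boldsymbol g)\ge\tfrac12(1-g_{\max})$, using $\nu_a\le1$. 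So $q_0=(1-g_{\max})/2$.

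Substituting gives $KS_\rho^2(2/q_0+1/q_0^2)=4\bigl(4/(1-g_{\max})+4/(1-g_{\max})^2\bigr)=16/(1-g_{\max})+16/(1-g_{\max})^2$. The second branch is $2H_{\max}/q_0\le 4/(1-g_{\max})^2$, which is dominated by the first. Therefore $C_{\mathrm{FI}}\le C_F(g_{\max})$, which proves Eq.~\eqref{eq:gjc-uniform-bound}. Any known pattern-dependent phase correction is a fixed unitary channel composed into $\Lambda$, so it is already covered by the theorem.

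For Eq.~\eqref{eq:gjc-mode-bound}, I would write the unconditional outcome distribution per input mode. It consists of the success weight $s_0$ times the conditional parity probabilities, together with failure outcomes whose probabilities are $g$-independent to the retained order. Failure outcomes then contribute zero Fisher information, and a constant factor $s_0$ multiplying all success probabilities scales the Fisher matrix by $s_0$. Hence $F_E^{\mathrm{GJC}}=s_0F_{E,\mathrm{cond}}^{\mathrm{GJC}}+O(\epsilon^2)$, and likewise for the ideal experiment, because the conditional parity distribution equals the parity marginal given the success event. Subtracting the two expansions and applying Eq.~\eqref{eq:gjc-uniform-bound} yields the claim.

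The main obstacle is justifying the reduction of the GJC receiver to a fixed POVM on the normalized logical subensemble with a $g$-independent acceptance weight. This is exactly where the corollary's hypotheses are used, so I would state the reduction carefully rather than attempt to derive it in general. The constant arithmetic itself is routine.
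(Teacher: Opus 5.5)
Your proposal is correct and is essentially the paper's proof. The appendix likewise inserts $K=2$, $S_\rho^2=2$, $q_0=(1-g_{\max})/2$ and $H_{\max}=1/(1-g_{\max}^2)$ into the constant of Theorem~\ref{thm:fi-continuity} and notes that the small-error entry dominates; it treats the per-mode bound as immediate, whereas you make explicit the rescaling by the common $g$-independent weight $s_0$. One caveat: only your second option for the receiver reduction reproduces $C_F(g_{\max})$, namely a two-outcome parity POVM with $M_++M_-=\mathbb I$ on the normalized accepted single-photon space, which the corollary's hypotheses supply; appending a third failure outcome would change $K$ and the floor $q_0$ and would not give the stated constant.
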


For this specialization, the parity probabilities satisfy
\(q_z\geq(1-\nu_a g_{\max})/2\), so the conservative choice
\(q_0=(1-g_{\max})/2\) is valid; moreover,
\(H_{\max}=1/(1-g_{\max}^2)\) and \(S_\rho^2=2\).
The constant is therefore finite on every fixed compact region
\(g_{\max}<1\), but diverges as the pure-state boundary is approached.

For the sharper small-error form one may replace $g_{\max}$ by
$\nu_a g_{\max}$ in Eq.~\eqref{eq:gjc-uniform-constant}, provided
$\varepsilon_{E_A,E_B}\leq(1-\nu_a g_{\max})/4$.  Fixed mixtures of GJC phases obey
the same bound because their Fisher matrices add with
parameter-independent allocation weights.

If the two local effective channels have errors $\varepsilon_{A,E_A}$ and
$\varepsilon_{B,E_B}$, the joint product channel satisfies
\begin{equation}
\varepsilon_{E_A,E_B}
\leq
\varepsilon_{A,E_A}+\varepsilon_{B,E_B}.
\label{eq:local-to-joint-diamond}
\end{equation}
Furthermore, for $g$-independent flagged-pattern weights
$w_E\geq0$ with $\sum_Ew_E=1$, suppose every pattern satisfies the same
conditional-channel assumptions and has the common success weight $s_0$.
Here $E$ abbreviates $(E_A,E_B)$ and
$\varepsilon_E=\varepsilon_{E_A,E_B}$.
Then
\begin{equation}
\left\lVert\overline F^{\mathrm{GJC}}-\overline F_0^{\mathrm{GJC}}
\right\rVert{}_{\mathrm{op}}
\leq
s_0 C_F(g_{\max})
\sum_E w_E\varepsilon_E+O(\epsilon^2).
\label{eq:flagged-average-stability}
\end{equation}
Here $\overline F_0^{\mathrm{GJC}}$ denotes the same ideal comparator averaged
with the weights $w_E$.  If a pattern instead has a different but
$g$-independent success weight $s_E$, then
\begin{align}
\Delta_E^{(s)}
&\equiv
s_EF_{E,\mathrm{cond}}^{\mathrm{GJC}}
-
s_0F_{0,\mathrm{cond}}^{\mathrm{GJC}},
\nonumber\\
\left\lVert\Delta_E^{(s)}\right\rVert{}_{\mathrm{op}}
&\leq
s_EC_F(g_{\max})\varepsilon_E
+
\left\lvert s_E-s_0\right\rvert H_{\max}.
\label{eq:gjc-success-mismatch}
\end{align}
If $s_E$ depends on $g$, the normalized accepted map is nonlinear and the
fixed-receiver theorem must instead be applied to the complete instrument
containing success, failure, rejection, and erasure-flag outcomes.

\subsection{Independent erasure and average information per incident mode}
\label{sec:average-fi}

For an unrestricted exact code, suppose each physical memory in a local block
of size $n$ is independently erased with probability $p$, and let the code
correct every pattern containing at most $t$ erasures.  This is a benchmark
model, not an achievable strictly covariant code when $t\geq1$ and the logical
charge is nontrivial, by Lemma~\ref{lem:covariant-no-go}.  The probability that
one block is correctable is
\begin{equation}
P_{\mathrm{block}}(n,t,p)
=
\sum_{k=0}^{t}\binom{n}{k}p^k(1-p)^{n-k}.
\label{eq:block-success}
\end{equation}
If the two blocks suffer independent erasures and both must be recovered, then
\begin{equation}
P_{AB}=P_{\mathrm{block},A}P_{\mathrm{block},B}.
\label{eq:two-block-success}
\end{equation}

\paragraph*{Assumption (flagged rejection of uncorrectable patterns).}
The erasure locations are known.  Trials outside the correctable set are rejected by a rule independent of $g$.  Correctable trials are decoded exactly.  The aggregate logical processing efficiency $\eta_{\mathrm{log}}$ is also independent of $g$.

Under this assumption, the average CFI per incident input temporal mode is
\begin{equation}
\overline F_\delta^{\mathrm{enc}}
=
P_{AB}\eta_{\mathrm{log}}F_\delta^{(0)},
\label{eq:encoded-average-fi}
\end{equation}
where
\begin{equation}
F_\delta^{(0)}
=
\frac{\epsilon\eta_c\eta_a}{2}
\frac{\nu_a^2\boldsymbol{u}_\delta\boldsymbol{u}_\delta^{\mathsf T}}
{1-\nu_a^2(\boldsymbol{u}_\delta^{\mathsf T}\boldsymbol{g})^2}
\label{eq:baseline-F0}
\end{equation}
is the ideal logical GJC CFI per incident input mode, including the modeled
capture and ancillary-availability factors.  It does not include reference
generation latency or failed-attempt consumption, detector capacity, reset,
or any other platform-total resource in
Appendix~\ref{sec:platform-contract}.

For an unencoded two-memory state, erasure of either local logical subsystem destroys the cross-aperture coherence.  Under the same flagged-erasure model,
\begin{equation}
\overline F_\delta^{\mathrm{unc}}
=(1-p)^2F_\delta^{(0)}.
\label{eq:uncoded-average-fi}
\end{equation}
Therefore, under a common attempt clock and the same modeled factors, the
dimensionless encoded retention exceeds the bare retention precisely when
\begin{equation}
P_{AB}\eta_{\mathrm{log}}>(1-p)^2.
\label{eq:advantage-condition}
\end{equation}
If the encoded and unencoded protocols operate at rates $R_{\mathrm{enc}}$
and $R_0$, respectively, the corresponding algebraic per-unit-time condition is
\begin{equation}
R_{\mathrm{enc}}P_{AB}\eta_{\mathrm{log}}
>
R_0(1-p)^2.
\label{eq:rate-advantage}
\end{equation}
The rates in Eq.~\eqref{eq:rate-advantage} cannot be inferred from circuit
depth alone: they must be produced by the complete schedule and common
capacity vector in Eq.~\eqref{eq:platform-information-rate}.
For approximate recovery, coherence loss and bias must be included explicitly
rather than absorbed into a success probability.  The channel-to-CFI
conversion is given by Theorem~\ref{thm:fi-continuity} and
Corollary~\ref{cor:gjc-fi-continuity}, with the proof and the special
visibility-transfer form collected in Appendix~\ref{sec:approximate}.

More generally, when every flagged pattern has its own parameter-independent approximate recovery and the flag is retained in the classical record, the exact aggregate Fisher matrix is
\begin{equation}
\begin{aligned}
\overline F_C(p)
={}&
\sum_{E_A,E_B}
p^{|E_A|+|E_B|}
\\[-2pt]
&\times
(1-p)^{2n-|E_A|-|E_B|}
F_C(E_A,E_B).
\end{aligned}
\label{eq:all-pattern-cfi}
\end{equation}
No binomial coefficient appears in Eq.~\eqref{eq:all-pattern-cfi} because the
sum runs over individual erasure sets.  Uninformative patterns contribute the
zero matrix rather than being removed by postselection.  This is the
expression used in the all-pattern numerical calculation below under the
idealized instrument of independent and identically distributed (i.i.d.) erasures with perfectly known flags.  A physical
instrument with false flags must retain the reported flag as an additional
outcome and replace these Bernoulli weights by calibrated joint branch
probabilities.

\section{Finite-size mechanism and operational realization}
\label{sec:finite-summary}

We now restrict the claims to finite memory blocks and explicitly constructed
decoders.  The asymptotic decoupling questions, complete numerical protocols,
and diagnostic tables are collected in the appendices.  The main text retains
only the results needed for the operational chain
\begin{equation}
\text{scramble}\;\to\;\text{recover}\;\to\;F^{\mathrm{GJC}}.
\label{eq:finite-operational-chain}
\end{equation}

\subsection{Leakage redistribution and decoder choice}

We use \(n=5\) as a controlled finite-size test bed.  It is the smallest odd
block with equal adjacent central charge-sector dimensions,
\(\binom52=\binom53=10\), while still permitting a nontrivial \(k=2\)
erasure test.  At this size we can exhaustively evaluate all
\(\binom52^2=100\) fixed-\(k\) two-node position pairs, all
\(32^2=1024\) two-node flagged-erasure events, and every nonempty local
compilation branch in the frozen implementation batch.  This makes the
entire finite-size chain auditable; it is not a scaling or universality claim.

We begin with the architecture-specific number-conserving result.  Fix
$k_A=k_B=2$.  For 30 complete
number-conserving brickwork trajectories and all 100 two-node erasure-position
pairs, Fig.~\ref{fig:finite-size-mechanism}(c) shows that increasing the
encoder depth suppresses the position-dependent high tail of the
task-directed tangent leakage \(\epsilon_{\rm tan}\), the root-mean-square
norm of the erased-environment visibility derivatives defined in
Eq.~\eqref{eq:tangent-leakage}.  Between depths $L=0$ and
$L=15$, its position 95th percentile (q95) falls from $1$ to $0.3234$.  Correspondingly,
Fig.~\ref{fig:finite-size-mechanism}(d) shows that the worst-position recovered
SSR-QFI retention rises from zero to approximately $0.15$.  The lower-tail
and worst-position values improve relative to the unscrambled and shallow
cases and then saturate within the sampled depths; from $L=10$ to $L=15$ both
change slightly downward.  The mean recovered SSR-QFI is also nonmonotone.
Thus the finite-size role of scrambling is risk redistribution over erasure
locations, rather than a universal increase of every information metric.

For context only, Figs.~\ref{fig:finite-size-mechanism}(a) and~\ref{fig:finite-size-mechanism}(b) compare position-averaged and position 5th percentile (q05)
QFI retentions for unrestricted local brickwork encoders at $L=3n$ with
local-Haar benchmarks, obtained from independent Haar-random local encoding
isometries.  At the sampled points $n=3,5,7$, the brickwork values
are numerically close to the corresponding Haar values and display a
finite-size drop near $k/n=1/2$.  Charge conservation is absent from these
two panels, the vertical line is only a guide, and the data establish neither
an asymptotic threshold nor correctability of the physical covariant encoder.
These unrestricted panels use 30 brickwork encoders and 60 independent
local-Haar encoders for each sampled size.  Erasure-position pairs are
enumerated when \(\binom{n}{k}^{2}\leq120\); for
\(n=7\), \(k=2,3,4,5\), 120 pairs per encoder are sampled without
replacement.  By contrast, the covariant mechanism study in panels (c,d)
uses the same 30 complete depth trajectories and all 100 two-node
erasure-position pairs at every plotted depth.

\begin{figure*}[t]
\centering
\includegraphics[width=\textwidth]{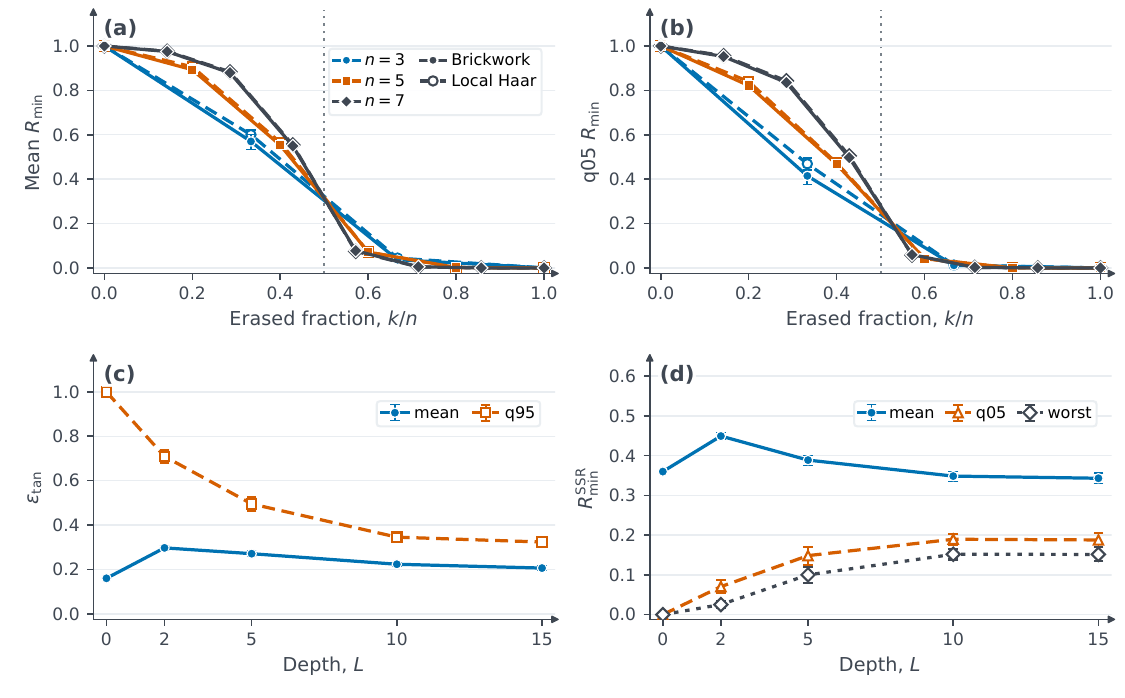}
\caption{Finite-size scrambling and recovery diagnostics under flagged
erasures.  (a) Position-mean generalized worst-direction QFI retention
\(R_{\min}\) versus the erased fraction \(k/n\).
(b) Position-q05 retention, computed within each encoder before ensemble
averaging.  Panels (a,b) compare unrestricted local brickwork encoders
(solid, filled) at \(L=3n\) with local-Haar benchmarks (dashed, open) for
\(n=3,5,7\); charge conservation is not imposed.  Each size uses 30
brickwork and 60 Haar encoders.  Position pairs are enumerated when
\(\binom nk^2\leq120\), otherwise 120 are sampled without replacement.
The line at \(k/n=1/2\) is a finite-size guide, not an erasure threshold.
(c) Position mean and q95 of the task-directed tangent leakage
\(\epsilon_{\mathrm{tan}}\) versus number-conserving brickwork depth.
(d) Position mean, q05, and worst-position recovered SSR-QFI retention
\(R_{\min}^{\mathrm{SSR}}\), normalized to the no-loss SSR baseline.
Panels (c,d) use local \(U(1)\)-covariant encoders with
\(n=5\), \(k_A=k_B=2\), 30 complete encoder trajectories, and all
100 two-node erasure-position pairs.  Increasing depth suppresses the
high-leakage tail and improves lower-tail and worst-position recovery over
the shallow cases, but the means are nonmonotone and the improvement
saturates within the sampled depths.  Whiskers in all panels are pointwise
95\% percentile cluster-bootstrap intervals from 5000 resamples.  Panels
(a,b) resample complete encoders jointly across all plotted \(k\) values
within each family and \(n\); panels (c,d) resample complete five-depth
encoder trajectories jointly across depth.  Erasure positions within an
encoder are not treated as independent replicates.  The \(n=7\) intervals
are conditional on the frozen 120-position samples, and none of the
intervals is a simultaneous confidence band.}
\label{fig:finite-size-mechanism}
\end{figure*}

Having identified the leakage--recovery tradeoff, we next close the
operational gap between retained QFI and an implementable GJC receiver.  For
operational recovery we use a local, pattern-dependent,
$U(1)$-covariant CPTP map optimized for the two logical coherence tangent
operators.  On an independent validation set at $n=5$, $L=10$, and
$k_A=k_B=2$, the tangent-weighted recovery improves the mean QFI and GJC-CFI
retentions by $5.50\%$ and $5.15\%$, respectively, relative to the
maximum-entanglement-fidelity recovery [Fig.~\ref{fig:operational-recovery}(a)].
Its worst-position QFI retention improves by $14.98\%$
[Fig.~\ref{fig:operational-recovery}(b)].  Encoder-paired absolute
Tangent-minus-Max-fid.\ differences are \(0.01861\)
\([0.01643,0.02071]\) for mean QFI, \(0.01843\)
\([0.01617,0.02064]\) for mean GJC-CFI, \(0.01637\)
\([0.01259,0.02033]\) for worst-position QFI, and \(0.01747\)
\([0.01342,0.02169]\) for worst-position GJC-CFI, where brackets are
pointwise 95\% percentile-bootstrap intervals.

A fixed uniform four-phase GJC receiver is used throughout the final
conditional-retention calculation.  As shown in
Fig.~\ref{fig:operational-recovery}(c), it changes
the mean measurement efficiency only from $0.4754$ to $0.4761$ relative to
the conventional two-phase receiver, but raises the mean within-encoder fifth
percentile from $0.3813$ to $0.4295$ and the mean within-encoder sampled
minimum from $0.3222$ to $0.4100$.  Its benefit
is therefore reduced worst-direction measurement mismatch relative to the
QFI retained after recovery, rather than a substantial mean-information
gain.  The paired four-phase-minus-two-phase changes are \(0.000632\)
\([0.000559,0.000700]\) for the mean, \(0.04815\)
\([0.04015,0.05579]\) for q05, and \(0.08786\)
\([0.07975,0.09613]\) for the sampled minimum.  The mean within-encoder
fifth percentile of the
\emph{absolute} CFI retention changes from $0.05585$ to $0.05504$ and hence
does not improve; its paired change is \(-0.000813\)
\([-0.001432,-0.000147]\).  The four-phase claim is specifically a
lower-tail robustness gain in measurement efficiency, not a large mean gain
or a uniform gain in absolute CFI\@.  Detailed
numerical tables, compatibility checks, and all-pattern comparisons are
provided in Appendix~\ref{sec:numerics}.

The recovery ablation uses 10 held-out encoders, whereas the receiver-design
comparison uses a separate eight-encoder set.  In both cases, summaries are
formed within each encoder before ensemble averaging.  Whiskers in
Fig.~\ref{fig:operational-recovery} are pointwise 95\% percentile-bootstrap
intervals for the encoder-level means.  Method-difference intervals use
joint resampling of the same encoders within each comparison.  Each analysis uses 5000 draws from two fixed pseudorandom-number-generator
(PRNG) streams; their roles and integer initialization values are recorded in
the reproducibility archive.  Erasure positions
and position--visibility cells are not treated as independent replicates.
The intervals are pointwise and carry neither simultaneous coverage nor a
multiplicity correction.

\begin{figure*}[t]
\centering
\includegraphics[width=\textwidth]{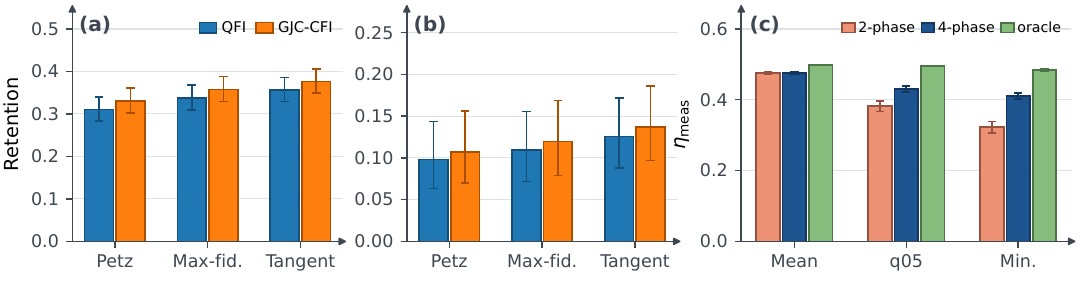}
\caption{Operational validation of local recovery and fixed-phase GJC
readout.  (a) Mean SSR-QFI and GJC-CFI retention for the transpose/Petz,
SSR-covariant maximum-entanglement-fidelity (Max-fid.), and
tangent-weighted (Tangent) recovery maps.
(b) Corresponding worst-position retention, obtained by minimizing over
erasure-position pairs within each encoder and then averaging the 10
independent encoder summaries.  Panels (a,b) use
\(n=5\), \(L=10\), \(k_A=k_B=2\), and all 100 position pairs; QFI and
CFI are normalized separately to their lossless baselines.
(c) Relative measurement efficiency of fixed two-phase, uniform four-phase,
and 16-point phase-grid-oracle receivers with the tangent-weighted decoder
frozen.  The mean, q05, and minimum are formed within each of eight
independent encoders over 100 position pairs and 25 visibilities before
averaging.  The oracle is a discrete
phase-informed benchmark, not an implementable fixed receiver or a
continuous true-phase oracle.  In every panel, whiskers denote pointwise
95\% percentile-bootstrap intervals for encoder-level means.  The paired
Tangent-minus-Max-fid.\ and four-phase-minus-two-phase intervals jointly
resample the same encoders within each comparison.  Each analysis uses 5000
draws; erasure positions and visibility-grid cells are not resampled as
independent observations.  The intervals are not simultaneous and have no
multiplicity correction.  All quantitative axes include zero.}
\label{fig:operational-recovery}
\end{figure*}

\subsection{Common-gate-set encoder online-gate costs and compiled recovery}

Panels~\ref{fig:operational-recovery}(a,b) are a controlled \(L=10\),
fixed-\(k\) decoder ablation; they do not select the final encoder.  The
all-pattern candidate comparison instead retains every flagged event and
uses equal statistical budgets: 30 paired \(L=2/L=10\) trajectories and 30
charge-Haar target pairs.  A seed audit found that, within each sample,
the trajectory-node-\(B\) and charge-Haar-node-\(A\) generators had been
assigned the same integer seed.  Their transformations consume random
numbers differently, but independence is not established.  We therefore
withdraw the earlier independent-ensemble Haar interval and use the
sample-index-paired follow-up below.  The screen remains a descriptive
candidate-selection calculation (Table~\ref{tab:all-pattern-rate}); equal
statistical budget alone does not imply equal state-preparation or hardware
cost.

The frozen encoder-online-gate follow-up uses one common abstract
nearest-neighbor \(U(1)\) gate alphabet.  Each node contains five storage
qubits with two fixed ancillary excitations and no additional synthesis work
ancilla.  Starting from
\(V_{\rm in}=(|01100\rangle,|11100\rangle)\), the compiler targets only the
two codeword columns needed for the logical isometry, not a complete
charge-sector Haar unitary.  All 60 frozen single-node charge-Haar targets
and one symmetric-Dicke isometry reused at the two nodes pass the depth-\(7\)
acceptance test.  The maximum coherent-isometry infidelity is
\(8.19\times10^{-12}\), the maximum phase-aligned Frobenius error is
\(2.86\times10^{-6}\), and the maximum absolute target-versus-circuit
worst-work-point \(R_C\) difference is \(1.78\times10^{-6}\).

For the newly evaluated controls, the 30 depth-\(7\) brickwork pairs, 30
compiled charge-Haar pairs, and one compiled Dicke pair comprise 122 local
codewords.  Recomputing all 31 nonempty local recovery branches therefore
requires \(122\times31=3782\) local semidefinite programs (SDPs).  The frozen depth-\(2\) recovery records are reused from the accepted
candidate-comparison calculation and are not
included in this count.  Each encoder pair retains 1024 two-node flagged
events, five visibility points, and the complete erasure-probability grid.
At \(p=0.30\), the mean retentions are \(0.561392\), \(0.509252\),
\(0.506859\), and \(0.418808\) for \(L=2\), \(L=7\), compiled charge-Haar,
and compiled Dicke, respectively.  A 10,000-draw sample-paired cluster
bootstrap with a max-\(t\) correction over the prespecified five-contrast
family gives \(L2-L7=0.052140\)
\([0.040052,0.064228]\), \(L2-\mathrm{cH7}=0.054533\)
\([0.038824,0.070243]\), and \(L2-\mathrm{D7}=0.142584\)
\([0.134519,0.150649]\), where brackets are simultaneous 95\% intervals for
the prespecified contrast family; the Dicke comparator is held fixed.
Thus \(L=2\) is a sampled encoder-online-gate Pareto candidate at the primary
work point for these realized circuits: it uses depth 2 and four two-qubit
gates per node, versus depth 7 and 14 gates for each control.  The depth-7
costs are verified feasible upper bounds, not minimal-synthesis
certificates.  Recovery, reference, latency, noise, flags, feed-forward,
reset, and detector resources remain unmatched.  Offline compiler search and
continuous-control precision are also excluded; detailed definitions and the
complete ledger appear in Appendix~\ref{sec:abstract-encoder-resource}.

The compiled-decoder instance uses a previously frozen depth-$2$
implementation batch generated from PRNG initialization values distinct from
those used for either candidate ensemble in the equal-budget and
common-gate-set comparisons.  Distinct initialization values alone are not
treated as proof of statistical independence.  The batches are analyzed
separately and are not pooled.  Their separately estimated Monte Carlo means at
\(p=0.30\) are \(0.561392\) in the equal-budget candidate comparison and
\(0.558045\) in the implementation batch used below; their difference is not
a compilation loss.
The implementation batch contains
number-conserving encoders on two $n=5$ memory blocks.  For each of 30
separately generated encoder pairs, the 31 nonempty local erasure sets at
each node are
recovered separately, giving \(30\times2\times31=1860\) nontrivial local
branches.  Every branch is represented by a support-pruned Stinespring
isometry, namely an isometric dilation of the recovery channel, and compiled
into a linear nearest-neighbor circuit of two-qubit \(U(1)\)-conserving
gates.  These gates are ``native'' only to the chosen
abstract gate alphabet; they are not claimed to be native to a calibrated
experimental platform.  All branches pass numerical checks of the Choi
matrix, complete positivity, and trace preservation.  The Choi matrix is the
standard operator representation of a quantum channel.  Eighteen branches
have a false optimizer termination flag but pass a stricter posterior residual
criterion with normalized Choi-matrix Frobenius residual below $10^{-7}$; the largest such
residual is $4.26\times10^{-8}$.  The largest residual over the complete
ensemble is $9.90\times10^{-5}$ and belongs to a normally terminated branch.

Each encoder pair retains the complete Cartesian product of the two local
flagged-erasure sets in Eq.~\eqref{eq:all-pattern-cfi}.  At zero imposed
effective attenuation, the minimum over the ten tested erasure probabilities
of the
30-pair mean within-pair worst-visibility CFI ratio is
$0.999999987$ for gate-set-compiled versus compressed recovery,
$1.000000004$ for compressed versus exact recovery, and
$0.999999995$ for gate-set-compiled versus exact recovery.  Deviations of
order $10^{-8}$ around unity are numerical solver and reconstruction
precision, not evidence of a physical information gain.  The
event-probability-weighted expected parallel decode depth is largest near
intermediate erasure probabilities; at $p=0.3$ its mean over 30 encoder pairs
is $5.0976$.  Full depth, gate-count, and verification statistics appear in
Appendix~\ref{sec:native-hardware}.

\section{Sampled conditional retention and partial-timing comparison}
\label{sec:conditional-window}

For the separately frozen \(L=2\) implementation batch---not for all designs
in the encoder-online-gate comparison---we use a transparent synthetic
output-coherence attenuation envelope.  After ideal compiled recovery, the
off-diagonal elements of a local branch containing \(m_E\) abstract compiled
two-qubit gates are multiplied by \((1-2e)^{m_E}\), where \(e\) indexes the
output-coherence attenuation.  No physical phase-flip or other microscopic
channel is inserted after each gate, and \(e\) is not a calibrated per-gate
error probability.
The two nodes are assumed to operate in parallel.  If one compiled gate layer
is assigned a fraction \(r=\tau_2/T_{\mathrm{bare}}\) of an otherwise
unspecified bare protocol cycle, the partial timing discount is
\begin{equation}
\begin{aligned}
\gamma(p,r)
&=
\left\{
1+r\!\left[
D_{\mathrm{enc}}
+\mathbb E_p\max(D_A,D_B)
\right]
\right\}^{-1},
\\[-2pt]
D_{\mathrm{enc}}&=2.
\end{aligned}
\label{eq:main-rate-envelope}
\end{equation}
Here \(D_A,D_B\) are the realized local compiled-decoder depths and
\(\mathbb{E}_p\) averages independent Bernoulli erasures at the two nodes.
The envelope uses one common synthetic attenuation index and a multiplicative
output factor; the resulting coherence factors are branch dependent through
\(m_E\).  It excludes amplitude damping, leakage,
crosstalk, correlated errors, and false erasure flags.  Its serial time
consists only of the encoder depth and the larger of the two decoder depths;
it excludes ancillary
reference preparation, reset, flag acquisition, classical feed-forward,
retrieval, and detection latency.
The ratios below are normalized to the ideal post-capture GJC CFI, so the
common factor \(\epsilon\eta_c\eta_a/2\) cancels; ``per captured mode'' is
conditional on reaching the encoded interface and is neither per incident
mode nor a fixed-total-resource comparison.  This cancellation assigns the
same one-reference-photon factor to encoded and bare trials; it does not make
the reference free, and the cancellation is invalid for a wall-clock
comparison if reference or detector supply is a bottleneck.
The conditional normalized excess retention, with this partial timing
discount, is
\begin{equation}
G(p,e,r)
=
\gamma(p,r)R_C(p,e)-(1-p)^2,
\label{eq:main-net-gain}
\end{equation}
where $R_C$ is the generalized worst-direction GJC-CFI ratio at the worst of
five visibility work points.

For every encoder pair in the frozen implementation batch, \(R_C\) retains
every two-node flagged-erasure event with its i.i.d.\ model probability.  The accepted simulator calculation is a model-only computational
reproduction using the fixed archived streams on the same 30 complete
encoder-pair/compilation realizations as the frozen parent calculation.
Those complete realizations are treated as exchangeable clusters.  The 1024
erasure events, five visibility work points, and all \((p,e,r)\) evaluations
within a realization are deterministic repeated evaluations, not independent
samples.  For the data-informed, post hoc conditional-gain reanalysis,
20,000 realization-cluster bootstrap draws use a fixed pseudorandom stream
recorded in the reproducibility archive,
NumPy's \texttt{linear} quantile rule, and the same resampled
realization-index vector across all 360 cells in each draw.  A two-sided
studentized max-\(t\) statistic supplies the common critical value
\(c_{.95}=2.616920\).

Across the evaluated family, 191\ cells have positive
simultaneous lower endpoints and 130\ cells have
negative simultaneous upper endpoints.  The resulting intervals are nominal
finite-sample approximate simultaneous 95\% intervals for that finite
360-cell family; finite-sample coverage is not guaranteed.  They provide no
coverage between evaluated grid points and do not certify a continuous
region in \(p\), \(e\), or \(r\).
The displayed $r_{\rm crit}$ is a plug-in sensitivity summary: the
30-pair mean $R_C$ and mean expected parallel decode depth are inserted into
Eqs.~\eqref{eq:main-rate-envelope}--\eqref{eq:main-net-gain}, and
$r_{\rm crit}$ is their zero crossing.  Thus the plug-in envelope is positive
for $0\leq r<r_{\rm crit}$, not at the crossing itself.  Because the rate
factor is nonlinear in depth, this plug-in crossing need not equal the exact
zero of the pairwise-averaged gain used for the bootstrap intervals.

The max-\(t\) correction applies to the pairwise mean \(G\) values on the
sampled grid.  It does not turn the plug-in \(r_{\rm crit}\) curves into
confidence limits or continuous boundaries.

Figure~\ref{fig:conditional-window} separates four nested diagnostics.
Panel (a) shows the all-pattern information retained after compiled recovery
and the effective output attenuation, before the partial timing factor
\(\gamma\) is applied.  Panel (b) resolves the zero-noise
gate-set-compiled-to-compressed compilation
loss on an explicitly magnified scale; the prespecified ratio floor \(0.99\)
would appear at \(10^6\) on that scale and is therefore outside the plotted
range.  Panel (c) gives the supremum of
nonnegative gate-layer times with positive plug-in gain, evaluated through
its zero crossing; a dash at zero means
that the plug-in gain is already nonpositive at $r=0$.  Panel (d) fixes
\(p=0.3\) and resolves the sampled dependence on both sensitivity-envelope
parameters.  Cell colors encode pairwise mean \(G\); marker classes are
determined by the simultaneous lower endpoints from the full 360-cell
conditional-gain family.  Thus information retention, numerical compilation accuracy,
partial timing feasibility, and the sampled two-parameter comparison remain
visually distinct.

\begin{figure*}[t]
\centering
\includegraphics[width=\textwidth]{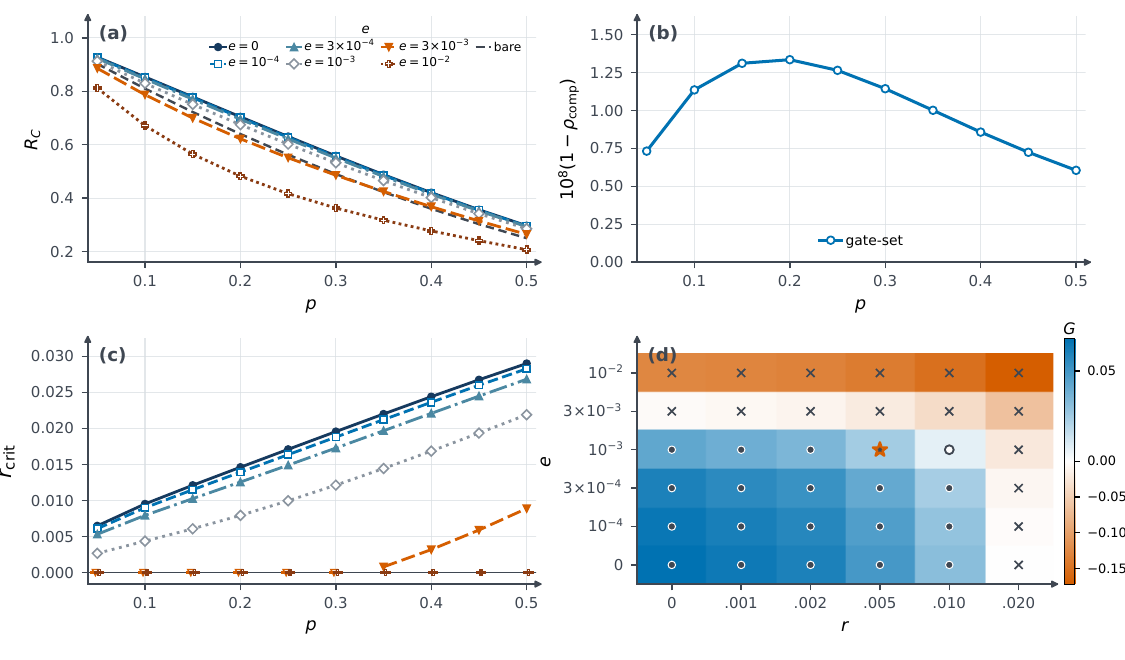}
\caption{Gate-set compilation and conditional-retention diagnostics
for the frozen implementation batch of 30 separately generated \(n=5\),
\(L=2\) encoder pairs.  This batch uses PRNG initialization values distinct from those of the
equal-statistical-budget candidate and common-gate-set comparisons in
Table~\ref{tab:all-pattern-rate}; the batches are analyzed separately and
not pooled, without inferring statistical independence from that distinction
alone.  Every encoded point includes
every flagged-erasure event with its i.i.d.-model probability weight and uses
the worst of five visibility work points.
(a) All-pattern GJC-CFI ratio \(R_C(p,e)\) before the processing-rate factor.
Here \(e\) indexes a synthetic output-coherence attenuation: for each branch,
the off-diagonal elements of the recovered compiled logical image are
multiplied by \((1-2e)^{m_E}\), where \(m_E\) is the abstract compiled
recovery-gate count.  This is not a phase-flip channel inserted after each
gate, and \(e\) is not a calibrated platform error probability.  The charcoal
dashed curve is bare storage \((1-p)^2\).
(b) Magnified zero-noise gate-set-compiled-to-compressed compilation loss
\(10^{8}(1-\rho_{\rm comp})\).  The in-panel label `gate-set' denotes
compilation in the chosen abstract gate alphabet, not a calibrated
platform-native implementation.  Deviations on this scale are
numerical compilation residuals, not physical information gain.
(c) Plug-in zero crossing \(r_{\rm crit}\) formed from the 30-pair mean
\(R_C\) and mean expected decode depth; it is not the zero of the
pairwise-averaged bootstrap estimand.  A crossed marker at \(r=0\) is a
sentinel for an envelope already nonpositive at zero.  Sentinels are not
connected to later positive roots, so the resulting break is intentional.
(d) Pairwise mean net gain \(G\) on the sampled \((e,r)\) grid at \(p=0.3\).
Filled circles mark the 19\ cells with positive
simultaneous 95\% lower endpoints; open circles mark the
1\ additional cell with a positive mean but
nonpositive simultaneous lower endpoint; crosses mark the
16\ cells with nonpositive means.  The orange
outlined star highlights \((e,r)=(10^{-3},0.005)\); its inferential class is
the simultaneous-positive (filled-circle) class.  The intervals are nominal finite-sample
approximate simultaneous 95\% intervals obtained with 20,000 draws from the
two-sided studentized realization-cluster bootstrap max-\(t\) procedure over
all 360 conditional-gain cells frozen for the post hoc reanalysis.
Finite-sample coverage is not guaranteed.  The marker classes concern
evaluated cells only; neither symbols nor connecting curves define an
interpolated or continuous confidence boundary.  The comparison is
conditional per captured mode, uses only the partial timing factor in
Eq.~\eqref{eq:main-rate-envelope}, and is not normalized by incident modes,
wall-clock time, or fixed total resources.  It is therefore not a calibrated
hardware threshold or platform advantage.}
\label{fig:conditional-window}
\end{figure*}

At \(e=10^{-3}\) and \(r=0.005\), the evaluated erasure-probability set with
positive simultaneous 95\% lower endpoints is
\(\{0.20,\,0.25,\,0.30,\,0.35,\,0.40,\,0.45,\,0.50\}\), whereas the set without a positive
simultaneous lower endpoint is
\(\{0.05,\,0.10,\,0.15\}\).  At the target
\((p,e,r)=(0.30,10^{-3},0.005)\), the pairwise mean gain is
\(G=+0.023976\), with simultaneous 95\% interval
\([0.013095,\,0.034857]\) and inferential class
the simultaneous-positive (filled-circle) class.  At \(e=3\times10^{-3}\) and \(r=0.005\), the
corresponding simultaneous-positive and not-simultaneous-positive evaluated
\(p\)-sets are \(\varnothing\) and
\(\{0.05,\,0.10,\,0.15,\,0.20,\,0.25,\,0.30,\,0.35,\,0.40,\,0.45,\,0.50\}\), respectively.  These
statements concern only the ten frozen values
\(p\in\{0.05,0.10,\ldots,0.50\}\); they do not define a continuous or
platform-calibrated threshold.

Table~\ref{tab:main-conditional-window} gives the numerical $p=0.3$ slice at
$r=0.005$.  Here
$\gamma_{\mathrm{crit}}=(1-p)^2/R_C$ is the minimum processing-rate ratio
required for positive plug-in gain.

\begin{table*}[t]
\caption{Simultaneous conditional-retention comparison on the \(p=0.3\),
\(r=0.005\) slice of the frozen synthetic grid, using the partial timing
proxy.  \(e\) is the output-coherence attenuation index defined in the text,
not a calibrated per-gate channel probability.  \(R_C\) is the mean over the
same 30 complete encoder-pair/compilation realizations of the
worst-work-point CFI ratio, and
\(\gamma_{\mathrm{crit}}=(1-p)^2/R_C\) is the plug-in rate-ratio requirement.
The final two columns give the pairwise mean gain in
Eq.~\eqref{eq:main-net-gain} and its nominal finite-sample approximate
simultaneous 95\% interval.  The two-sided studentized cluster-bootstrap
max-\(t\) critical value is computed over all 360 conditional-gain cells
frozen for the post hoc reanalysis, not only the four rows shown.}
\label{tab:main-conditional-window}
\begin{ruledtabular}
\begin{tabular}{ccccc}
synthetic \(e\) & \(R_C\) & \(\gamma_{\mathrm{crit}}\) & \(G\) &
simultaneous 95\% interval\\
\hline
\(0\) & 0.558045 & 0.878065 & +0.048930 & $[0.037936,\,0.059923]$\\
\(10^{-3}\) & 0.532205 & 0.920697 & +0.023976 & $[0.013095,\,0.034857]$\\
\(3\times10^{-3}\) & 0.485664 & 1.008927 & -0.020969 & $[-0.031607,\,-0.010331]$\\
\(10^{-2}\) & 0.362879 & 1.350312 & -0.139546 & $[-0.149110,\,-0.129982]$\\
\end{tabular}
\end{ruledtabular}
\end{table*}

Equation~\eqref{eq:main-net-gain} compares one encoded logical mode with one
bare logical mode and assumes that the four additional memories would not
otherwise store independent astronomical modes.  We therefore also evaluate
the conservative memory-only \(1/n=1/5\) throughput estimand for the same
frozen \(L=2\) batch and the same 30 complete
encoder-pair/compilation-realization clusters.  It is a separate 360-cell data-informed, post hoc family analyzed with
20,000 draws from a fixed pseudorandom stream, NumPy's \texttt{linear}
quantile rule, and its own two-sided studentized cluster-bootstrap max-\(t\)
critical value.  The reproducibility archive records the stream role and integer PRNG
initializer; the integer is an implementation identifier rather than a
calendar date or preregistration label.  It has
0\ cells with positive simultaneous lower endpoints
and 360\ cells with negative simultaneous upper
endpoints.  The two 360-cell families are bootstrapped separately, and no
joint 720-cell coverage is claimed.  Their nominal finite-sample approximate
simultaneous intervals are not pooled.  Thus this frozen implementation batch
does not beat five parallel bare memories over the evaluated
\(p\in\{0.05,0.10,\ldots,0.50\}\) grid when the bare memories are granted
sufficient reference and detector throughput.  This remains a memory-only
upper benchmark, not a fixed-total-resource result; a system comparison would
require a justified capture bottleneck or spare memory, or a higher-rate
architecture consistent with covariance--accuracy tradeoffs.

\section{Discussion: hierarchy of evidence and limitations}
\label{sec:main-limitations}

The claim set has three distinct evidence levels.  At the analytical level,
exact recovery is used only as a benchmark, strict covariant exact correction
is excluded by Lemma~\ref{lem:covariant-no-go}, and
Theorem~\ref{thm:fi-continuity} proves a receiver-level consequence of any
uniform logical diamond-error bound.  The appendices additionally give the
unrestricted fixed-pattern Haar estimate, the exact one-reference-photon SSR
factor, and a causal-depth necessary condition.  None of those results proves
that the specified finite-depth covariant ensemble achieves uniform
decoupling over all erasure patterns.

At the numerical level, the scaling study reaches only $n\leq7$, and the
compiled-pipeline calculation uses $n=5$.  The near-half crossover and closeness to
Haar or charge-Haar samples are finite-size trends, not convergence,
capacity, or threshold results.  The $L=10$, fixed-$k$ recovery comparison is
a decoder ablation, whereas retaining the final \(L=2\) candidate is supported
by an equal-statistical-budget all-pattern screen followed by the common-gate
encoder-online-gate calculation.  The latter compiles two-column charge-Haar
and Dicke isometries and recomputes the all-pattern recovery, but it does not
compile the recoveries of every comparison design.  Its depth-\(7\) values
are upper bounds from successful synthesis, not lower bounds or proofs of
minimal circuit depth.  The seed audit also precludes treating the frozen
trajectory and charge-Haar samples as independent; all corresponding resource-accounted contrasts resample complete
sample-index clusters jointly.  Both calculations use PRNG initialization values distinct from those of
the frozen \(L=2\) implementation batch used for recovery compilation,
Fig.~\ref{fig:conditional-window}, and the synthetic-sensitivity tables; the
batches are analyzed separately and not pooled.
The simulations and the present numerical
certification evaluate the GJC-CFI directly; they neither estimate nor report
a uniform diamond-norm error for the recovered logical channel.
Consequently, Corollary~\ref{cor:gjc-fi-continuity}
establishes the rigorous conversion needed once such a channel estimate is
available; it is not invoked as an a posteriori certification of the plotted
numerical CFI\@.

At the conditional engineering level, erasure locations are assumed to be
flagged.  Unflagged amplitude damping, dephasing, and false erasure flags
require a different recovery problem.  The sensitivity map uses a synthetic
output-level attenuation and normalized abstract gate-layer time; it omits
encoder and idle noise, calibrated leakage, crosstalk, flag error and latency,
reference preparation and consumption, feed-forward, retrieval, detection,
and reset.  The clean simulator rerun reproduces the
matched \(n=5,L=2\) calculation under frozen synthetic inputs and performs a
data-informed, post hoc full-grid simultaneous reanalysis on the same 30
complete realizations.  It is a fixed-stream computational reproduction, not an independent
statistical replication or a prospectively specified
confirmatory analysis.

A platform-total comparison is deliberately outside the present claim set.
The networked SiV experiment demonstrates signal-storage and nonlocal
heralding capabilities~\cite{Stas2026}, whereas the cold-atom experiment
stores and retrieves the ancillary Fock-state reference used for GJC
interference~\cite{Wang2026}.  These demonstrations provide complementary
architectural context, but they instantiate different memory roles and are
not treated as calibration snapshots for the \(n=5\) encoded signal-memory
architecture studied here.  Appendix~\ref{sec:platform-contract} therefore
states the platform inputs and provenance needed before
\(\mathcal J_\Pi\) can be evaluated; no hardware result is inferred from the
literature comparison.

Approximate random \(U(1)\)-covariant codes and asymptotic erasure-error
scaling in fixed-logical-size regimes are already
known~\cite{Faist2020,Kong2022}.  What remains unresolved is a finite-depth
local logical-channel bound with an efficient decoder and a
platform-complete, reference-accounted comparison.  The present construction
has rate \(1/n\); its negative memory-only result is not promoted to a
fixed-total-resource comparison.

The appendices make this hierarchy auditable by collecting the proofs,
finite-depth questions, full finite-size protocols, decoder and measurement
ablations, all-pattern statistics, gate-set compilation diagnostics, and the
detailed QFI and GJC derivations.

\section{Conclusion}

Quantum-assisted long-baseline interferometry depends on preserving the
nonlocal complex visibility after weak astronomical light has been coherently
captured.  Flagged memory loss is especially damaging because the erased
subsystem can reveal which-node information, while strict finite-dimensional
\(U(1)\) covariance prevents uniform exact correction of every single-memory
erasure.  This work asked whether local number-conserving scrambling, followed
by recovery conditioned only on the heralded erasure pattern, can protect the
visibility that is ultimately accessible to a fixed GJC receiver.

The analytical results establish the relevant hierarchy of guarantees.  Exact
local recovery restores the full two-parameter visibility family and therefore
all subsequent receiver statistics, whereas the covariance obstruction rules
out such uniform exact protection for the finite encoding considered here.
For approximate recovery, the channel-to-Fisher-information stability theorem
converts a uniform logical-channel error bound into a uniform bound on the
classical Fisher information delivered by the fixed receiver over compact
visibility regions.  It therefore links a coding-level statement to the
operational quantity used for astronomical estimation.

The finite-size calculations and circuit construction complete the intended
operational chain within the stated model.  Number-conserving scrambling
redistributes erasure-location risk and suppresses the most damaging leakage
events; task-weighted recovery returns part of the surviving information to a
fixed local readout; and an equal-budget comparison selects a shallow encoder
among the tested finite designs.  Every nontrivial recovery branch in a
separate implementation batch is also resolved into abstract
nearest-neighbor number-conserving gates.  These results demonstrate a
coherent finite-size mechanism rather than an asymptotic coding theorem.

The present low-rate construction does not improve throughput when the total
number of memories is fixed.  This study is not a platform-calibrated
comparison: existing SiV signal-storage and cold-atom ancillary-reference
demonstrations realize complementary components rather than the same encoded
architecture~\cite{Stas2026,Wang2026}.  The main significance is therefore
architectural: spare local memory capacity can be converted into protection
of astronomical coherence without abandoning locality, photon-number
symmetry, or an explicit receiver.  The next step is to develop higher-rate
finite-depth covariant codes with efficient recovery and to validate them in
a fully calibrated end-to-end platform comparison.

\section*{Data and code availability}

The numerical data, source code, executed notebooks, and computational-
environment specifications supporting this study are available in the
associated reproducibility archive.  No unpublished experimental data were
used.

\appendix

The appendices are organized as follows.  Appendices A--D give the covariant-QEC, approximate-recovery, finite-depth,
and phase-covariance arguments.  Appendix E contains the complete numerical
protocols and ablations.  Appendices F--G state the detailed limitations and
open problems, while Appendices H--J collect the QFI, GJC-receiver, and
operator-space derivations used in the main text.

\section{Covariant-QEC context and the architecture-specific open regime}
\label{sec:decoupling}

Volume-law entanglement and large Schmidt rank are useful diagnostics of scrambling~\cite{Page1993,Hayden2007}, but neither is by itself sufficient to establish correctability.  The operational criterion is decoupling of the erased subsystem from a reference that purifies the logical input.

Let $R$ be a reference qubit and prepare $\left\lvert \Phi_{RL}\right\rangle{}=(\left\lvert 00\right\rangle{}+\left\lvert 11\right\rangle{})/\sqrt2$.  After a local encoder $V$ and a bipartition into erased subsystem $E$ and retained subsystem $Q$, define
\begin{equation}
\rho_{REQ}
=
(\mathbb{I}_R\otimes V)
\left\lvert \Phi_{RL}\right\rangle{}\!\left\langle \Phi_{RL}\right\rvert{}
(\mathbb{I}_R\otimes V^\dagger).
\label{eq:encoded-Choi-state}
\end{equation}
The decoupling error is
\begin{equation}
D_E(V)
=
\frac12
\left\lVert \rho_{RE}-\rho_R\otimes\rho_E\right\rVert{}_1.
\label{eq:decoupling-error}
\end{equation}
Exact erasure correction is equivalent to $D_E(V)=0$.  Approximate decoupling implies the existence of an approximate recovery channel through the standard decoupling and information--disturbance framework~\cite{Knill2000,Hayden2007}.

Before imposing covariance, the fixed-pattern Haar benchmark can be made
explicit.

\begin{lemma}[Unrestricted fixed-pattern Haar decoupling]
\label{lem:haar-fixed-pattern}
Let
$V:\mathbb{C}^{d_L}\rightarrow
\mathbb{C}^{d_Q}\otimes\mathbb{C}^{d_E}$
be a Haar-random isometry obtained from the first $d_L$ columns of a Haar
unitary, where $d_L\leq d_Qd_E$, and define $D_E(V)$ by
Eq.~\eqref{eq:decoupling-error} using a maximally entangled state of Schmidt
rank $d_L$.  For a fixed erased subsystem,
\begin{equation}
\mathbb{E}_V D_E(V)
\leq
\frac12
\sqrt{
\frac{
d_Ed_Q(d_E^2-1)(d_L^2-1)
}{
d_L[(d_Ed_Q)^2-1]
}
}.
\label{eq:haar-fixed-pattern-bound}
\end{equation}
For one logical qubit encoded without symmetry constraints into $n$ physical
qubits, with a fixed set of $k$ erased qubits,
\begin{equation}
\mathbb{E}_V D_E(V)
\leq
\sqrt{\frac38}\,
2^{-(n-2k)/2}
\sqrt{
\frac{1-2^{-2k}}{1-2^{-2n}}
}.
\label{eq:haar-half-scaling}
\end{equation}
If $k/n\rightarrow\alpha\in(0,1/2)$, the last factor tends to one and
the bound decays as $2^{-(1-2\alpha)n/2}$.
\end{lemma}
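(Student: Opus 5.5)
The plan is the standard decoupling-by-second-moment argument: bound the trace norm by a Hilbert--Schmidt norm, then evaluate the Haar average exactly. Write $\rho_{RE}-\rho_R\otimes\rho_E$ as an operator on a space of dimension $d_Ld_E$. Cauchy--Schwarz gives $\lVert X\rVert_1\leq\sqrt{d_Ld_E}\,\lVert X\rVert_2$. Jensen's inequality then gives
\[
\mathbb E_V D_E(V)\leq\tfrac12\sqrt{d_Ld_E\,\mathbb E_V\lVert\rho_{RE}-\rho_R\otimes\rho_E\rVert_2^2}.
\]
Since the input is maximally entangled, $\rho_R=\mathbb I/d_L$ holds exactly for every $V$. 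Hence $\lVert\rho_{RE}-\rho_R\otimes\rho_E\rVert_2^2=\operatorname{Tr}\rho_{RE}^2-\operatorname{Tr}\rho_E^2/d_L$, and only two purities need averaging.

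To compute them, note that the global state $\rho_{REQ}$ is $|\Phi\rangle$ on $R\otimes L$ followed by $V$. Equivalently, it is a uniformly random pure state $|\psi\rangle$ of the form $(\mathbb I\otimes U)(|\Phi_{RL}\rangle\otimes|0\rangle)$ on $R\otimes(\mathbb C^{d_Qd_E})$, with $U$ Haar on $D=d_Qd_E$. By purity, $\operatorname{Tr}\rho_{RE}^2=\operatorname{Tr}\rho_Q^2$. Moreover, $\rho_{QE}=V(\mathbb I/d_L)V^\dagger=U\Pi U^\dagger/d_L$, where $\Pi$ is a rank-$d_L$ projector. For $\sigma=U\Pi U^\dagger/d_L$ on $Q\otimes E$, the second-moment (Weingarten) formula $\mathbb E\,U^{\otimes2}(\Pi^{\otimes2})U^{\dagger\otimes2}=\alpha\mathbb I+\beta\,\mathrm{SWAP}$ gives
\[
\alpha=\frac{d_L^2-d_L/D}{D^2-1},\qquad \beta=\frac{d_L-d_L^2/D}{D^2-1},
\]
fixed by matching the traces with $\mathbb I$ and with SWAP. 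Applying this to the subsystem swaps yields
\[
\mathbb E\operatorname{Tr}\rho_E^2=\frac{\alpha d_Q^2d_E+\beta d_Qd_E^2}{d_L^2},\qquad
\mathbb E\operatorname{Tr}\rho_Q^2=\frac{\alpha d_E^2d_Q+\beta d_Ed_Q^2}{d_L^2}.
\]

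The main obstacle is purely algebraic: simplifying $d_Ld_E\bigl[\mathbb E\operatorname{Tr}\rho_Q^2-\mathbb E\operatorname{Tr}\rho_E^2/d_L\bigr]$ to the claimed form $d_Ed_Q(d_E^2-1)(d_L^2-1)/\{d_L[D^2-1]\}$. I would carry out this factorization carefully, expecting the factors $(d_E^2-1)$ and $(d_L^2-1)$ to emerge after collecting terms over $D^2-1$. As a sanity check, the result should vanish when $d_E=1$ and when $d_L=1$, which these factors ensure.

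For Eq.~\eqref{eq:haar-half-scaling}, substitute $d_L=2$, $d_E=2^k$, $d_Q=2^{n-k}$, $D=2^n$. The right-hand side of Eq.~\eqref{eq:haar-fixed-pattern-bound} becomes
\[
\tfrac12\sqrt{\frac{2^n\,2^{2k}(1-2^{-2k})\cdot3}{2\cdot2^{2n}(1-2^{-2n})}}
=\sqrt{\tfrac38}\,2^{-(n-2k)/2}\sqrt{\frac{1-2^{-2k}}{1-2^{-2n}}}.
\]
Finally, with $k=\alpha n$ and $\alpha\in(0,1/2)$, both $2^{-2k}$ and $2^{-2n}$ tend to zero, so the last factor tends to one and the bound decays as $2^{-(1-2\alpha)n/2}$.
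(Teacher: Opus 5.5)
Your proposal is correct and follows the paper's proof essentially step for step. You reduce $\lVert X\rVert_2^2$ to $\operatorname{Tr}\rho_{RE}^2-\operatorname{Tr}\rho_E^2/d_L$ using $\rho_R=\mathbb I/d_L$, evaluate both purities with the two-copy Haar identity, apply $\lVert X\rVert_1\leq\sqrt{d_Ld_E}\,\lVert X\rVert_2$ and Jensen, and then substitute $d_L=2$, $d_E=2^k$, $d_Q=2^{n-k}$.

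The factorization you deferred does go through. Your coefficients give
\[
\mathbb E\operatorname{Tr}\rho_Q^2=\frac{d_Ld_Q(d_E^2-1)+d_E(d_Q^2-1)}{d_L(D^2-1)},
\qquad
\mathbb E\operatorname{Tr}\rho_E^2=\frac{d_Ld_E(d_Q^2-1)+d_Q(d_E^2-1)}{d_L(D^2-1)},
\]
which are exactly the paper's expressions. Their difference $\mathbb E\operatorname{Tr}\rho_Q^2-\mathbb E\operatorname{Tr}\rho_E^2/d_L$ collapses to $d_Q(d_E^2-1)(d_L^2-1)/\{d_L^2(D^2-1)\}$, as required.
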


\begin{proof}
Put $X=\rho_{RE}-\rho_R\otimes\rho_E$.  Since
$\rho_R=\mathbb{I}_R/d_L$,
\begin{equation}
\operatorname{Tr}(X^2)
=
\operatorname{Tr}(\rho_{RE}^2)
-
\frac{1}{d_L}\operatorname{Tr}(\rho_E^2).
\label{eq:haar-purity-difference}
\end{equation}
The two-copy Haar-isometry identity gives
\begin{align}
\mathbb{E}_V\operatorname{Tr}(\rho_{RE}^2)
&=
\frac{
d_E(d_Q^2-1)+d_Ld_Q(d_E^2-1)
}{
d_L[(d_Ed_Q)^2-1]
},
\label{eq:haar-re-purity}\\
\mathbb{E}_V\operatorname{Tr}(\rho_E^2)
&=
\frac{
d_Ld_E(d_Q^2-1)+d_Q(d_E^2-1)
}{
d_L[(d_Ed_Q)^2-1]
}.
\label{eq:haar-e-purity}
\end{align}
Substitution into Eq.~\eqref{eq:haar-purity-difference} yields
\begin{equation}
\mathbb{E}_V\operatorname{Tr}(X^2)
=
\frac{
d_Q(d_E^2-1)(d_L^2-1)
}{
d_L^2[(d_Ed_Q)^2-1]
}.
\label{eq:haar-hilbert-schmidt}
\end{equation}
Because $X$ acts on a space of dimension $d_Ld_E$,
$\lVert X\rVert_1\leq\sqrt{d_Ld_E}\lVert X\rVert_2$.
Jensen's inequality then proves
Eq.~\eqref{eq:haar-fixed-pattern-bound}.  Setting
$d_L=2$, $d_E=2^k$, and $d_Q=2^{n-k}$ gives
Eq.~\eqref{eq:haar-half-scaling}.
\end{proof}

Lemma~\ref{lem:haar-fixed-pattern} establishes exponential fixed-pattern
decoupling below one half and is consistent with the observed crossover of
the unrestricted Haar benchmark.  It does
\emph{not} establish that one sampled encoder corrects all
$\binom{n}{k}$ sets simultaneously:
\begin{equation}
\sup_E\mathbb{E}_V D_E(V)
\leq
\mathbb{E}_V\max_E D_E(V)
\label{eq:average-versus-uniform-erasure}
\end{equation}
Equality need not hold, so a fixed-pattern expectation cannot be substituted
for a high-probability worst-pattern guarantee.
Nor can the unrestricted result be transferred unchanged to the astronomical
logical qubit under a $U(1)$ constraint.  The
logical states $\left\lvert0\right\rangle{}$ and
$\left\lvert1\right\rangle{}$ transform with different charges, so a covariant
encoder must preserve their relative group action.  An isometry whose entire
range lies in one fixed-total-charge sector instead represents a
charge-neutral logical subsystem and does not by itself establish protection
of the vacuum--one-excitation coherence.  We therefore do not claim an
asymptotic capacity theorem from fixed-sector dimension counting.

\paragraph*{Known covariant-code results and the remaining question.}
Continuous-symmetry versions of the Eastin--Knill obstruction already give
quantitative lower bounds on the erasure-correction accuracy of
finite-dimensional covariant codes~\cite{Faist2020}.  The charge-Haar
construction relevant to the present numerical model is also not new in
principle: random $U(1)$-covariant codes generated by Haar-random
number-conserving unitaries have been analyzed in
Ref.~\cite{Kong2022}.  In the fixed-logical-size, fixed-erasure-size regimes
treated there, both average- and worst-case purified-distance errors vanish as
$O(n^{-1})$.  Asymptotically vanishing approximate error at fixed logical
size is therefore a known covariant-code result, not a conjecture of this
work.

The present family encodes one logical occupancy qubit into $n$ physical
memories and hence has rate $1/n$; we make no nonzero-rate claim.  The
architecture-specific open question is whether a specified geometrically
local, finite-depth, number-conserving circuit and an explicit local covariant
decoder can obtain comparable task-specific recovery guarantees under
independent flagged loss, uniformly over a stated visibility region, with the
logical-channel error and circuit and reference-frame costs included.
Corollary~\ref{cor:gjc-fi-continuity} then propagates any resulting uniform
diamond-error estimate to the operational GJC CFI\@.  Any higher-rate extension
must define the number and charge range of the logical degrees of freedom and
respect the known covariance--accuracy tradeoffs, rather than posit vanishing
full-state recovery error at a fixed positive rate.

Independently of random coding, an unconstrained exact $[[n,1,d]]$ quantum
code corrects $t=d-1$ flagged erasures.  The quantum Singleton bound gives
\begin{equation}
n-1\geq2(d-1)=2t,
\label{eq:singleton}
\end{equation}
and hence
\begin{equation}
t\leq\left\lfloor\frac{n-1}{2}\right\rfloor.
\label{eq:exact-half-limit}
\end{equation}
The near-one-half value is therefore a no-cloning-compatible upper limit for
exact local erasure correction of one logical qubit~\cite{Knill2000}.  It is
an unrestricted, noncovariant exact-code benchmark: it is neither an
achievability statement for the strictly $U(1)$-covariant logical-occupancy
code, which is subject to Lemma~\ref{lem:covariant-no-go}, nor a threshold for
approximate or task-specific covariant recovery.

\section{Approximate recovery and visibility transfer}
\label{sec:approximate}

For a concrete local encoder, erasure pattern, and decoder at node $j$, define
the effective logical channel
\begin{equation}
\Lambda_{j,E_j}
=
\mathcal{R}_j^{E_j}\circ\operatorname{Tr}_{E_j}\circ\mathcal{V}_j,
\qquad
\mathcal{V}_j(X)=V_jXV_j^\dagger.
\label{eq:effective-logical-channel}
\end{equation}
The most relevant process element for interferometry is
\begin{equation}
\Lambda_{j,E_j}(\left\lvert 0\right\rangle{}\!\left\langle 1\right\rvert{})
=
\lambda_{j,E_j}\left\lvert 0\right\rangle{}\!\left\langle 1\right\rvert{}
+\Xi_{j,E_j},
\label{eq:coherence-transfer}
\end{equation}
where \(\lambda_{j,E_j}\) is the Hilbert--Schmidt projection onto
\(\lvert0\rangle\!\langle1\rvert\), so that
\(\operatorname{Tr}[\lvert0\rangle\!\langle1\rvert^\dagger
\Xi_{j,E_j}]=0\); the remainder contains leakage, bias, or
quadrature-mixing terms.  If
the effective channel is phase covariant, trace preserving on the accepted
subensemble, and unbiased, then $\Xi_{j,E_j}$ has no component that adds an
unknown offset to the logical coherence.  Writing
$\lambda_j=\lambda_{j,E_j}$ for the selected local pattern,
\begin{equation}
g_{\mathrm{out}}\simeq\lambda_A^*\lambda_B g.
\label{eq:g-transfer}
\end{equation}
Here \(\simeq\) denotes the scalar-transfer approximation obtained by
discarding the explicitly separated remainders \(\Xi_{A,E_A}\) and
\(\Xi_{B,E_B}\), not an additional weak-source approximation.
The recovered GJC CFI is then
\begin{equation}
F_{\delta,E}^{\mathrm{GJC}}
=
\frac{\epsilon\eta_c\eta_a}{2}
\frac{
\nu_a^2\left\lvert \lambda_A\lambda_B\right\rvert{}^2
\boldsymbol{u}_{\delta'}\boldsymbol{u}_{\delta'}^{\mathsf T}
}{
1-\nu_a^2\left\lvert \lambda_A\lambda_B\right\rvert{}^2
\bigl(\boldsymbol{u}_{\delta'}^{\mathsf T}\boldsymbol{g}\bigr)^2
},
\label{eq:approx-gjc-fi}
\end{equation}
For \(\lambda_A\lambda_B\neq0\), the effective receiver phase is explicitly
\begin{equation}
\delta'
=
\delta-\arg(\lambda_A^*\lambda_B)
=
\delta+\arg(\lambda_A\lambda_B^*).
\label{eq:effective-receiver-phase}
\end{equation}
Because \(E\) is flagged and each \(\lambda_{j,E_j}\) is fixed by the
encoder--recovery branch, this correction is deterministic, pattern
conditioned, and independent of \(g\).  Thus the known phase of
\(\lambda_A\lambda_B^*\) is absorbed into \(\delta'\) with the same
conjugation convention as Eq.~\eqref{eq:g-transfer}; it is not an adaptive
phase estimate extracted from the astronomical data.  Any flag-processing,
feed-forward, or phase-setting latency remains outside the rate envelope.
If either transfer amplitude vanishes, the scalar visibility-transfer term
and its CFI vanish, so \(\delta'\) is immaterial.
Equation~\eqref{eq:approx-gjc-fi} is a useful special case, but if
$\Xi_E\neq0$ the full probability model must be differentiated directly; a
scalar ``QFI retention factor'' is generally insufficient.
Theorem~\ref{thm:fi-continuity} controls this general case.

\subsection{Proof of the operational stability theorem}

\begin{proof}[Proof of Theorem~\ref{thm:fi-continuity}]
Write $\Delta=\Lambda-\Gamma$.  The map $\Delta$ is
Hermiticity preserving and trace annihilating.  If $X$ is Hermitian and
traceless and $0\leq M\leq\mathbb{I}$, the Jordan decomposition
$X=X_+-X_-$ gives
\begin{equation}
\left\lvert\operatorname{Tr}(MX)\right\rvert
\leq
\frac12\left\lVert X\right\rVert{}_1.
\label{eq:effect-traceless-bound}
\end{equation}
Consequently,
\begin{equation}
\left\lvert p_y-q_y\right\rvert
\leq
\frac12
\left\lVert\Delta(\rho)\right\rVert{}_1
\leq
\varepsilon_\diamond.
\label{eq:fi-probability-difference}
\end{equation}
Because both channels are independent of $\boldsymbol{\theta}$,
differentiation commutes with them.  Since
$\partial_\mu\rho$ is also Hermitian and traceless,
\begin{equation}
\left\lvert
\partial_\mu p_y-\partial_\mu q_y
\right\rvert
\leq
\varepsilon_\diamond
\left\lVert\partial_\mu\rho\right\rVert{}_1.
\label{eq:fi-tangent-difference}
\end{equation}
Define
$\boldsymbol{a}_y=\boldsymbol{\nabla}p_y$ and
$\boldsymbol{b}_y=\boldsymbol{\nabla}q_y$.
Equation~\eqref{eq:fi-tangent-difference} implies
\begin{equation}
\left\lVert\boldsymbol{a}_y-\boldsymbol{b}_y\right\rVert{}_2
\leq
\varepsilon_\diamond S_\rho.
\label{eq:fi-gradient-difference}
\end{equation}
Trace-norm contraction of a CPTP channel on Hermitian operators, together
with Eq.~\eqref{eq:effect-traceless-bound}, also gives
\begin{equation}
\left\lVert\boldsymbol{a}_y\right\rVert{}_2,
\left\lVert\boldsymbol{b}_y\right\rVert{}_2
\leq\frac{S_\rho}{2}.
\label{eq:fi-gradient-size}
\end{equation}

First suppose $\varepsilon_\diamond\leq q_0/2$.  Then
Eq.~\eqref{eq:fi-probability-difference} gives $p_y\geq q_0/2$.
Using
\begin{equation}
F_\Lambda-F_\Gamma
=
\sum_y
\left(
\frac{\boldsymbol{a}_y\boldsymbol{a}_y^{\mathsf T}}{p_y}
-
\frac{\boldsymbol{b}_y\boldsymbol{b}_y^{\mathsf T}}{q_y}
\right)
\label{eq:fi-rank-one-sum}
\end{equation}
and
\begin{equation}
\left\lVert
\boldsymbol{a}\boldsymbol{a}^{\mathsf T}
-
\boldsymbol{b}\boldsymbol{b}^{\mathsf T}
\right\rVert{}_{\mathrm{op}}
\leq
\left(
\left\lVert\boldsymbol{a}\right\rVert{}_2
+
\left\lVert\boldsymbol{b}\right\rVert{}_2
\right)
\left\lVert\boldsymbol{a}-\boldsymbol{b}\right\rVert{}_2,
\label{eq:rank-one-difference}
\end{equation}
the contribution of one outcome obeys
\begin{align}
&
\left\lVert
\frac{\boldsymbol{a}_y\boldsymbol{a}_y^{\mathsf T}}{p_y}
-
\frac{\boldsymbol{b}_y\boldsymbol{b}_y^{\mathsf T}}{q_y}
\right\rVert{}_{\mathrm{op}}
\nonumber\\
&\quad\leq
\frac{
\left(
\left\lVert\boldsymbol{a}_y\right\rVert{}_2
+
\left\lVert\boldsymbol{b}_y\right\rVert{}_2
\right)
\left\lVert\boldsymbol{a}_y-\boldsymbol{b}_y\right\rVert{}_2
}{q_0/2}
\nonumber\\
&\qquad+
\left\lVert\boldsymbol{b}_y\right\rVert{}_2^2
\frac{\left\lvert p_y-q_y\right\rvert}{(q_0/2)^2}
\nonumber\\
&\quad\leq
S_\rho^2
\left(
\frac{2}{q_0}+\frac{1}{q_0^2}
\right)\varepsilon_\diamond.
\label{eq:fi-one-outcome-bound}
\end{align}
Summing over the $K$ outcomes gives the first constant in
Eq.~\eqref{eq:fi-general-constant}.

For $\varepsilon_\diamond>q_0/2$, the CFI of a fixed measurement after a
parameter-independent channel is bounded in positive-semidefinite order by
the input SLD QFI:
\begin{equation}
0\preceq F_\Lambda,F_\Gamma\preceq H[\rho].
\label{eq:cfi-qfi-order}
\end{equation}
It follows that
$-\!H[\rho]\preceq F_\Lambda-F_\Gamma\preceq H[\rho]$ and hence
\begin{equation}
\left\lVert F_\Lambda-F_\Gamma\right\rVert{}_{\mathrm{op}}
\leq H_{\max}
<
\frac{2H_{\max}}{q_0}\varepsilon_\diamond.
\label{eq:fi-large-error-bound}
\end{equation}
The two error regimes together prove
Eqs.~\eqref{eq:fi-general-bound} and~\eqref{eq:fi-general-constant}.
\end{proof}

\paragraph*{Evaluation for the GJC parity receiver.}
For Eq.~\eqref{eq:bloch-state},
\begin{equation}
\begin{aligned}
\partial_{g_R}\rho_s^{(1)}&=\frac{\sigma_x}{2},
&
\partial_{g_I}\rho_s^{(1)}&=-\frac{\sigma_y}{2},
\\
\left\lVert\sigma_x\right\rVert{}_1
&=
\left\lVert\sigma_y\right\rVert{}_1
=2,
&
S_\rho^2&=2.
\end{aligned}
\label{eq:gjc-derivative-norms}
\end{equation}
The two parity probabilities satisfy
\begin{equation}
q_z(g)
\geq
\frac{1-\nu_a g_{\max}}{2}
\geq
\frac{1-g_{\max}}{2}.
\label{eq:gjc-probability-floor}
\end{equation}
Moreover, the largest eigenvalue of Eq.~\eqref{eq:qfi-single} is
\begin{equation}
\left\lVert H^{(1)}(g)\right\rVert{}_{\mathrm{op}}
=
\frac{1}{1-\lvert g\rvert^2}
\leq
\frac{1}{1-g_{\max}^2}.
\label{eq:gjc-qfi-ceiling}
\end{equation}
Substitution of $K=2$ and
$q_0=(1-g_{\max})/2$ into
Eq.~\eqref{eq:fi-general-constant} yields
Eq.~\eqref{eq:gjc-uniform-constant}; this entire small-error entry in the
maximum also dominates the large-error entry.  This proves
Corollary~\ref{cor:gjc-fi-continuity}.

\paragraph*{Error metric and postselection.}
The linear statement applies directly to the half-diamond distance.  It also
applies to the worst-case, reference-assisted channel purified distance
\begin{equation}
P_\diamond(\Lambda,\Gamma)
=
\sup_{\rho_{RA}}
P\!\left[
(\mathbb{I}_R\otimes\Lambda)(\rho_{RA}),
(\mathbb{I}_R\otimes\Gamma)(\rho_{RA})
\right],
\label{eq:channel-purified-distance}
\end{equation}
where it suffices to take
\(\dim\mathcal{H}_R=\dim\mathcal{H}_A\), because
$\frac12\lVert\Lambda-\Gamma\rVert_\diamond\leq
P_\diamond(\Lambda,\Gamma)$.
For purified distance between normalized Choi states, a conversion to diamond
distance introduces an input-dimension factor.  An entanglement infidelity
\(1-F_e\), with \(F_e\) evaluated for a specified input--reference state
against the ideal channel, generally gives only a square-root bound and cannot replace
$\varepsilon_\diamond$ linearly.

To see the last point, consider
\begin{equation}
\rho_t
=
\frac12
\left[
\mathbb{I}
+
\frac{t}{\sqrt2}(\sigma_x+\sigma_y)
\right]
\label{eq:infidelity-counterexample-state}
\end{equation}
with a fixed $\sigma_x$ measurement, and let the error channel be the
$z$ rotation
$U_\vartheta=e^{-i\vartheta\sigma_z/2}$.
At $t=0$ the ideal and rotated output states are both
$\mathbb{I}/2$, but their Fisher informations are
\begin{equation}
F_0=\frac12,
\qquad
F_\vartheta
=
\frac{(\cos\vartheta-\sin\vartheta)^2}{2}
=
\frac12-\vartheta+O(\vartheta^2).
\label{eq:infidelity-counterexample-fi}
\end{equation}
By contrast,
$1-F_e=\sin^2(\vartheta/2)=O(\vartheta^2)$.
Thus neither single-work-point state distance nor entanglement infidelity
supports the claimed linear CFI bound without additional derivative control.

Finally, a normalized accepted map is generally nonlinear when its success
probability depends on the input.  The conditional GJC corollary therefore
requires a common $g$-independent success weight.  Otherwise the fixed
finite-outcome theorem should be applied to the complete instrument containing
success, failure, rejection, and erasure-flag outcomes.

\section{Finite-depth local scramblers}
\label{sec:finite-depth}

Haar-random encoders are analytical idealizations.  A physically structured
theory must specify a local circuit ensemble.  The unrestricted benchmark may
use generic local gates, whereas the architecture-specific encoder must use
number-conserving gates compatible with Eq.~\eqref{eq:covariant-encoding}.  In
either case write
\begin{equation}
V_j(L)=U_j^{(L)}U_j^{(L-1)}\cdots U_j^{(1)}V_j^{(0)},
\label{eq:brickwork-encoder}
\end{equation}
or a chaotic local Hamiltonian evolution
\begin{equation}
V_j(t)=e^{-itH_j}V_j^{(0)}.
\label{eq:chaotic-encoder}
\end{equation}

\begin{lemma}[Causal-depth obstruction to uniform exact erasure recovery]
\label{lem:causal-depth}
Suppose the logical qubit in $V_j^{(0)}$ is localized at one site of a
one-dimensional chain, all other inputs are fixed ancillas, and
$U_j^{(1)},\ldots,U_j^{(L)}$ form a depth-$L$ nearest-neighbor circuit.  If
one encoder is required to exactly correct every flagged erasure set of size
$k$,
then necessarily
\begin{equation}
2L+1>k.
\label{eq:causal-depth-bound}
\end{equation}
Thus protection against an extensive number $k=\alpha n$ of arbitrary
erasures requires $L=\Omega(n)$ in one dimension.  In a $d$-dimensional
bounded-degree lattice with bounded-range gates and causal-ball volume at
most $CL^d$, the corresponding causal-volume argument gives the necessary
scaling $L=\Omega(k^{1/d})$ for uniform exact recovery.
\end{lemma}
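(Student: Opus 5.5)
Argue by contradiction from a light-cone argument. Let the logical input sit at site $\ell_0$ of the chain. Backward propagation through a depth-$L$ nearest-neighbor circuit shows that every output site outside the interval $\mathcal C=\{\ell:\lvert\ell-\ell_0\rvert\leq L\}$ is outside the forward causal cone of $\ell_0$. Thus the cone contains at most $2L+1$ sites. The first step is to make the complementary statement precise. For any output region $Q$ disjoint from $\mathcal C$, the reduced state $\operatorname{Tr}_{\bar Q}[V_j X V_j^\dagger]$ depends on $X$ only through $\operatorname{Tr}X$. This follows by writing $V_j(L)=U^{(L)}\cdots U^{(1)}(\,\cdot\otimes\lvert\mathrm{anc}\rangle)$ and using the Heisenberg picture. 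An observable $O_Q$ evolved backward through the circuit has support contained in the backward light cone of $Q$, which excludes site $\ell_0$. Therefore $V_j^\dagger O_Q V_j=c(O_Q)\,\mathbb I_{L_j}$, with the scalar fixed by the ancilla state.

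The second step is to apply this to the retained subsystem. Suppose $k\geq 2L+1$. Because $n\geq k$, we may choose an erasure set $E_j$ of size $k$ that contains all of $\mathcal C\cap P_j$, padding with other sites if the cone is smaller. The retained subsystem $Q=P_j\setminus E_j$ is then disjoint from $\mathcal C$. By the first step, $V_j^\dagger O_Q V_j\propto\mathbb I$ for every $O_Q$ on $Q$. So the complementary channel to the erasure carries no logical information: the output of $\operatorname{Tr}_{E_j}\circ\mathcal V_j$ is independent of the logical input. No recovery map $\mathcal R_j^{E_j}$ can then satisfy Eq.~\eqref{eq:exact-local-recovery}, since it would have to map the single fixed state $\operatorname{Tr}_{E_j}[V_j\lvert0\rangle\!\langle0\rvert V_j^\dagger]=\operatorname{Tr}_{E_j}[V_j\lvert1\rangle\!\langle1\rvert V_j^\dagger]$ to both $\lvert0\rangle\!\langle0\rvert$ and $\lvert1\rangle\!\langle1\rvert$. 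This contradiction gives $2L+1>k$. For an extensive $k=\alpha n$, it immediately gives $L>(\alpha n-1)/2=\Omega(n)$.

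The third step is the $d$-dimensional extension, which repeats the same argument with the interval replaced by the forward causal ball of the logical site. Its volume is at most $CL^d$ by hypothesis, with bounded-range gates and bounded degree, so the constant absorbs the range. Erasing a set that contains the whole ball is possible whenever $k\geq CL^d$, and that choice again decouples the retained region. Hence uniform exact recovery forces $CL^d>k$, that is $L=\Omega(k^{1/d})$.

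The main obstacle is the light-cone bookkeeping in the first step: I must show that the Heisenberg-evolved retained observable avoids site $\ell_0$, not merely that the forward cone has width $2L+1$. I would handle this by induction on layers, with the inductive claim that after peeling off layers $L,\dots,m$ the support lies within distance $L-m+1$ of $Q$. Combined with $\operatorname{dist}(Q,\ell_0)>L$, this gives the required exclusion. Edge effects at the chain boundary only shrink the cone and so do not affect the bound.
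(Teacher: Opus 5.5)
Your proposal is correct and follows the same light-cone argument as the paper. You erase a size-$k$ set containing the forward cone of the logical site, which has at most $2L+1$ sites, so the retained state is independent of the logical input; you then replace the interval by a causal ball of volume at most $CL^d$ in $d$ dimensions. Your Heisenberg-picture induction makes explicit the decoupling step that the paper asserts in one line, and both arguments tacitly assume $k\leq n$.
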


\begin{proof}
After depth $L$, the forward causal cone of the localized logical input
contains at most $2L+1$ sites.  If $k\geq2L+1$, choose an erasure set that
contains this entire cone and add arbitrary sites if necessary.  The retained
state is then independent of the logical input, so no recovery channel on the
retained sites can reconstruct an unknown logical qubit.  The
$d$-dimensional statement follows by replacing the interval by a causal
region of volume $O(L^d)$.
\end{proof}

\paragraph*{Open finite-depth questions.}
Lemma~\ref{lem:causal-depth} is a necessary condition, not an achievability
result.  For a specified number-conserving brickwork or Hamiltonian ensemble,
two inequivalent targets are the pattern-averaged statement
\begin{equation}
\mathbb{E}_{V\sim\mathcal{E}_{n,L}}
\left[
\sum_{\lvert E\rvert=k}\pi(E)D_E(V)
\right]
\leq\varepsilon
\label{eq:finite-depth-average-question}
\end{equation}
and the stronger high-probability worst-pattern statement
\begin{equation}
\Pr_{V\sim\mathcal{E}_{n,L}}
\left[
\max_{\lvert E\rvert=k}D_E(V)>\varepsilon
\right]
\leq\delta.
\label{eq:finite-depth-uniform-question}
\end{equation}
Here $\mathcal{E}_{n,L}$ is the specified size- and depth-dependent circuit
ensemble, while $\pi(E)\geq0$ and
$\sum_{\lvert E\rvert=k}\pi(E)=1$ define the physical pattern distribution.
Equations~\eqref{eq:finite-depth-average-question}
and~\eqref{eq:finite-depth-uniform-question} are written for circuits.  A
Hamiltonian analogue must specify finite-range, bounded-strength interactions
and retain the appropriate Lieb--Robinson tail rather than assume a strictly
compact causal cone.
Determining the depth or evolution time needed for either bound, including
its dependence on $n$, $k$, charge sector, geometry, and error probabilities,
remains open.  The standard information--disturbance framework converts a
complementary-channel decoupling estimate into existence of a recovery
channel~\cite{Knill2000,Hayden2007}; for fixed logical dimension, a
Choi-state bound $D_E$ leads generically to a logical diamond error of order
$O(\sqrt{d_LD_E})$.  The architecture-specific missing step is therefore the
finite-depth covariant bound itself, with explicit constants and a compatible
decoder.  Once its logical diamond error is known,
Corollary~\ref{cor:gjc-fi-continuity} supplies the corresponding
fixed-receiver CFI estimate.

Large bipartite entropy, large operator support, or approximate unitary-design
diagnostics may support this result, but they do not replace the decoupling
bound required for recovery.  A symmetry-preserving unitary 2-design acting
within the fixed-charge sectors is a compatible intermediate target:
second-moment design properties may suffice for standard decoupling estimates
while retaining a tractable description.  An unrestricted Clifford ensemble
would not respect the required \(U(1)\) covariance.

At the three sampled sizes, the choice $L=3n$ is numerically close to the
corresponding Haar-isometry retention curve at the sampled points.  This observation
does not establish linear-depth scaling and is not a substitute for the
decoupling statements in
Eqs.~\eqref{eq:finite-depth-average-question}
and~\eqref{eq:finite-depth-uniform-question}: only
$n\leq7$ is tested, and the numerical observable is QFI retention rather than
the trace-norm decoupling error $D_E$ itself.

\section{Phase covariance and superselection constraints}
\label{sec:ssr}

The cross-aperture coherence $\left\lvert 10\right\rangle{}\!\left\langle 01\right\rvert{}$ is defined relative to a phase convention.  Under independent local phase shifts,
\begin{equation}
U_A(\alpha_A)\otimes U_B(\alpha_B)
=e^{-i\alpha_A\hat n_A}\otimes e^{-i\alpha_B\hat n_B},
\end{equation}
the visibility transforms as
\begin{equation}
g\longmapsto ge^{-i(\alpha_A-\alpha_B)}.
\label{eq:g-phase-transform}
\end{equation}

\begin{lemma}[Loss of phase information under unreferenced local twirling]
\label{lem:twirl}
If the relative local phase is completely unknown and no asymmetry resource is supplied, local $U(1)$ twirling removes the astronomical coherence:
Define $U_A(\alpha)=e^{-i\alpha\hat n_A}\otimes\mathbb{I}_B$.  Then
\begin{equation}
\begin{aligned}
&\int_0^{2\pi}\frac{d\alpha}{2\pi}
U_A(\alpha)\rho_s^{(1)}(g)U_A(\alpha)^\dagger
\\
&\qquad=
\frac12\left(\left\lvert 10\right\rangle{}\!\left\langle 10\right\rvert{}+\left\lvert 01\right\rangle{}\!\left\langle 01\right\rvert{}\right).
\end{aligned}
\label{eq:twirled-state}
\end{equation}
\end{lemma}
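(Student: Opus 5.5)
The plan is a direct computation in the two-dimensional single-photon sector spanned by $\lvert 10\rangle$ and $\lvert 01\rangle$, treating the twirl as a linear map applied to each of the four matrix units in Eq.~\eqref{eq:single-photon-source}.

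First, evaluate the phase unitary on the basis. With $U_A(\alpha)=e^{-i\alpha\hat n_A}\otimes\mathbb I_B$, we have $U_A(\alpha)\lvert 10\rangle=e^{-i\alpha}\lvert 10\rangle$, since $\hat n_A=1$ there. Likewise $U_A(\alpha)\lvert 01\rangle=\lvert 01\rangle$, since $\hat n_A=0$. Hence each matrix unit $\lvert x\rangle\langle y\rvert$ picks up the phase $e^{-i\alpha(n_A(x)-n_A(y))}$.

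For the terms of $\rho_s^{(1)}(g)$ this gives the following. The diagonal terms $\lvert 10\rangle\langle 10\rvert$ and $\lvert 01\rangle\langle 01\rvert$ are invariant. The term $g\lvert 10\rangle\langle 01\rvert$ becomes $ge^{-i\alpha}\lvert 10\rangle\langle 01\rvert$. Its Hermitian conjugate $g^*\lvert 01\rangle\langle 10\rvert$ becomes $g^*e^{+i\alpha}\lvert 01\rangle\langle 10\rvert$. This is consistent with the transformation law in Eq.~\eqref{eq:g-phase-transform} at $\alpha_B=0$.

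Second, integrate over $\alpha$ with the uniform measure. We use $\int_0^{2\pi}\frac{d\alpha}{2\pi}e^{\mp i\alpha}=0$. The off-diagonal contributions vanish and the diagonal ones are unchanged, leaving $\tfrac12(\lvert 10\rangle\langle 10\rvert+\lvert 01\rangle\langle 01\rvert)$. This is exactly Eq.~\eqref{eq:twirled-state}, and it is independent of $g$.

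Finally, I would state the interpretive consequence. Because the twirled state does not depend on $g_R$ or $g_I$, every Fisher information for $g$ vanishes, both the QFI in Eq.~\eqref{eq:qfi-single} and the CFI of any receiver. This follows from monotonicity of the QFI under parameter-independent channels (the argument of Lemma~\ref{lem:qfi-preservation}), which explains the ``loss of phase information'' in the title.

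There is no real obstacle; the only point requiring care is the sign convention of the phase. That sign only affects which off-diagonal term picks up $e^{-i\alpha}$ versus $e^{+i\alpha}$, and both integrate to zero, so the conclusion is convention independent.
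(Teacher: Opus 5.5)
Your proposal is correct and matches the paper's own proof. Both observe that the diagonal matrix units are invariant, that $\lvert 10\rangle\langle 01\rvert$ acquires $e^{-i\alpha}$ while its adjoint acquires $e^{+i\alpha}$, and that the uniform phase average removes both. One small point on your optional closing remark: the vanishing of the QFI and of every receiver's CFI follows directly from the twirled state being independent of $g$, so the monotonicity argument is not needed.
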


\begin{proof}
The diagonal terms are invariant.  The off-diagonal operator transforms as
\begin{equation}
\left\lvert 10\right\rangle{}\!\left\langle 01\right\rvert{}\longmapsto e^{-i\alpha}\left\lvert 10\right\rangle{}\!\left\langle 01\right\rvert{}.
\end{equation}
Its uniform phase average vanishes, and similarly for its adjoint.
\end{proof}

\begin{lemma}[One shared reference photon under local twirling]
\label{lem:one-reference-qfi}
Let the astronomical single-photon state be
$\rho_s^{(1)}(g)$ from Eq.~\eqref{eq:bloch-state}, whose two local
intensities are equal, and supply the ideal balanced path-entangled
single-photon reference
\begin{equation}
\left\lvert\psi_\delta\right\rangle_a
=
\frac{
\left\lvert10\right\rangle_a
+e^{i\delta}\left\lvert01\right\rangle_a
}{\sqrt2}.
\label{eq:one-photon-reference-state}
\end{equation}
where $\delta$ is known, fixed, and independent of $g$.  Apply independent
local $U(1)$ twirling to the joint signal--reference state, with each local
action generated by the total signal-plus-reference photon number at that
node.  Then the SLD QFI matrix of the twirled joint state, which is the
corresponding SSR-restricted QFI benchmark, is
\begin{equation}
H_{\mathrm{ref}}^{\mathrm{SSR}}(g)
=
\frac12 H^{(1)}(g).
\label{eq:one-reference-qfi}
\end{equation}
For the complete weak-source state of Eq.~\eqref{eq:source-state}, the
corresponding leading-order result is
\begin{equation}
H_{\mathrm{ref}}^{\mathrm{SSR}}[\rho_s(g)]
=
\frac{\epsilon}{2}H^{(1)}(g)+O(\epsilon^2).
\label{eq:one-reference-weak-qfi}
\end{equation}
These statements assume a lossless, undephased reference photon, a fresh
reference for each measurement, and no additional reference uncertainty.
They do not include the cost of preparing, distributing, stabilizing, or
consuming that reference.
\end{lemma}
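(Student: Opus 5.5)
The plan is to compute the twirled joint state explicitly and observe that it is block diagonal, with exactly one $g$-dependent block. The two local twirls, generated by $\hat N_A=\hat n_{s_A}+\hat n_{a_A}$ and $\hat N_B=\hat n_{s_B}+\hat n_{a_B}$, average over independent phases. They therefore act as the pinching $\rho\mapsto\sum_{N_A,N_B}\Pi_{N_A,N_B}\rho\,\Pi_{N_A,N_B}$ onto joint eigenspaces of $(\hat N_A,\hat N_B)$. This follows because $e^{-i\alpha_A\hat N_A}e^{-i\alpha_B\hat N_B}$ multiplies $\lvert m\rangle\!\langle m'\rvert$ by $e^{-i\alpha_A(N_A-N_A')-i\alpha_B(N_B-N_B')}$, and the uniform average of this phase vanishes unless both charges agree, exactly as in Lemma~\ref{lem:twirl}.

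Next I will apply this pinching to $\rho_s^{(1)}(g)\otimes\lvert\psi_\delta\rangle\!\langle\psi_\delta\rvert$, which carries two photons in total. The sectors $(N_A,N_B)=(2,0)$ and $(0,2)$ are spanned by $\lvert10\rangle_s\lvert10\rangle_a$ and $\lvert01\rangle_s\lvert01\rangle_a$, and each carries weight $1/4$ independent of $g$. The sector $(1,1)$ is spanned by $\lvert10\rangle_s\lvert01\rangle_a$ and $\lvert01\rangle_s\lvert10\rangle_a$, each with diagonal weight $1/4$. Its off-diagonal element is the product of the signal coherence $g/2$ and the reference coherence $e^{-i\delta}/2$, up to the conjugation convention. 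This block is therefore $\tfrac12\cdot\tfrac12\bigl(\mathbb I+\boldsymbol r'\cdot\boldsymbol\sigma\bigr)$, whose Bloch vector encodes $ge^{-i\delta}$. In real coordinates it is $R_\delta\boldsymbol g$, up to the sign flip of the imaginary part as in Eq.~\eqref{eq:bloch-state}, where $R_\delta$ is a fixed rotation. All other matrix elements are removed by the pinching.

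The QFI of a direct sum of orthogonal blocks with $g$-independent weights $w_k$ is $\sum_k w_k H[\rho_k]$, where $\rho_k$ is the normalized block. This is the same orthogonality argument used for Eq.~\eqref{eq:qfi-weak-source}. Only the $(1,1)$ block contributes, with weight $1/2$. Its normalized QFI is $R_\delta^{\mathsf T}H^{(1)}(R_\delta\boldsymbol g)R_\delta$ by the chain rule, as in Eq.~\eqref{eq:stored-qfi-general}. The key step is then rotation covariance of $H^{(1)}$. Rewriting Eq.~\eqref{eq:qfi-single} gives
\[
H^{(1)}(g)=\mathbb I+\frac{\boldsymbol g\boldsymbol g^{\mathsf T}}{1-\lvert g\rvert^2},
\]
and this is manifestly invariant under $\boldsymbol g\mapsto R\boldsymbol g$, $H\mapsto R^{\mathsf T}HR$. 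The same holds for the reflection $g_I\mapsto-g_I$. This yields Eq.~\eqref{eq:one-reference-qfi}, with the pure-state boundary handled by the limiting convention already adopted.

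For the weak-source statement, I will tensor Eq.~\eqref{eq:source-state} with the reference. The vacuum-signal term has total charge $N_A+N_B=1$, and the twirl preserves total charge. Its blocks, with $(N_A,N_B)\in\{(1,0),(0,1)\}$, are therefore orthogonal to the charge-2 blocks above and independent of $g$. Their weights $(1-\epsilon)/2$ are $g$-independent, so they contribute nothing, and the $\epsilon$-weighted charge-2 part gives $\tfrac{\epsilon}{2}H^{(1)}(g)$. The main obstacle is justifying the $O(\epsilon^2)$ remainder. The two-photon signal terms fall into charge-3 sectors, orthogonal to the others, and carry weight $O(\epsilon^2)$ with bounded normalized QFI on compact $\lvert g\rvert\le g_{\max}<1$. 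Block additivity then bounds their contribution by $O(\epsilon^2)$, provided the retained expansion is treated as a state family with $g$-independent sector weights.
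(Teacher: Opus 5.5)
Your proposal is correct and follows essentially the paper's proof: the local twirl pinches onto the $(N_A,N_B)=(2,0),(1,1),(0,2)$ sectors with weights $\tfrac14,\tfrac12,\tfrac14$, only the $(1,1)$ block depends on $g$, and block additivity of the QFI gives the factor $\tfrac12$, with the vacuum-signal sector contributing nothing at the retained order. The only difference is in how the reference phase is removed: the paper absorbs it into the $g$-independent unitary $\lvert10\rangle\mapsto\lvert X\rangle$, $\lvert01\rangle\mapsto e^{-i\delta}\lvert Y\rangle$, whereas you reparametrize and invoke the $O(2)$ invariance of $H^{(1)}(g)=\mathbb I+\boldsymbol g\boldsymbol g^{\mathsf T}/(1-\lvert g\rvert^2)$; that invariance also makes your sign slip harmless, since the coherence of the $(1,1)$ block is actually $ge^{i\delta}/4$, not $ge^{-i\delta}/4$.
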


\begin{proof}
For $\boldsymbol{\alpha}=(\alpha_A,\alpha_B)$, let
\begin{align}
U_{\boldsymbol{\alpha}}
&=
e^{-i(\alpha_A \hat N_A+\alpha_B \hat N_B)},
\nonumber\\
\mathcal{G}_{AB}(X)
&=
\int_{[0,2\pi)^2}
\frac{d^2\boldsymbol{\alpha}}{(2\pi)^2}
U_{\boldsymbol{\alpha}}XU_{\boldsymbol{\alpha}}^\dagger
\label{eq:local-joint-twirl}
\end{align}
denote the independent local twirl, where $\hat N_A$ and $\hat N_B$ are the respective
total signal-plus-reference photon numbers.  The joint state
has two photons in total and decomposes into orthogonal sectors labelled by
the local total photon numbers $(N_A,N_B)=(2,0),(1,1),(0,2)$.  Direct
expansion gives
\begin{equation}
\mathcal{G}_{AB}\!\left[
\rho_s^{(1)}(g)\otimes
\left\lvert\psi_\delta\right\rangle{}\!
\left\langle\psi_\delta\right\rvert_a
\right]
=
\frac14\rho_{20}
\oplus
\frac12\rho_{11}(g)
\oplus
\frac14\rho_{02},
\label{eq:reference-twirled-blocks}
\end{equation}
where $\rho_{20}$ and $\rho_{02}$ are pure, independent of $g$, and the
normalized informative block is
\begin{equation}
\begin{aligned}
\rho_{11}(g)
=\frac12\bigl(
&\left\lvert X\right\rangle{}\!\left\langle X\right\rvert{}
+\left\lvert Y\right\rangle{}\!\left\langle Y\right\rvert{}
\\
&+ge^{i\delta}
\left\lvert X\right\rangle{}\!\left\langle Y\right\rvert{}
+g^*e^{-i\delta}
\left\lvert Y\right\rangle{}\!\left\langle X\right\rvert{}
\bigr),
\end{aligned}
\label{eq:reference-informative-block}
\end{equation}
with
\begin{align}
\left\lvert X\right\rangle
&=
\left\lvert1_s0_a\right\rangle_A
\left\lvert0_s1_a\right\rangle_B,
&
\left\lvert Y\right\rangle
&=
\left\lvert0_s1_a\right\rangle_A
\left\lvert1_s0_a\right\rangle_B.
\end{align}
The $g$-independent unitary
$\left\lvert10\right\rangle\mapsto\left\lvert X\right\rangle$ and
$\left\lvert01\right\rangle\mapsto
e^{-i\delta}\left\lvert Y\right\rangle$
makes $\rho_{11}(g)$ unitarily equivalent to $\rho_s^{(1)}(g)$.
For a block-diagonal state whose block weights are parameter independent,
the QFI is the weighted sum of the block QFIs.  Only the $(1,1)$ block
depends on $g$, and its weight is $1/2$, proving
Eq.~\eqref{eq:one-reference-qfi}.  The vacuum sector of
Eq.~\eqref{eq:source-state} is independent of $g$, while the single-photon
sector has parameter-independent weight $\epsilon$, which proves
Eq.~\eqref{eq:one-reference-weak-qfi} to the retained order.
\end{proof}

\paragraph*{Common resource convention for the SSR and GJC benchmarks.}
The ideal reference in Lemma~\ref{lem:one-reference-qfi} is exactly the ideal
limit of the ancillary state in Eqs.~\eqref{eq:ancilla-total}--%
\eqref{eq:ideal-reference-state}; it is not an additional free reference.
Accordingly, every ideal-reference numerical normalization uses one fresh
balanced reference photon per attempted GJC measurement, with
\(\eta_a=\nu_a=1\).  The SSR quantity is the measurement-independent QFI
remaining after the joint local twirl with that resource, whereas the GJC
quantity is the CFI extracted by the specified local interference and
detection measurement.  The preparation, distribution, stabilization, and
consumption costs are omitted from both conditional normalizations.  A
mixed, lossy reference is included in the GJC formula through
\((\eta_a,q_A,q_B,\nu_a)\), but an analogous nonideal-reference SSR-QFI was
not evaluated numerically.  In particular, \(\eta_a\) is a probability
factor rather than a reference-capacity or wall-clock ledger; failed
generation attempts and bottlenecked reference lanes do not cancel from a
platform comparison.  These quantities are required by
Appendix~\ref{sec:platform-contract}.

The GJC ancillary resource acts as a nonlocal phase-reference resource subject to the relevant photon-number superselection rule~\cite{Zhang2025}.  The encoder and decoder must therefore obey one of two consistent models.

\paragraph*{Covariant encoding.}
There exist additive physical generators $\hat N_j$ and fixed background
charges $Q_{0,j}$ such that
\begin{equation}
e^{-i\alpha\hat N_j}V_j
=e^{-i\alpha Q_{0,j}}V_j e^{-i\alpha\hat n_{L_j}},
\label{eq:covariant-encoding}
\end{equation}
and the recovery satisfies the matching covariance relation stated explicitly
in Eq.~\eqref{eq:recovery-covariance}.  The background
phase is essential for the numerical construction, whose logical codewords
occupy the physical charge sectors $Q_{0,j}$ and $Q_{0,j}+1$.  In this case,
an unknown phase is not converted into an uncontrolled logical error by the
coding procedure.

\paragraph*{Explicit reference-frame consumption.}
If generic random gates do not satisfy Eq.~\eqref{eq:covariant-encoding}, the required local oscillators, phase-locked controls, or additional reference states must be included in the resource model.  Their relative phase uncertainty contributes to the effective coherence-transfer factors in Eq.~\eqref{eq:g-transfer}.

The number-conserving calculation below implements one finite-size version of
this requirement by placing the two logical codewords in neighboring charge
sectors and restricting both the encoder and recovery to covariant operations.
It verifies covariance and reference-assisted information retention for
$n=5$.  By Lemma~\ref{lem:covariant-no-go}, its correction of arbitrary local
erasures must be approximate.  Its charge-Haar benchmark is a finite-$n$
instance of the known random $U(1)$-covariant-code framework~\cite{Kong2022}; the architecture-specific additions are the local
finite-depth ensemble, the explicit recovery, and the operational GJC-CFI
evaluation.  The calculation does not establish a nonzero asymptotic coding
rate or a uniform half-erasure threshold.

\section{Numerical methods and extended finite-size results}
\label{sec:numerics}

We now document the finite-system protocols underlying the main-text summary.
The numerical record separates the baseline calculation, equal-budget
candidate comparison, common-gate-set resource analysis, and all-pattern
model reanalysis.  These calculations distinguish numerically demonstrated
effects from the
asymptotic statements that remain open.  Unless stated otherwise, ratios are
conditioned on successful astronomical capture and availability of the
ancillary reference, so the modeled factors \(\epsilon,\eta_c,\eta_a\) can be
restored multiplicatively.  Appendix~\ref{sec:all-pattern-rate} retains every
ideal flagged-erasure event, and Appendix~\ref{sec:native-hardware} adds only
a partial gate-depth timing envelope.  The memory-only \(1/n\) throughput
bound is reported separately rather than absorbed into a conditional
efficiency.  None of these quantities evaluates the platform endpoint
\(\mathcal J_\Pi\) in Eq.~\eqref{eq:platform-information-rate}.

\begin{table*}[t]
\caption{Denominators used in the manuscript.  ``Included'' lists only the
factors represented in the corresponding formula; it is not a claim of
platform completeness.}
\label{tab:denominator-map}
\begin{ruledtabular}
\begin{tabular}{p{0.12\textwidth}p{0.18\textwidth}p{0.28\textwidth}p{0.32\textwidth}}
quantity & denominator & included & excluded from the quantity\\
\hline
\(F_\delta^{\rm GJC}\) & incident temporal mode &
\(\epsilon,\eta_c,\eta_a,\nu_a\) in the stated model &
unmodeled reference/detector failures; platform timing and capacity\\
\(R_C\) & ideal post-capture GJC CFI &
all ideal flagged-erasure patterns &
incident failures and total resources\\
\(G\) & conditional mode with partial timing &
\(R_C\), abstract gate depth, \(e\) &
incident failures; full critical path, calibrated noise, and capacity\\
\(\mathcal J_\Pi\) & wall-clock second at fixed \(\mathcal C\) &
complete classical record &
none by definition; endpoint not evaluated here\\
\end{tabular}
\end{ruledtabular}
\end{table*}

\subsection{Work point and worst-direction metric}

The nominal visibility is
\begin{equation}
g_0=0.3+0.4i.
\label{eq:numerical-g0}
\end{equation}
For two positive-definite information matrices $A$ and $B$, we use the generalized worst-direction ratio
\begin{equation}
\mathcal{R}_{\min}(A\mid B)
=
\lambda_{\min}\!\left(B^{-1/2}AB^{-1/2}\right).
\label{eq:numerical-rmin}
\end{equation}
Equation~\eqref{eq:numerical-rmin} equals the smallest retained fraction over all local directions in the $(g_R,g_I)$ plane.  It prevents a gain in one quadrature from hiding the loss of the other.  Each ratio is evaluated for an individual encoder and erasure pattern before ensemble averaging.

For the recovery comparison, let \(H_{\rm pre}\) be the SSR-restricted QFI
after erasure and before recovery, \(H_{\rm dec}\) the SSR-restricted QFI
after the recovered two-node state, and \(F_{\rm GJC}\) the CFI of the fixed
GJC receiver applied to that state.  The three reported samplewise
efficiencies are
\begin{align}
\eta_{\rm dec}
&=\mathcal R_{\min}(H_{\rm dec}\mid H_{\rm pre}),
\nonumber\\
\eta_{\rm meas}
&=\mathcal R_{\min}(F_{\rm GJC}\mid H_{\rm dec}),
\nonumber\\
\eta_{\rm ext}
&=\mathcal R_{\min}(F_{\rm GJC}\mid H_{\rm pre}).
\label{eq:three-efficiencies}
\end{align}
Because generalized worst-direction ratios need not multiply,
\(\eta_{\rm ext}\) is evaluated directly rather than defined as
\(\eta_{\rm dec}\eta_{\rm meas}\).  It measures post-erasure information
extraction from the pre-recovery state through decoder and receiver; it is
not a system-level end-to-end efficiency.  A legacy archive field for this quantity is retained only for backward
compatibility and is not used as a system-level efficiency.

At Eq.~\eqref{eq:numerical-g0}, the analytic and numerical single-photon QFI matrices agree to machine precision and are
\begin{equation}
H_0^{(1)}=
\begin{pmatrix}
1.12 & 0.16\\
0.16 & 1.213333
\end{pmatrix}.
\label{eq:numerical-qfi0}
\end{equation}
In the ideal reference-assisted SSR model of
Lemma~\ref{lem:one-reference-qfi}, one shared entangled reference photon gives
\begin{equation}
H_{0,\mathrm{ref}}^{\mathrm{SSR}}
=\frac12H_0^{(1)}
=
\begin{pmatrix}
0.56 & 0.08\\
0.08 & 0.606667
\end{pmatrix},
\label{eq:numerical-ssr-qfi0}
\end{equation}
whereas complete independent local $U(1)$ twirling without any asymmetry
resource gives the zero matrix, in agreement with
Lemma~\ref{lem:twirl}.  Equations~\eqref{eq:numerical-qfi0}
and~\eqref{eq:numerical-ssr-qfi0} are the two no-loss normalization matrices
used below.

\subsection{Finite depth, finite size, and the local half-erasure crossover}

We first use independent local brickwork encoders without imposing charge conservation.  The codewords at the two nodes are generated independently, no gate connects the nodes, and erasures are flagged.  For each $n\in\{3,5,7\}$ we sample 30 brickwork circuits and 60 local Haar isometries.  Erasure-position pairs are enumerated exactly when feasible and otherwise sampled without replacement, with at most 120 pairs.  Table~\ref{tab:finite-size} compares a depth $L=3n$ with the Haar benchmark at
\begin{equation}
k_*=\left\lfloor\frac{n-1}{2}\right\rfloor,
\end{equation}
the largest integer loss on the protected side of one half.

\begin{table*}[t]
\caption{Finite-size worst-direction QFI retention for symmetric flagged erasures $k_A=k_B=k_*$ at depth $L=3n$.  The fifth percentile is taken over erasure positions for each encoder and then averaged over encoders.}
\label{tab:finite-size}
\begin{ruledtabular}
\begin{tabular}{cccccccc}
$n$ & $k_*$ & $L$ & brickwork mean & Haar mean & mean/Haar & brickwork q05 & Haar q05\\
\hline
3 & 1 & 9  & 0.5699 & 0.6004 & 0.949 & 0.4147 & 0.4691\\
5 & 2 & 15 & 0.5530 & 0.5655 & 0.978 & 0.4666 & 0.4727\\
7 & 3 & 21 & 0.5495 & 0.5543 & 0.991 & 0.4973 & 0.5051
\end{tabular}
\end{ruledtabular}
\end{table*}

Across these three sampled sizes, the point estimates are numerically close
to the corresponding Haar values and the retention curves display a
finite-size drop around $k/n=1/2$.  At $n=7$, for example, the Haar means are
$0.9762$, $0.8835$, $0.5543$, and $0.0747$ for $k=1,2,3,4$, respectively.  The
data therefore exhibit a finite-size drop near one half but do not establish
convergence or a threshold.  The Singleton
bound in Eq.~\eqref{eq:exact-half-limit} supplies a no-cloning-compatible
upper limit for exact abstract codes, but the present data do not establish a
covariant asymptotic threshold.  As sanity checks, the repetition/Greenberger--Horne--Zeilinger (GHZ) encoder
loses the worst-direction QFI after one erased memory, whereas the exact
$[[5,1,3]]$ code retains unity for every flagged pattern with $k\leq2$ and
fails outside its exact correction radius.

We next impose number conservation.  Each node contains $n=5$ physical memories and two ancillary excitations.  The logical zero and one codewords then occupy total-charge sectors 2 and 3, respectively, each of dimension $\binom52=\binom53=10$.  Number-conserving two-memory gates generate a brickwork ensemble, and independent Haar unitaries within the two charge sectors define the charge-Haar benchmark, a finite-size instance of the random $U(1)$-covariant construction analyzed in Ref.~\cite{Kong2022}.  At depth $L=15$, averaging 30 circuits and all position pairs gives the SSR-assisted retentions
\begin{equation}
\begin{array}{c|ccc}
k & 1 & 2 & 3\\ \hline
\text{brickwork} & 0.8264 & 0.4793 & 0.0720\\
\text{charge Haar} & 0.8344 & 0.4780 & 0.0691
\end{array}.
\label{eq:charge-haar-comparison}
\end{equation}
The comparison in Eq.~\eqref{eq:charge-haar-comparison} gives finite-depth to
charge-Haar mean ratios of $99.0\%$ and $100.3\%$ at $k=1$ and $k=2$.
The value slightly above $100\%$ is a finite-ensemble fluctuation and is not
evidence that the circuit outperforms the Haar ensemble.  The SSR-assisted
and unrestricted normalized retentions agree within $0.18$ percentage points
in this calculation.  This finite-size agreement is consistent with the ideal
reference factor in Lemma~\ref{lem:one-reference-qfi}; it does not show that
the reference is dispensable, because the unreferenced locally twirled QFI is
identically zero by Lemma~\ref{lem:twirl}.

\subsection{Scrambling, tangent-leakage tails, and recovery risk}
\label{sec:leakage-mechanism}

Let \(U_L\) be the complete \(n\)-memory encoder unitary after \(L\)
brickwork layers, \(d=2^n\), and
\(Z_\ell=I-2\hat n_\ell\).  The remote density out-of-time-order correlator (OTOC), a diagnostic of
operator spreading, is
\begin{equation}
C_{\rm dens}(L)
=
\frac{1}{n-1}
\sum_{\ell=2}^{n}
\frac{
\left\|
\left[U_LZ_1U_L^\dagger,Z_\ell\right]
\right\|_F^2
}{2d}.
\label{eq:density-otoc}
\end{equation}
It is evaluated on the full physical Hilbert space and averages the
normalized squared commutator over sites remote from the initially occupied
logical memory.  This is a unitary-spreading diagnostic only: it is not an
information metric and does not certify decoupling, recovery, or metrological
advantage.

To diagnose why scrambling changes erasure robustness, write the two encoded
logical branches across the retained--erased cut as amplitude matrices \(U\)
and \(V\), whose rows index the retained computational basis and whose
columns index the erased basis.  For the two real visibility directions, the
erased-environment derivatives are
\begin{equation}
\begin{aligned}
\partial_R\rho_E&=\frac12(C_E+C_E^\dagger),
&
\partial_I\rho_E&=\frac12(-iC_E+iC_E^\dagger),
\\[-2pt]
C_E&=U^{\mathsf T}V^*.
\end{aligned}
\end{equation}
We define the task-directed tangent leakage
\begin{equation}
\epsilon_{\rm tan}(E)
=
\left[
\frac{\|\partial_R\rho_E\|_1^2+\|\partial_I\rho_E\|_1^2}{2}
\right]^{1/2}.
\label{eq:tangent-leakage}
\end{equation}
It quantifies local distinguishability of visibility changes in the subsystem that is discarded.  It is not assumed to add to the retained QFI\@.

For $n=5$, two erasures per node, 30 complete number-conserving brickwork trajectories, and all 100 two-node position pairs, Table~\ref{tab:leakage-mechanism} shows a redistribution rather than a monotone removal of leakage.  From $L=0$ to $L=15$, the position-averaged leakage changes from $0.1600$ to $0.2061$, while its within-encoder position standard deviation falls from $0.3685$ to $0.0647$ and its position 95th percentile falls from $1$ to $0.3234$.  Within the 100 enumerated position pairs, the zero sample minima at $L=0$ are replaced by positive sample minima at the tested nonzero depths; this is a finite-size observation, not a uniform lower-bound theorem.

\begin{table*}[t]
\caption{Scrambling diagnostics, erased-environment tangent-leakage tail, and recovered SSR-QFI retention for $n=5$ and $k_A=k_B=2$.  The OTOC is defined in Eq.~\eqref{eq:density-otoc}.  QFI columns use the generalized worst-direction ratio of Eq.~\eqref{eq:numerical-rmin}; the position means, q05/q95 summaries, and worst-position values are computed within each encoder and then averaged over 30 encoders.}
\label{tab:leakage-mechanism}
\begin{ruledtabular}
\begin{tabular}{cccccc}
$L$ & $C_{\rm dens}$ & q95 $\epsilon_{\rm tan}$ & mean recovered QFI & q05 recovered QFI & worst recovered QFI\\
\hline
0  & 0      & 1.0000 & 0.3600 & 0      & 0\\
2  & 0.3153 & 0.7092 & 0.4494 & 0.0705 & 0.0253\\
5  & 0.5170 & 0.4952 & 0.3889 & 0.1485 & 0.1001\\
10 & 0.6758 & 0.3455 & 0.3486 & 0.1896 & 0.1518\\
15 & 0.7031 & 0.3234 & 0.3432 & 0.1879 & 0.1515
\end{tabular}
\end{ruledtabular}
\end{table*}

The depth-controlled statistic is the Pearson correlation of within-depth
percentile-rank residuals, with \(L=0\) excluded.  Its percentile intervals
use 5000 cluster-bootstrap draws that resample the 30 complete brickwork
trajectories with replacement using a fixed pseudorandom stream recorded in
the reproducibility archive.  This gives
\(-0.351\) between OTOC and high-tail leakage (95\% interval
\([-0.498,-0.177]\)), \(-0.268\) between high-tail leakage and lower-tail
recovered QFI (\([-0.445,-0.088]\)), and \(+0.345\) between OTOC and
lower-tail QFI (\([0.163,0.496]\)).  These associations support the proposed
mechanism but are not a causal theorem.

The same data reveal an average--risk tradeoff.  Here the position
conditional value at risk, \(\mathrm{CVaR}_{5\%}\), is the mean over the worst
5\% of erasure positions.  The pairs (mean QFI, position
\(\mathrm{CVaR}_{5\%}\)) are $(0.4494,0.0485)$, $(0.3889,0.1225)$,
$(0.3486,0.1689)$, and $(0.3432,0.1695)$ for $L=2,5,10,15$, respectively;
charge-Haar gives $(0.3455,0.1784)$.  A symmetric Dicke encoder gives the
location-independent pair $(0.2777,0.2777)$.  This diagnostic-stage calculation originally omitted its preparation cost; Appendix~\ref{sec:abstract-encoder-resource} supplies a common-gate-set depth-\(7\) realization without changing the diagnostic interpretation.  Thus ``more scrambling'' is not a scalar optimization principle: shallow circuits maximize the mean, whereas deeper or structured encoders reduce erasure-position risk.

\subsection{SSR-covariant recovery optimized for metrology}

Retained QFI is not yet an operational receiver.  We therefore compare three
local, pattern-dependent CPTP recovery maps for $n=5$, $L=10$, and
$k_A=k_B=2$: the transpose, or Petz, map (the canonical recovery built
from the noise channel and a reference code state), an SSR-covariant
semidefinite program maximizing entanglement fidelity, and an SSR-covariant
program that weights
the two logical coherence tangent operators more strongly than the population
operators.

\paragraph*{Erasure map, Choi convention, and covariance.}
The optimization is performed independently for each site and each known
erased set $E$.  Let $S=E^c$ denote the retained subsystem, let
$m=|S|=n-|E|$, and set $D=2^m$.  For encoding isometry $V$, the
flagged-erasure channel from the logical qubit to the retained memories is
\begin{equation}
{\cal E}_E(X)
=
\operatorname{Tr}_E\!\left[VXV^\dagger\right].
\label{eq:erasure-noise-map}
\end{equation}
Writing the encoded logical codewords as
\begin{equation}
V|i\rangle
=
\sum_{a=0}^{D-1}\sum_\alpha
(M_i^E)_{a\alpha}|a\rangle_S|\alpha\rangle_E,
\qquad i=0,1,
\end{equation}
the four operator images supplied to the optimization are
\begin{equation}
A_{ij}^E
:=
{\cal E}_E(|i\rangle\langle j|)
=
M_i^E(M_j^E)^\dagger.
\label{eq:erasure-operator-images}
\end{equation}

For a recovery channel
${\cal R}_E:{\cal B}({\cal H}_S)\rightarrow{\cal B}(\mathbb C^2)$, we
represent the channel by its Choi matrix and use the unnormalized, input-first
convention
\begin{equation}
J_E
=
\sum_{a,b=0}^{D-1}
|a\rangle\langle b|\otimes
{\cal R}_E(|a\rangle\langle b|).
\label{eq:recovery-choi-convention}
\end{equation}
Thus, with $r,s\in\{0,1\}$,
\begin{align}
\bigl[{\cal R}_E(X)\bigr]_{rs}
&=
\sum_{a,b=0}^{D-1}X_{ab}(J_E)_{ar,bs}
\nonumber\\
&=
\left\{
\operatorname{Tr}_S\!\left[
(X^{\mathsf T}\otimes I_2)J_E
\right]
\right\}_{rs}.
\label{eq:recovery-choi-action}
\end{align}
Complete positivity and trace preservation are therefore
\begin{equation}
J_E\succeq0,
\qquad
\operatorname{Tr}_L J_E=I_D,
\qquad
\sum_{r=0}^1(J_E)_{ar,br}=\delta_{ab}.
\label{eq:recovery-cptp}
\end{equation}
In particular, $\operatorname{Tr}J_E=D$; $J_E$ is not a normalized Choi
state.

Let
\begin{equation}
U_S(\theta)=e^{-i\theta N_S},
\qquad
U_L(\theta)=e^{-i\theta n_L},
\qquad
n_L=|1\rangle\langle1|.
\end{equation}
The covariance condition implemented numerically is
\begin{equation}
{\cal R}_E\!\left(
U_S(\theta)XU_S(\theta)^\dagger
\right)
=
U_L(\theta){\cal R}_E(X)U_L(\theta)^\dagger.
\label{eq:recovery-covariance}
\end{equation}
Equivalently,
\begin{equation}
\left[
J_E,\,
-N_S^{\mathsf T}\otimes I_2+I_D\otimes n_L
\right]=0.
\label{eq:recovery-covariance-choi}
\end{equation}
If $q_a$ is the Hamming weight of retained computational-basis state
$|a\rangle$, Eq.~\eqref{eq:recovery-covariance-choi} gives the explicit
selection rule
\begin{equation}
(J_E)_{ar,bs}=0
\quad\text{unless}\quad
r-q_a=s-q_b.
\label{eq:recovery-charge-selection}
\end{equation}
The code therefore constructs
\begin{equation}
J_E=\bigoplus_{\Delta=-m}^{1}J_E^{(\Delta)},
\qquad
J_E^{(\Delta)}\succeq0,
\qquad
\Delta=r-q_a.
\label{eq:recovery-charge-blocks}
\end{equation}
The dimension of the $\Delta$ block is
\begin{equation}
d_\Delta
=
\sum_{r=0}^1\binom{m}{r-\Delta},
\label{eq:recovery-block-dimension}
\end{equation}
where an out-of-range binomial coefficient is zero.  For the reported
$n=5$, $|E|=2$ calculation, $m=3$, and the block dimensions for
$\Delta=-3,-2,-1,0,1$ are $1,4,6,4,1$, respectively.

\paragraph*{Maximum-fidelity SDP.}
Let $\Lambda_E={\cal R}_E\circ{\cal E}_E$ and
$|\Phi_2\rangle=(|00\rangle+|11\rangle)/\sqrt2$.  The entanglement fidelity
for the maximally mixed logical input is
\begin{align}
F_e(\Lambda_E)
&=
\langle\Phi_2|
(\operatorname{id}\otimes\Lambda_E)
(|\Phi_2\rangle\langle\Phi_2|)
|\Phi_2\rangle
\nonumber\\
&=
\frac14\sum_{i,j=0}^1
\langle i|\Lambda_E(|i\rangle\langle j|)|j\rangle
\nonumber\\
&=
\frac14\operatorname{Re}
\sum_{i,j=0}^1\sum_{a,b=0}^{D-1}
(A_{ij}^E)_{ab}(J_E)_{ai,bj}.
\label{eq:recovery-entanglement-fidelity}
\end{align}
Consequently, the maximum-fidelity recovery is the SDP
\begin{widetext}
\begin{equation}
\begin{aligned}
\underset{J_E}{\operatorname{maximize}}\quad&
\frac14\operatorname{Re}
\sum_{i,j=0}^1\sum_{a,b=0}^{D-1}
(A_{ij}^E)_{ab}(J_E)_{ai,bj}
\\
\operatorname{subject\ to}\quad&
J_E^{(\Delta)}\succeq0
\quad(\Delta=-m,\ldots,1),
\\
&
\sum_{r=0}^1(J_E)_{ar,br}=\delta_{ab}
\quad(a,b=0,\ldots,D-1),
\\
&
(J_E)_{ar,bs}=0
\quad\text{if}\quad r-q_a\neq s-q_b.
\end{aligned}
\label{eq:fidelity-recovery-sdp}
\end{equation}
\end{widetext}
The factor $1/4=d_L^{-2}$ makes the objective equal to one for the identity
logical channel.  The transpose occurs in
Eq.~\eqref{eq:recovery-choi-action}; with the input-first Choi indices used
here, the coefficient in
Eq.~\eqref{eq:recovery-entanglement-fidelity} is $(A_{ij}^E)_{ab}$, without
an additional transpose or complex conjugation.

\paragraph*{Tangent-weighted SDP.}
The code uses the Hilbert--Schmidt orthonormal logical basis
\begin{equation}
B_0=\frac{I}{\sqrt2},
\quad
B_z=\frac{Z}{\sqrt2},
\quad
B_x=\frac{X}{\sqrt2},
\quad
B_y=\frac{Y}{\sqrt2},
\label{eq:tangent-logical-basis}
\end{equation}
with $\operatorname{Tr}(B_\mu^\dagger B_\nu)=\delta_{\mu\nu}$.
Here $B_0,B_z$ span the population sector, whereas $B_x,B_y$ are the two
Hermitian coherence directions carrying the real and imaginary visibility
tangents.  Define the affine recovery errors
\begin{equation}
D_{\mu,E}(J_E)
:=
\Lambda_E(B_\mu)-B_\mu,
\label{eq:tangent-error-definition}
\end{equation}
whose matrix elements are
\begin{align}
[D_{\mu,E}(J_E)]_{rs}
={}&
\sum_{i,j=0}^1\sum_{a,b=0}^{D-1}
(B_\mu)_{ij}(A_{ij}^E)_{ab}(J_E)_{ar,bs}
\nonumber\\
&-(B_\mu)_{rs}.
\label{eq:tangent-error-elements}
\end{align}
For $(w_0,w_z,w_x,w_y)=(1,1,\lambda,\lambda)$, the implemented convex
objective is
\begin{equation}
{\cal L}_\lambda(J_E)
=
\sum_{\mu\in\{0,z,x,y\}}
w_\mu\left\|D_{\mu,E}(J_E)\right\|_F^2.
\label{eq:tangent-weighted-loss}
\end{equation}
Equivalently, if $P_i=|i\rangle\langle i|$, then
\begin{align}
{\cal L}_\lambda
={}&
\sum_{i=0}^1
\left\|\Lambda_E(P_i)-P_i\right\|_F^2
\nonumber\\
&+
\lambda\sum_{i\neq j}
\left\|
\Lambda_E(|i\rangle\langle j|)
-|i\rangle\langle j|
\right\|_F^2,
\label{eq:tangent-population-coherence-form}
\end{align}
which makes the relative population/coherence weighting explicit.

The implementation minimizes Eq.~\eqref{eq:tangent-weighted-loss} directly,
subject to Eqs.~\eqref{eq:recovery-cptp}
and~\eqref{eq:recovery-charge-selection}.  To exhibit the same problem as a
standard linear-objective SDP, introduce real variables $t_\mu$ and
\begin{equation}
z_{\mu,E}(J_E)
=
\sqrt{w_\mu}\,
\operatorname{vec}\!\left[D_{\mu,E}(J_E)\right]\in\mathbb C^4.
\end{equation}
The exactly equivalent Schur-complement formulation is
\begin{widetext}
\begin{equation}
\begin{aligned}
\underset{J_E,\{t_\mu\}}{\operatorname{minimize}}\quad&
\sum_{\mu\in\{0,z,x,y\}}t_\mu
\\
\operatorname{subject\ to}\quad&
J_E^{(\Delta)}\succeq0
\quad(\Delta=-m,\ldots,1),
\\
&
\sum_{r=0}^1(J_E)_{ar,br}=\delta_{ab}
\quad(a,b=0,\ldots,D-1),
\\
&
(J_E)_{ar,bs}=0
\quad\text{if}\quad r-q_a\neq s-q_b,
\\
&
\begin{pmatrix}
t_\mu & z_{\mu,E}(J_E)^\dagger\\
z_{\mu,E}(J_E) & I_4
\end{pmatrix}
\succeq0
\quad(\mu=0,z,x,y).
\end{aligned}
\label{eq:tangent-recovery-sdp}
\end{equation}
\end{widetext}
The basis operators in Eq.~\eqref{eq:tangent-logical-basis} have unit
Hilbert--Schmidt norm.  No further division of ${\cal L}_\lambda$ by
$2(1+\lambda)$ is made in the implementation.  Thus the code solves a convex
quadratic objective with semidefinite constraints; the lifted formulation
in Eq.~\eqref{eq:tangent-recovery-sdp} establishes its exact SDP form.

The shared hyperparameter is selected from
$\lambda\in\{1,2,4,8,16,32,64,128\}$.  The training set contains 20
independently generated two-node encoder pairs.  For each pair, 25 of the 100
two-node erasure-position pairs are sampled without replacement, and the
selection statistic is
\begin{equation}
S(\lambda)
=
\frac12\,\overline{\eta_{\rm ext}}
+
\frac12\,Q_{0.05}(\eta_{\rm ext}),
\label{eq:tangent-lambda-score}
\end{equation}
where the mean and pooled fifth percentile are taken over the resulting 500
training instances.  This selects $\lambda=128$, which is then frozen.
Held-out validation uses 10 new two-node encoder pairs and all
$\binom52^2=100$ erasure-position pairs for each pair.  The proxy loss in
Eq.~\eqref{eq:tangent-weighted-loss} is used only to construct each local
recovery; model selection and validation are evaluated using the physical
SSR-QFI and GJC classical Fisher information.  Before recovery the
validation-set mean QFI retention is $0.4866$.  The frozen paper-mode run
uses the Splitting Conic Solver (SCS) with numerical tolerance
\(10^{-7}\), at most \(10^5\) iterations, and solver normalization enabled.

\begin{table*}[t]
\caption{Held-out-validation performance at $n=5$, $L=10$, and $k_A=k_B=2$.  $\overline{R}_H$ and $\overline{R}_F$ are the mean QFI and GJC-CFI retentions relative to their respective no-loss baselines; ``worst'' denotes the erasure-position minimum averaged over encoders.  The decoder, measurement, and post-erasure extraction efficiencies in Eq.~\eqref{eq:three-efficiencies} are computed samplewise as generalized worst-direction ratios and then averaged.}
\label{tab:recovery-comparison}
\begin{ruledtabular}
\begin{tabular}{lccccccc}
recovery & $\overline{R}_H$ & $R_H^{\mathrm{worst}}$ & $\overline{R}_F$ & $R_F^{\mathrm{worst}}$ & $\eta_{\mathrm{dec}}$ & $\eta_{\mathrm{meas}}$ & $\eta_{\mathrm{ext}}$\\
\hline
Petz & 0.3105 & 0.0981 & 0.3298 & 0.1075 & 0.6089 & 0.4712 & 0.2861\\
maximum fidelity & 0.3381 & 0.1093 & 0.3578 & 0.1195 & 0.6688 & 0.4688 & 0.3128\\
tangent weighted & 0.3567 & 0.1256 & 0.3762 & 0.1370 & 0.7127 & 0.4663 & 0.3318
\end{tabular}
\end{ruledtabular}
\end{table*}

Relative to the maximum-fidelity recovery, the tangent-weighted map improves the mean QFI retention by $5.50\%$, the GJC CFI by $5.15\%$, the decoder efficiency by $6.56\%$, and the post-erasure extraction efficiency by $6.09\%$.  Its worst-position QFI improves by $14.98\%$.  Paired encoder bootstrap gives absolute Tangent-minus-maximum-fidelity changes of \(0.01861\) \([0.01643,0.02071]\), \(0.01843\) \([0.01617,0.02064]\), \(0.01637\) \([0.01259,0.02033]\), and \(0.01747\) \([0.01342,0.02169]\) for mean QFI, mean GJC CFI, worst-position QFI, and worst-position GJC CFI, respectively.  These are pointwise 95\% percentile intervals from 5000 joint resamples of the 10 encoders, without multiplicity correction.  The selected value $\lambda=128$ lies at the edge of the scan, and the final tested doubling changes the training selection score by only $0.206\%$.  The scan is locally flat at its tested boundary, but it does not exclude a better value outside the grid.  Complete positivity, trace preservation, charge covariance, QFI data processing, and the Choi-matrix convention were checked numerically for every reported channel.

\subsection{GJC phase design and compatibility}

For ideal reference visibility and equal allocation between the conventional phases $0$ and $\pi/2$, Eq.~\eqref{eq:gjc-fisher} gives at $g_0$
\begin{equation}
F_{0,\pi/2}^{\mathrm{GJC}}
=
\begin{pmatrix}
0.274725 & 0\\
0 & 0.297619
\end{pmatrix}.
\label{eq:numerical-gjc-f0}
\end{equation}
The matrix in Eq.~\eqref{eq:numerical-gjc-f0} has worst-direction efficiency
$0.4314$ relative to Eq.~\eqref{eq:numerical-ssr-qfi0}.  Equal allocation
among $0,\pi/4,\pi/2,3\pi/4$ raises the no-loss efficiency to $0.4642$.
The discrete oracle selects, from 16 uniformly spaced candidate phases on
$[0,\pi)$, the phase nearest to $\arg g$ modulo $\pi$ and combines it equally
with the orthogonal grid setting; this 16-point phase-grid oracle reaches
$0.4922$.  It is not a continuous true-phase-alignment oracle.  A robust
fixed-design semidefinite program trained over visibility phase selects
exactly the uniform four-phase allocation.

For any fixed phase schedule \(\{(\delta_m,w_m)\}\), all quoted Fisher
matrices are normalized per total attempted input mode:
\begin{equation}
F_{\mathrm{mix}}
=
\sum_m w_m F_{\delta_m},
\qquad
w_m\geq0,
\qquad
\sum_m w_m=1.
\label{eq:phase-budget-normalization}
\end{equation}
The uniform four-phase receiver therefore already contains \(w_m=1/4\).
Moving from two to four settings repartitions one fixed trial budget; it does
not multiply the trial count or introduce another factor of \(1/4\) in the
rate envelope.  Platform-specific switching and phase-stabilization latency
is not included.

With the tangent-weighted decoder frozen, validation on eight independent encoders, all $k=2$ erasure positions, and 25 complex visibilities with radii $0.2$, $0.5$, and $0.8$ gives the results in Table~\ref{tab:phase-design}.  The four-phase design changes the mean asymptotic measurement efficiency only from $0.4754$ to $0.4761$, but raises the mean within-encoder fifth percentile from $0.3813$ to $0.4295$ and the mean within-encoder minimum over the tested position--visibility grid from $0.3222$ to $0.4100$.  Paired encoder bootstrap gives four-phase-minus-two-phase changes of \(0.000632\) \([0.000559,0.000700]\), \(0.04815\) \([0.04015,0.05579]\), and \(0.08786\) \([0.07975,0.09613]\) for the mean, q05, and sampled minimum, respectively.  These are pointwise 95\% percentile intervals from 5000 joint resamples of the eight encoders.  This is a robustness improvement rather than a large mean-information gain: the mean absolute CFI retention rises by only $0.16\%$, while its mean within-encoder fifth percentile changes from $0.05585$ to $0.05504$ with paired difference \(-0.000813\) \([-0.001432,-0.000147]\).  Because the schedule is fixed before the visibility orientation and erasure pattern are known, the q05 and sampled-minimum efficiencies diagnose phase-agnostic receiver mismatch on the validation grid.  The result does not establish an improvement in absolute low-tail CFI, finite-sample estimator risk, or robustness to atmospheric phase fluctuations.

\begin{table}[t]
\caption{Local GJC phase-design efficiency on the finite validation grid.
The $g_0$ column is the
no-loss value at Eq.~\eqref{eq:numerical-g0}; the remaining columns summarize
the validation grid.  The mean, q05, and minimum are computed within each
encoder's position--visibility grid and then averaged over the eight
encoders.  The 16-point grid-oracle row selects the nearest
candidate phase to the true visibility phase modulo $\pi$ and its orthogonal
grid setting.  It is a discrete phase-informed upper benchmark rather than a
continuous oracle or an implementable fixed receiver.}
\label{tab:phase-design}
\setlength{\tabcolsep}{3pt}
\begin{ruledtabular}
\begin{tabular}{lcccc}
design & $g_0$ & mean & q05 & min.\\
\hline
fixed $0,\pi/2$ & 0.4314 & 0.4754 & 0.3813 & 0.3222\\
uniform four & 0.4642 & 0.4761 & 0.4295 & 0.4100\\
16-point grid oracle & 0.4922 & 0.4989 & 0.4950 & 0.4849
\end{tabular}
\end{ruledtabular}
\end{table}

Multiparameter compatibility also changes under approximate recovery.  We quantify the weak SLD commutator by
\begin{equation}
\mathcal{C}_{RI}
=
\frac{\left\lvert
\operatorname{Tr}\!\left[\rho(L_RL_I-L_IL_R)\right]/(2i)
\right\rvert}
{\sqrt{H_{RR}H_{II}}}.
\label{eq:weak-commutator}
\end{equation}
The diagnostic in Eq.~\eqref{eq:weak-commutator} is numerically zero
($<4\times10^{-17}$) at the no-loss work point.  Over 800 decoded validation
instances it has mean $0.0710$, median $0.0564$, 95th percentile $0.1933$,
and maximum $0.2962$.  Thus QFI retention after a generic approximate decoder
need not be jointly attainable by one two-parameter measurement, reinforcing
the need to report the actual GJC CFI\@.

Finite-sample adaptive optimization is not included in the certified claim
set.  Estimator bias, physical-boundary effects, pilot cost, feedback latency,
and realistic capture and detection rates have not yet been assessed under a
single frozen protocol.  The fixed uniform four-phase receiver is therefore
used as a nonadaptive numerical default; phase switching, allocation, and
control costs are not modeled.

\subsection{All-pattern normalized retention and the locality bottleneck}
\label{sec:all-pattern-rate}

We next retain every flagged event and evaluate Eq.~\eqref{eq:all-pattern-cfi},
rather than fixing $k$ or conditioning on successful recovery.  Each node has
32 erasure sets, giving 1024 two-node events.  All encoded designs are
evaluated under one common protocol: the same five prescribed visibility
points, a uniform four-phase receiver with readout-mixing probability
$0.005$, the same procedure for constructing an exact (uncompressed and
uncompiled) pattern-dependent local recovery channel, and the same i.i.d.
Bernoulli erasure weights with perfect flags.  These are model branch
probabilities, not a complete physical event distribution.  Here ``exact'' distinguishes the
abstract numerical recovery map from its later circuit approximation; it
does not assert perfect correction of every erasure.  The local decoder and
receiver are fixed independently of $g$.  Table~\ref{tab:all-pattern-rate}
reports the CFI generalized worst-direction ratio at the worst of the five
visibility points, averaged at the encoder-pair level.  The equal-budget candidate comparison contains 30 trajectory clusters.  Within each cluster, the
depth-$2$ encoder at each node is the exact two-layer prefix of the
corresponding depth-$10$ trajectory, so the \(L2-L10\) contrast is paired.
The 30 charge-Haar target pairs were intended to use a separate fixed
stream, but a later audit found that the trajectory-node-\(B\) and
charge-Haar-node-\(A\) integer seeds coincide within each sample.  Different
generators do not establish statistical independence, so the nominal
independent-ensemble Haar intervals are withdrawn.  The corrected
common-gate-set analysis resamples the full \(L=2/L=7/\)charge-Haar
sample-index cluster jointly.  The statistical unit is one complete two-node encoder pair: the
1024 erasure events, five visibility points, and sampled \(p\) values within
a pair are not independent replicates.  The symmetric Dicke construction is
one deterministic benchmark and receives no ensemble confidence interval.

The equal-budget candidate-comparison random streams are also disjoint
from those of the previously frozen 30-pair \(L=2\) implementation batch used
for gate-set compilation, Fig.~\ref{fig:conditional-window}, and
Table~\ref{tab:hardware-window}.  The equal-budget comparison uses exact,
uncompressed, uncompiled pattern-dependent recovery.  It supports retaining \(L=2\), but
does not retrospectively replace the implementation ensemble or match the
physical preparation costs of charge-Haar and Dicke.

\begin{table*}[t]
\caption{Equal-statistical-budget random-ensemble all-pattern screen, with a
deterministic Dicke benchmark, of the normalized four-phase GJC-CFI
retention at the worst of
five prescribed visibility points.
Every flagged event, including zero-information events, is weighted by its
model probability under i.i.d.\ erasure and perfect flags; the readout-mixing
probability is $0.005$, and encoded
rows use the same exact-recovery construction without compression or
compilation.  The bare row is the analytic value $(1-p)^2$.
The \(L=2\) and \(L=10\) rows are paired prefixes and full trajectories from
30 clusters; charge-Haar uses 30 frozen target pairs.  The table reports
means only.  Because the seed audit does not support independence of the
trajectory and charge-Haar samples, corrected inference is reported in
Table~\ref{tab:abstract-encoder-resource}.  Symmetric Dicke is one
deterministic benchmark, not an ensemble estimate.}
\label{tab:all-pattern-rate}
\begin{ruledtabular}
\begin{tabular}{lcccccc}
design & $N_{\rm pair}$ & $p=0.1$ & $p=0.2$ & $p=0.3$ & $p=0.4$ & $p=0.5$\\
\hline
bare storage & --- & 0.8100 & 0.6400 & 0.4900 & 0.3600 & 0.2500\\
brickwork $L=2$ & 30 & 0.8552 & 0.7070 & 0.5614 & 0.4240 & 0.3001\\
brickwork $L=10$ & 30 & 0.8561 & 0.6809 & 0.4996 & 0.3344 & 0.2004\\
charge-Haar & 30 & 0.8624 & 0.6894 & 0.5069 & 0.3387 & 0.2021\\
symmetric Dicke & 1 & 0.7302 & 0.5441 & 0.4188 & 0.3337 & 0.2717\\
\end{tabular}
\end{ruledtabular}
\end{table*}

The primary work point \(p=0.30\) was fixed before the equal-budget screen.  Its
\(L2-L10\) point estimate is \(0.061747\), and its \(L2-\)charge-Haar target
point estimate is \(0.054533\).  The previously reported simultaneous
intervals treated the latter contrast as independent and are not used here
after the seed audit.  The corrected, common-gate-set inference is
given next.  Descriptively, the table's point-estimate winner is charge-Haar
at \(p=0.05,0.10\) and \(L=2\) at
\(p=0.15,0.20,\ldots,0.50\); this does not establish uniform optimality on
the grid or over a continuous interval.  For the \(L=2\) means, the
break-even ratios \(\gamma_{\rm crit}=(1-p)^2/R_C\) are
\(0.947,0.905,0.873,0.849,\) and \(0.833\) at the five table columns.  These
conditional ratios omit the memory-only \(1/n\) throughput penalty and do not
resolve recovery or platform resources.

\subsection{Common abstract gate-set encoder online-gate comparison}
\label{sec:abstract-encoder-resource}

The follow-up compares encoder preparation in one common abstract gate model.
The online-gate ledger includes circuit depth, two-qubit gate count,
storage qubits, and synthesis work ancillas, but excludes offline compiler
search and the physical precision or calibration cost of continuous gate
parameters.  Every node is a five-site linear chain, begins with
\[
V_{\rm in}=(|01100\rangle,|11100\rangle),
\]
and uses no synthesis work ancilla beyond the two fixed ancillary
excitations already present in the storage block.  A circuit layer contains
the two disjoint nearest-neighbor gates on bonds \((0,1),(2,3)\) or
\((1,2),(3,4)\), alternating between layers.  Each gate has the five-parameter
number-conserving form
\begin{equation}
G(\boldsymbol\theta)
=
1\oplus U_1(\theta,\chi,\zeta,\eta)
\oplus e^{i\alpha_2},
\qquad U_1\in U(2).
\label{eq:encoder-native-u1-gate}
\end{equation}
The numerical compiler preserves the relative phase of the two logical
columns; only one common global phase may be removed.  For target and circuit
isometries \(V_{\rm tar},V_{\rm circ}\in\mathbb C^{32\times2}\), acceptance
requires
\begin{equation}
\begin{aligned}
\epsilon_{\rm iso}
&=
1-\frac14
\left|
\operatorname{Tr}\!\left(V_{\rm tar}^{\dagger}V_{\rm circ}\right)
\right|^2
\leq10^{-10},
\\
\delta_V
&=
\min_{\phi}
\frac{\|V_{\rm circ}-e^{i\phi}V_{\rm tar}\|_F}{\sqrt2}
\leq10^{-5}.
\end{aligned}
\label{eq:encoder-isometry-acceptance}
\end{equation}
Only the two columns relevant to the logical input are synthesized.
Particle-number-preserving state-preparation constructions and
linear-nearest-neighbor Dicke-state circuits are known~\cite{Arrazola2022,Bartschi2019}, but their controlled-excitation or
standard-gate primitives differ from Eq.~\eqref{eq:encoder-native-u1-gate}.
Those references therefore do not determine the costs reported here.

\begin{table*}[t]
\caption{Encoder-only online resources in the common abstract nearest-neighbor
\(U(1)\) gate set.  All designs use five storage qubits per node, two fixed
ancillary excitations, and no additional synthesis work ancilla.
\(R_C(0.30)\) is the mean worst-work-point all-pattern GJC-CFI retention.
The compiled charge-Haar and Dicke targets satisfy the stated numerical
tolerances within depth \(7\); these synthesis costs are verified upper
bounds, not proven minima.  Recovery, offline compiler search,
continuous-control precision, and calibrated-platform resources are
excluded.}
\label{tab:abstract-encoder-resource}
\begin{ruledtabular}
\begin{tabular}{lcccccc}
design & \(N_{\rm pair}\) & \(D/\)node & gates/node &
gates/two nodes & within \(D\leq7\) & \(R_C(0.30)\)\\
\hline
brickwork \(L=2\) & 30 & 2 & 4 & 8 & yes & 0.561392\\
brickwork \(L=7\) & 30 & 7 & 14 & 28 & yes & 0.509252\\
charge-Haar compiled \(D=7\) & 30 & 7 & 14 & 28 & yes & 0.506859\\
Dicke compiled \(D=7\) & 1 & 7 & 14 & 28 & yes & 0.418808\\
brickwork \(L=10\) & 30 & 10 & 20 & 40 & no & 0.499645\\
\end{tabular}
\end{ruledtabular}
\end{table*}

The charge-Haar rows compile the 60 single-node targets from the 30 frozen
pairs; the single Dicke circuit is reused at both nodes.  All 61 synthesis
targets pass Eq.~\eqref{eq:encoder-isometry-acceptance}.  The maximum
\(\epsilon_{\rm iso}\) is \(8.19\times10^{-12}\), and the maximum
\(\delta_V\) is \(2.86\times10^{-6}\).  Recomputing every actual codeword
for the newly evaluated controls means 30 depth-\(7\) brickwork pairs, 30
compiled charge-Haar pairs, and one compiled Dicke pair, or 122 local
codewords in total.  Their 31 nonempty branches require
\(122\times31=3782\) local recovery SDPs.  The frozen depth-\(2\)
recovery records are reused and excluded from this count.  The maximum
absolute difference between the frozen-target and compiled-circuit
worst-work-point \(R_C\) values over the complete sample--\(p\) grid is
\(1.78\times10^{-6}\), below the prespecified \(10^{-5}\) tolerance.

The primary family contains the five contrasts \(L2-L7\),
\(L2-\mathrm{cH7}\), \(L2-\mathrm{D7}\), \(L7-\mathrm{cH7}\), and
\(L7-\mathrm{D7}\) at \(p=0.30\).  A 10,000-draw max-\(t\) cluster bootstrap
uses one joint sample-index resample for \(L=2,L=7,\) and charge-Haar because
of the disclosed seed overlap; Dicke is fixed.  The first three estimates
and simultaneous 95\% intervals, conditional on the fixed Dicke comparator,
are
\begin{align}
\Delta_{2,7}
&=0.052140\ [0.040052,0.064228],\nonumber\\
\Delta_{2,\mathrm{cH7}}
&=0.054533\ [0.038824,0.070243],\nonumber\\
\Delta_{2,\mathrm{D7}}
&=0.142584\ [0.134519,0.150649].
\label{eq:encoder-resource-primary-contrasts}
\end{align}
For the Dicke contrast, only the stochastic \(L=2\) ensemble is resampled;
the single deterministic benchmark is held fixed.
The \(L7-\mathrm{cH7}\) interval includes zero, so the calculation does not
rank those two depth-\(7\) ensembles.  The positive \(L=2\) intervals and
lower encoder cost establish only a sampled Pareto statement at the primary
work point for the realized circuits.  They do not prove global Pareto
optimality, minimal synthesis depth, a uniform interval in \(p\), or a
platform-complete advantage.

The gap between information present before recovery and information extracted by the product decoder is not primarily an SSR penalty in the finite cases tested.  In 12 prespecified cases spanning depths $2,5,10,15$ and median, 95th-percentile, and maximum decoder gaps, allowing a joint two-node recovery while retaining local $U(1)$ covariance closes $72.6\%$ of the product-decoder gap on average and at least $49.8\%$.  On the four maximum-gap cases, relaxing further to global $U(1)$ gives an additional normalized gain $0.0365$, while removing SSR entirely adds only $0.0023$.  This decomposition identifies product locality as the dominant observed decoder constraint, but it is not a uniform theorem and the main protocol continues to use local product recovery.

\subsection{Abstract recovery circuits and a conditional sensitivity envelope}
\label{sec:native-hardware}

For each of 30 separately generated $n=5$, $L=2$ encoder pairs in the frozen
implementation batch, every local recovery Choi matrix is first restricted to
the subspace reachable after the
corresponding flagged erasure.  For circuit construction, the reachable
channel is truncated at fixed retained spectral mass $0.999$ and then
renormalized to a CPTP map.  The numerical output retains three distinct
layers---the uncompressed exact SDP recovery, this fixed compressed target,
and the reconstructed gate-set circuit---so that compression and compilation effects
are not conflated.  A legacy sweep over alternative compression settings also
produced percentage rank reductions, Givens-count proxies, and a CFI
perturbation estimate; those sweep summaries are outside the present
reproducibility certification and are not used below.  The fixed compressed
support-pruned Stinespring isometry is
compiled into a linear nearest-neighbor circuit of number-conserving gates
\begin{equation}
U^{(2)}=1\oplus U_1\oplus e^{i\alpha_2},\qquad U_1\in U(2),
\label{eq:native-u1-gate}
\end{equation}
with virtual single-qubit $Z$ phases when the reachable circuit has only one
qubit.  Let $J_{\rm target}$ be the fixed compressed target and
\(J_{\rm circ}\) the reconstructed circuit channel.  We certify the
normalized Choi Frobenius residual
\begin{equation}
\delta_J=
\frac{\lVert J_{\rm circ}-J_{\rm target}\rVert_F}
{\max\{1,\lVert J_{\rm target}\rVert_F\}}.
\label{eq:normalized-choi-residual}
\end{equation}
An analytic-gradient optimizer alternates with charge-blocked Procrustes
alignment of the Stinespring environment gauge.  A branch is accepted if its
reconstructed circuit Choi matrix is CPTP and either (i) the optimizer
terminates normally with $\delta_J\leq10^{-4}$, or (ii) a false termination
flag is accompanied by the stricter $\delta_J\leq10^{-7}$.  The second route
is recorded separately and never used to accept a branch near the main
threshold.

\begin{table*}[t]
\caption{Compilation into the abstract number-conserving gate set over 30
separately generated \(L=2\) encoder pairs.  Branches include both nodes.
Gate counts
refer to two-qubit gates per local branch; virtual \(Z\) phases are not
counted and are not assigned time or error.  Environment initialization and
discard, flag acquisition, classical control, work-ancilla occupancy, and
reset are outside this gate-count ledger.  ``Residual'' counts the stricter
posterior checks associated with a false optimizer termination flag.}
\label{tab:native-compilation}
\setlength{\tabcolsep}{3pt}
\begin{ruledtabular}
\begin{tabular}{cccccccc}
$k$ & branches & pass & residual & mean $D$ & max $D$ & mean gates & max gates\\
\hline
1 & 300 & 100\% & 0  & 5.727 & 8 & 8.590 & 12\\
2 & 600 & 100\% & 0  & 4.253 & 8 & 4.253 & 8\\
3 & 600 & 100\% & 3  & 2.360 & 6 & 1.417 & 6\\
4 & 300 & 100\% & 15 & 1.613 & 6 & 1.613 & 6\\
5 & 60  & 100\% & 0  & 2.000 & 2 & 1.000 & 1
\end{tabular}
\end{ruledtabular}
\end{table*}

All 1860 nonempty branches pass.  Eighteen use the stricter residual
criterion; their maximum $\delta_J$ is $4.26\times10^{-8}$, their maximum
trace-preservation Frobenius residual is $4.90\times10^{-16}$, and the most
negative Choi eigenvalue is $-1.52\times10^{-16}$.  The maximum $\delta_J$
over the full ensemble is $9.90\times10^{-5}$ and belongs to a normally
terminated branch.
The Fisher results report separately the gate-set-compiled-to-compressed
retention, the compressed-to-exact effect, and the
gate-set-compiled-to-exact total retention; agreement with the compressed circuit
target is not used as a substitute for agreement with the uncompressed
recovery.  Taking the minimum over the ten tested erasure probabilities of
the 30-pair mean within-pair worst-visibility ratio gives, respectively,
$0.999999987$, $1.000000004$, and $0.999999995$.  The corresponding minima
before averaging over encoder pairs and visibility work points are
$0.999999780$, $1.000000000$, and $0.999999790$.  The small values above
unity are numerical residuals at solver and channel-reconstruction
precision, not physical amplification of Fisher information.

To expose the remaining model assumptions, we use a transparent
\(U(1)\)-covariant synthetic output-coherence attenuation envelope.  After
ideal compiled recovery, the off-diagonal elements of a local branch
containing \(m_E\) gates in the abstract compiled set are multiplied by
\begin{equation}
\eta_E=(1-2e)^{m_E}.
\label{eq:hardware-dephasing-envelope}
\end{equation}
Here \(e\) is an output-coherence attenuation index.  No physical phase-flip
or other microscopic channel is inserted after each gate, and \(e\) is not a
calibrated per-gate error probability.
The same synthetic \(0.5\%\) uniform outcome-mixing rule
\(p_o\mapsto0.995p_o+0.005/K\), for \(K\) modeled outcomes, is applied to
encoded and bare protocols.  It is not a calibrated detector confusion
matrix and does not equalize detector number, dead time, or bandwidth.  The
two nodes are assumed to operate in parallel, so assigning a gate-layer time
\(\tau_2=rT_{\rm bare}\) gives
\begin{equation}
\begin{aligned}
\gamma(p,r)
&=
\left\{
1+r\!\left[
D_{\rm enc}
+\mathbb E_p\max(D_A,D_B)
\right]
\right\}^{-1},
\\[-2pt]
D_{\rm enc}&=2,
\end{aligned}
\label{eq:hardware-rate-envelope}
\end{equation}
Here \(D_A\) and \(D_B\) are the compiled decoder-circuit depths for the
realized local erasure branches, and \(\mathbb{E}_p\) averages independent
Bernoulli erasures at the two nodes.
This serial-depth model omits reference preparation and distribution, reset,
flag acquisition, feed-forward, retrieval, and detection latency.  The model
uses one common synthetic attenuation index and a multiplicative output
factor; the resulting coherence factors depend on the branch through
\(m_E\), which counts only abstract recovery gates.  Encoder
and idle noise, amplitude damping, leakage, crosstalk, correlated faults, and
false erasure flags are not included.  The omitted critical-path and noise
fields are enumerated in Appendix~\ref{sec:platform-contract}.
The conditional normalized excess retention with this partial timing factor is
\begin{equation}
G(p,e,r)=\gamma(p,r)R_C(p,e)-(1-p)^2.
\label{eq:hardware-net-gain}
\end{equation}
This is a sensitivity envelope, not a calibrated platform model or a rigorous worst-case noise bound.

\begin{table*}[t]
\caption{Simultaneous conditional-retention comparison on the \(p=0.3\)
slice of the frozen synthetic grid, using the partial timing proxy and the
worst of five visibility points for each of the same 30 complete
encoder-pair/compilation realizations.  \(e\) is the output-coherence
attenuation index defined in the text, not a calibrated per-gate channel
probability.  \(r_{\rm crit}\) is the zero crossing of the plug-in envelope
formed from mean inputs; it is not a confidence limit, and a dash means that
the plug-in gain is already nonpositive at \(r=0\).  \(G_{0.005}\) is the
pairwise mean gain at \(r=0.005\), and the final column gives its nominal
finite-sample approximate simultaneous 95\% interval from the two-sided
studentized cluster-bootstrap max-\(t\) family over all 360 conditional-gain
cells frozen for the post hoc reanalysis.}
\label{tab:hardware-window}
\begin{ruledtabular}
\begin{tabular}{cccccc}
synthetic \(e\) & \(R_C\) & \(\gamma_{\rm crit}\) & \(r_{\rm crit}\) &
\(G_{0.005}\) & simultaneous 95\% interval\\
\hline
\(0\) & 0.558045 & 0.878065 & 0.01957 & +0.048930 & $[0.037936,\,0.059923]$\\
\(10^{-3}\) & 0.532205 & 0.920697 & 0.01214 & +0.023976 & $[0.013095,\,0.034857]$\\
\(3\times10^{-3}\) & 0.485664 & 1.008927 & --- & -0.020969 & $[-0.031607,\,-0.010331]$\\
\(10^{-2}\) & 0.362879 & 1.350312 & --- & -0.139546 & $[-0.149110,\,-0.129982]$\\
\end{tabular}
\end{ruledtabular}
\end{table*}

Table~\ref{tab:hardware-window} and Fig.~\ref{fig:conditional-window} identify
191\ of the 360 conditional, per-captured-mode cells
with positive simultaneous 95\% lower endpoints.  On the
\(e=10^{-3},r=0.005\) slice, the generated simultaneous-positive and
not-simultaneous-positive evaluated \(p\)-sets are
\(\{0.20,\,0.25,\,0.30,\,0.35,\,0.40,\,0.45,\,0.50\}\) and
\(\{0.05,\,0.10,\,0.15\}\).  On the
\(e=3\times10^{-3},r=0.005\) slice, the corresponding sets are
\(\varnothing\) and
\(\{0.05,\,0.10,\,0.15,\,0.20,\,0.25,\,0.30,\,0.35,\,0.40,\,0.45,\,0.50\}\).  At \(p=0.3\),
19\ of the 36 sampled \((e,r)\) cells have
positive simultaneous lower endpoints.  The nominal finite-sample
approximate max-\(t\) intervals apply to the 360-cell post hoc family and
provide no coverage between evaluated points.
The comparison assumes that the four additional memories in an encoded block
would not otherwise store independent astronomical modes.  Under the more
conservative memory-only \(1/n\) throughput bound
\(R_{\rm enc}/R_0\leq1/n=0.2\), this frozen \(L=2\) implementation batch does
not beat five parallel bare memories at any evaluated point
\(p=0.05,0.10,\ldots,0.50\), even when the bare strategy is granted
sufficient reference and detector throughput.
This is not a fixed-total-resource comparison.  Consequently, a practical advantage
requires either spare parallel memory, another capture-rate bottleneck, or a
code with nonzero asymptotic rate.  The simplified channel in
Eq.~\eqref{eq:hardware-dephasing-envelope} therefore defines a statistically
evaluated sensitivity map, not a calibrated experimental prediction.

\section{Platform-total-resource requirements and present scope}
\label{sec:platform-contract}

This section specifies the information required before a platform-total
comparison may be attempted; it is not itself a performance model or a
ranking of existing experiments.  The operational
boundary begins with an incident spatiotemporal mode and ends with the complete
classical record.  It therefore includes capture and storage, encoding and
idle periods, erasure-flag acquisition, branch selection, feed-forward and
recovery, reference generation and distribution, retrieval and readout,
detector dead time, reset, failed attempts, and occupied quantum and classical
resources.

\paragraph*{Common endpoint and comparator.}
For each strategy, \(Y_\Pi(0,T)\) in
Eq.~\eqref{eq:platform-information-rate} must retain successful, failed,
rejected, flagged, no-click, and timeout outcomes.  Capture, reference, flag,
and recovery failures are charged to the incident-mode and wall-clock
denominators rather than removed by conditioning.  Encoded and bare strategies
must share the incident-mode definition, observation window, collection
boundary, failed-branch rule, and total capacity vector \(\mathcal C\); the
comparator is the best allowed bare strategy under that same contract.

\paragraph*{Required inputs and provenance.}
The required inputs fall into three dependency classes: external platform or
calibration records, local protocol definitions, and quantities derived only
after the platform snapshot and native compilation are frozen.  Every
external record requires a typed value and unit, calibration time,
uncertainty, and traceable source.  The protocol definition must be fixed
before evaluation, and each derived quantity must identify its parent records
and compilation.  A simulated value or a qualitative literature description
cannot substitute for a measured or certified platform input.

\paragraph*{Eligibility for platform evaluation.}
A software model authorizes reproduction only of the stated synthetic
calculation; it is not experimental evidence or a hardware-feasibility
result.  A hardware implementation becomes eligible for a platform-total
calculation only after the target encoded block, calibrated channel and timing
records, local protocol, branch schedules, and native compilation are specified
under one consistent snapshot.

\paragraph*{Current scope.}
The networked SiV experiment supplies relevant signal-storage, erasure, and
nonlocal-heralding ingredients~\cite{Stas2026}, while the cold-atom experiment
stores the replaceable ancillary Fock-state reference used for GJC
interference~\cite{Wang2026}.  Because these demonstrations implement
different memory roles and neither is the calibrated \(n=5\),
\(U(1)\)-encoded signal-memory architecture analyzed here, they are cited as
experimental context rather than entered into a platform-total comparison.
No hardware platform is selected or ranked.

\paragraph*{Schedule, channel, and uncertainty closure.}
For a future hardware record, every encoded and bare branch must compile to an
acyclic native schedule with calibrated durations, explicit idle intervals,
exclusive and capacity resources, and parallel-operation and crosstalk
contexts.  Schedule-ordered capture, reference, encoder, memory, recovery,
readout, and detector maps must form a physical instrument: conditional
branches are completely positive and trace nonincreasing, and their sum is
trace preserving.  Calibration uncertainty, experimental sampling error,
Monte Carlo error, and design-ensemble uncertainty must be propagated and
reported separately.

\paragraph*{Interpretation.}
The synthetic calculation reproduces only the stated model.  It does not
evaluate the wall-clock fixed-resource endpoint
\(\mathcal J_\Pi(g;\mathcal C)\) in
Eq.~\eqref{eq:platform-information-rate}, establish a platform prediction, or
show hardware advantage.  A future implementation would require a separate,
prespecified platform-total analysis after the inputs above are frozen;
meeting those requirements would make the analysis well defined but would not
by itself establish advantage.  In the channel proofs, \(R\) denotes a
mathematical purifying reference, not the physical GJC phase-reference
subsystem \(a_Aa_B\), and therefore carries no hardware-resource entry.

\section{Detailed limitations and no-go statements}
\label{sec:limitations}

\subsection{Pre-storage loss}

Let $\mathcal{L}_{\eta_c}$ describe signal loss before the state reaches the encoded memory.  Since the encoder acts only after $\mathcal{L}_{\eta_c}$,
\begin{equation}
H[\mathcal{V}\circ\mathcal{L}_{\eta_c}(\rho_s(g))]
=H[\mathcal{L}_{\eta_c}(\rho_s(g))]
\leq H[\rho_s(g)].
\end{equation}
No subsequent parameter-independent unitary can restore QFI carried away before capture without access to the lost environment.  The factor $\eta_c$ in Eq.~\eqref{eq:stored-qfi} is therefore irreducible within the present architecture.

\subsection{Loss of an entire telescope node}

Local codes protect erasures inside each node.  If an entire node and all of its encoded block are lost, then one side of the cross-aperture coherence is traced out and the visibility cannot be reconstructed from the other node alone.  Protecting node failure would require nonlocal redundancy across additional telescope nodes, which is a different network-coding problem.

\subsection{Unflagged amplitude damping and dephasing}

The unrestricted exact-code bound in Eq.~\eqref{eq:exact-half-limit} and the
finite-size crossovers reported above concern known erasure locations.  For unflagged
amplitude damping, the relevant error set includes both no-jump and jump
operators; for dephasing it includes phase errors that directly suppress $g$.
Exact protection then requires a code correcting the complete channel, not
merely an erasure decoder.  Scrambling alone is not a recovery operation and
does not establish such correction.

\subsection{QFI without a decoder}

Scrambling may place parameter information in high-order correlations of the retained block.  The standard GJC receiver probes a very restricted family of local optical observables.  Without a decoder or an explicitly constructed logical GJC POVM, high retained QFI does not imply high observed CFI\@.  The exact-recovery benchmark in Corollary~\ref{cor:cfi} therefore depends crucially on recovery before readout.

\subsection{Asymptotic versus finite-sample readout}

The Fisher matrices in Eqs.~\eqref{eq:gjc-fisher} and~\eqref{eq:approx-gjc-fi} are local asymptotic quantities.  A phase design that improves their generalized eigenvalues need not improve a biased, constrained estimator at finite sample size.  Any future claim of adaptive advantage must therefore include estimator bias, physical-boundary effects, feedback latency, the cost of the pilot stage, and a training--validation protocol fixed before the final Monte Carlo run.

\section{Open problems beyond the finite-size flagged-erasure setting}
\label{sec:proof-programme}

Beyond the finite-size flagged-erasure model, the following
architecture-specific questions remain open.  They concern the additional
requirements of local finite-depth dynamics, explicit decoding, operational
GJC readout, and resource-accounted implementation; they are not a claim
that covariant approximate QEC itself lacks an existing theory.  Each item
requires either a theorem with explicit scaling and error bounds or a
platform-calibrated validation under a stated resource model.

\begin{enumerate}
  \item[\textbf{O1.}] \textbf{Physical signal-storage channel and imperfect
  erasure information.}  Construct a platform-specific CPTP instrument for
  capture, coherent storage, erasure heralding, and retrieval of the weak
  astronomical state.  The model must determine $\eta_c$ and $\lambda_c$ and
  either calibrate the flux and intensity imbalance
  \((\epsilon,\kappa)\) or estimate them jointly with \(g\) using the full
  outcome distribution or an additional photometric measurement.  It must
  include amplitude damping, phase noise, leakage, crosstalk, multiphoton
  corrections, and false-positive and false-negative erasure flags.  It must
  also separate irrecoverable pre-storage loss from errors acting on the
  encoded memory block.

  \item[\textbf{O2.}] \textbf{Scalable local recovery and platform
  implementation.}  For a scalable family of covariant encoders and flagged
  erasure sets, construct parameter-independent local decoders with explicit
  bounds on ancilla size, gate count, circuit depth, and classical
  feed-forward cost.  The deviation between the intended and compiled logical
  channels must then be bounded under calibrated gate, leakage, and readout
  errors.

  \item[\textbf{O3.}] \textbf{Finite-depth local covariant coding beyond the
  charge-Haar benchmark.}  Known symmetry-preserving Haar constructions
  already give asymptotically vanishing erasure error in specified
  fixed-logical-size regimes~\cite{Kong2022}.  For the particular local
  circuit and independent-loss ensemble used here, prove either the
  physically weighted pattern-average target in
  Eq.~\eqref{eq:finite-depth-average-question} or the high-probability
  worst-pattern target in Eq.~\eqref{eq:finite-depth-uniform-question}, with
  explicit finite-size constants and dependence on circuit depth, charge
  sector, geometry, and erased fraction.  The standard
  information--disturbance conversion must then be made quantitative for the
  resulting complementary-channel error and an explicit efficient decoder.
  Corollary~\ref{cor:gjc-fi-continuity} converts the final logical diamond
  error into a GJC-CFI error on compact visibility regions.
  Lemma~\ref{lem:causal-depth} supplies only a necessary lower bound on depth,
  not achievability.
  Any higher-rate extension must define its logical charge range and compare
  its accuracy with the continuous-symmetry lower bounds~\cite{Faist2020}, rather than assume vanishing full-state error at fixed
  positive rate.

  \item[\textbf{O4.}] \textbf{Resource-accounted local synthesis for the
  interferometric architecture.}  Develop encoder and decoder families with
  an explicit logical-rate--accuracy tradeoff that obey the photon-number
  superselection rule and admit scalable number-conserving circuit synthesis.
  The resource analysis must include preparation, distribution, phase
  stability, loss, and possible consumption of the finite reference state
  rather than treating reference-assisted accessibility as cost free.

  \item[\textbf{O5.}] \textbf{Finite-sample, resource-normalized system
  advantage.}  Existing SiV signal-storage and cold-atom ancillary-reference
  demonstrations establish complementary ingredients but instantiate
  different architectures~\cite{Stas2026,Wang2026}.  The present model-only
  calculation therefore does not evaluate \(\mathcal J_\Pi\) for either
  platform.

  The open task is to populate and evaluate the contract for one named
  end-to-end platform and a frozen calibration snapshot.  The analysis must
  combine calibrated channel and implementation errors with estimator bias,
  physical-boundary effects, pilot-stage cost, feedback latency, and a
  training--validation protocol fixed before the final simulation or
  experiment.  A system-level advantage would require a simultaneously valid
  positive lower bound against the best bare strategy at fixed incident
  modes, wall-clock time, physical memories, reference supply, classical
  capacity, and detector resources.
\end{enumerate}

These problems have a natural dependency order.  Problem O1 fixes the
physical channel and rate model.  Problems O3--O4 ask whether known
covariant-code principles can meet this architecture's local-depth, decoder,
and SSR-reference requirements, while O2 connects logical recovery to
hardware.  The synthetic model can establish computational consistency only; it cannot
establish a hardware result.  A separate, prospectively specified evaluation
of \(\mathcal J_\Pi\) for O5 becomes meaningful only after one end-to-end
platform and calibration snapshot satisfy the requirements above.  Satisfying
those requirements is necessary for evaluation, not evidence of advantage.

\section{Detailed derivation of the source QFI matrix}
\label{app:qfi}

For
\begin{equation}
\rho=\frac12(\mathbb{I}+\boldsymbol{r}\cdot\boldsymbol{\sigma}),
\end{equation}
the SLD $L_\mu$ solving
\begin{equation}
\partial_\mu\rho
=\frac12(L_\mu\rho+\rho L_\mu)
\end{equation}
may be written as
\begin{equation}
L_\mu=a_\mu\mathbb{I}+\boldsymbol{b}_\mu\cdot\boldsymbol{\sigma}.
\end{equation}
Using
\begin{equation}
(\boldsymbol{a}\cdot\boldsymbol{\sigma})(\boldsymbol{b}\cdot\boldsymbol{\sigma})
=(\boldsymbol{a}\cdot\boldsymbol{b})\mathbb{I}
+i(\boldsymbol{a}\times\boldsymbol{b})\cdot\boldsymbol{\sigma},
\end{equation}
comparison of the identity and Pauli components yields
\begin{equation}
a_\mu=-\frac{\boldsymbol{r}\cdot\partial_\mu\boldsymbol{r}}{1-\left\lvert \boldsymbol{r}\right\rvert{}^2},
\qquad
\boldsymbol{b}_\mu
=
\partial_\mu\boldsymbol{r}
+\frac{\boldsymbol{r}\cdot\partial_\mu\boldsymbol{r}}{1-\left\lvert \boldsymbol{r}\right\rvert{}^2}\boldsymbol{r}.
\end{equation}
The QFI matrix
\begin{equation}
H_{\mu\nu}=\frac12\operatorname{Tr}[\rho(L_\mu L_\nu+L_\nu L_\mu)]
\end{equation}
then becomes Eq.~\eqref{eq:qubit-qfi-general}.

For $\boldsymbol{r}=(g_R,-g_I,0)$,
\begin{equation}
\partial_{g_R}\boldsymbol{r}=(1,0,0),
\qquad
\partial_{g_I}\boldsymbol{r}=(0,-1,0),
\end{equation}
and
\begin{equation}
\boldsymbol{r}\cdot\partial_{g_R}\boldsymbol{r}=g_R,
\qquad
\boldsymbol{r}\cdot\partial_{g_I}\boldsymbol{r}=g_I.
\end{equation}
Therefore
\begin{align}
H_{RR}&=1+\frac{g_R^2}{1-\left\lvert g\right\rvert{}^2}
=\frac{1-g_I^2}{1-\left\lvert g\right\rvert{}^2},\\
H_{II}&=1+\frac{g_I^2}{1-\left\lvert g\right\rvert{}^2}
=\frac{1-g_R^2}{1-\left\lvert g\right\rvert{}^2},\\
H_{RI}&=\frac{g_Rg_I}{1-\left\lvert g\right\rvert{}^2},
\end{align}
which proves Eq.~\eqref{eq:qfi-single}.

For the unequal-intensity state in Eq.~\eqref{eq:unbalanced-source}, let
\(s_\kappa=\sqrt{\kappa(1-\kappa)}\).  Its Bloch vector is
\begin{equation}
\begin{aligned}
\boldsymbol r_\kappa
&=
\left(2s_\kappa g_R,-2s_\kappa g_I,2\kappa-1\right)^{\mathsf T},
\\
1-\lVert\boldsymbol r_\kappa\rVert^2
&=4s_\kappa^2(1-|g|^2).
\end{aligned}
\label{eq:unbalanced-bloch-vector}
\end{equation}
Substitution into Eq.~\eqref{eq:qubit-qfi-general} gives
\begin{align}
H_{\kappa\kappa}^{(1)}
&=\frac{1}{\kappa(1-\kappa)},
\\
H_{\kappa g_R}^{(1)}
&=H_{\kappa g_I}^{(1)}=0,
\\
H_{gg}^{(1)}
&=4\kappa(1-\kappa)H^{(1)}(g).
\end{align}
which proves Eq.~\eqref{eq:unbalanced-qfi-block}.  The vanishing cross block
is a property of the normalized one-photon family, not an assumption that
the intensity is known.

For the block-diagonal state
\begin{equation}
\rho=(1-\epsilon)\rho_0\oplus\epsilon\rho_1(g),
\end{equation}
with parameter-independent weights and $g$-independent $\rho_0$, the SLD is block diagonal and only the second block contributes.  Hence
\begin{equation}
H[\rho]=\epsilon H[\rho_1],
\end{equation}
which proves Eq.~\eqref{eq:qfi-weak-source} to the retained order.
If \(\epsilon\) is itself estimated, the two parameter-independent block
weights contribute the binary-distribution information
\(H_{\epsilon\epsilon}=1/[\epsilon(1-\epsilon)]\), while
\(H_{\epsilon\alpha}=0\) for
\(\alpha\in\{\kappa,g_R,g_I\}\), proving
Eq.~\eqref{eq:flux-qfi-block}.

\section{Detailed derivation of the GJC probabilities}
\label{app:gjc}

Within the sector containing one astronomical photon and one ancillary photon, the only terms that give one local photon at each node are
\begin{equation}
\left\lvert X\right\rangle{}=\left\lvert s_Aa_B\right\rangle{},
\qquad
\left\lvert Y\right\rangle{}=\left\lvert a_As_B\right\rangle{}.
\end{equation}
For the unequal-intensity signal state of
Eq.~\eqref{eq:unbalanced-source}, with \(g\) replaced by the stored
visibility \(g_c\), the product state has matrix elements
\begin{equation}
\begin{aligned}
\rho_{XX}&=\kappa q_B,
&
\rho_{YY}&=(1-\kappa)q_A,
\\
\rho_{XY}&=\sqrt{\kappa(1-\kappa)}\,g_c c^*.
\end{aligned}
\label{eq:general-gjc-xy-elements}
\end{equation}
The joint local-click state $\left\lvert r\right\rangle{}_A\left\lvert s\right\rangle{}_B$ satisfies
\begin{equation}
\left\langle r,s\middle\vert X\right\rangle{}=\frac{s}{2},
\qquad
\left\langle r,s\middle\vert Y\right\rangle{}=\frac{r}{2}.
\end{equation}
Hence, conditioned on the one-photon sectors,
\begin{align}
q_{rs}^{(1)}
&:=\left\langle r,s\right\rvert{}
\rho_{s,\kappa}^{(1)}(g_c)\otimes\rho_a^{(1)}
\left\lvert r,s\right\rangle{}
\nonumber\\
&=
\frac14\left(\rho_{XX}+\rho_{YY}\right)
+\frac{rs}{2}\operatorname{Re}(\rho_{XY})
\nonumber\\
&=
\frac14\left(A_\kappa+rsB_{\kappa,\delta}\right),
\label{eq:general-gjc-conditional-probability}
\end{align}
where Eqs.~\eqref{eq:mu-resource}, \eqref{eq:storage-jacobian}, and
\eqref{eq:unbalanced-gjc-ab} were used in the last line.
Multiplication by the captured astronomical one-photon probability
$\epsilon\eta_c$ and the ancillary availability $\eta_a$ proves
Eqs.~\eqref{eq:unbalanced-gjc-probability}--%
\eqref{eq:unbalanced-gjc-success}.  In the balanced ideal-storage case
\(\kappa=q_A=q_B=1/2\) and \(g_c=g\), this becomes
\[
q_{rs}^{(1)}
=
\frac18\left[1+rs\operatorname{Re}(g\mu^*)\right],
\]
which proves Eq.~\eqref{eq:gjc-probability}.

For parity \(z=rs\), there are two detector pairs per value of \(z\).
In the balanced case,
\begin{equation}
p_z=\frac{\epsilon\eta_c\eta_a}{4}
\left[1+z\nu_a\boldsymbol{u}_\delta^{\mathsf T}\boldsymbol{g}\right].
\end{equation}
The failure probability is parameter independent to this order.  Differentiating and summing
\begin{equation}
[F_\delta]_{\mu\nu}
=\sum_{z=\pm1}
\frac{(\partial_\mu p_z)(\partial_\nu p_z)}{p_z}
\end{equation}
gives Eq.~\eqref{eq:gjc-fisher}.  More generally,
\begin{equation}
\partial_{\boldsymbol g}B_{\kappa,\delta}
=\sqrt{\kappa(1-\kappa)}\,\nu_aJ_c^{\mathsf T}
\boldsymbol u_\delta.
\end{equation}
Summing the four successful outcomes gives
Eq.~\eqref{eq:unbalanced-gjc-fisher}.  Including the failure outcome before
forming the complete Fisher matrix yields the nuisance-parameter Schur
complement in Eq.~\eqref{eq:gjc-nuisance-schur}.

\section{Operator-basis proof of local recovery}
\label{app:operator-proof}

Equation~\eqref{eq:exact-local-recovery} need only be verified on the four matrix units
\begin{equation}
\left\lvert 0\right\rangle{}\!\left\langle 0\right\rvert{},\quad\left\lvert 0\right\rangle{}\!\left\langle 1\right\rvert{},\quad\left\lvert 1\right\rangle{}\!\left\langle 0\right\rvert{},\quad\left\lvert 1\right\rangle{}\!\left\langle 1\right\rvert{}.
\end{equation}
The diagonal terms guarantee recovery of local occupation probabilities, whereas the off-diagonal terms guarantee recovery of phase coherence with an arbitrary external reference.  In particular, preservation of populations alone is insufficient for astronomy: a code that maps both off-diagonal matrix units to zero would preserve the local intensity but destroy the complex visibility.

The cross-aperture operator factorizes as
\begin{equation}
\left\lvert 10\right\rangle{}\!\left\langle 01\right\rvert{}_{AB}
=\left\lvert 1\right\rangle{}\!\left\langle 0\right\rvert{}_A\otimes\left\lvert 0\right\rangle{}\!\left\langle 1\right\rvert{}_B.
\end{equation}
Local recovery of both off-diagonal matrix units therefore restores the global astronomical coherence without requiring a nonlocal decoding gate.

\end{document}